\providecommand{\U}[1]{\protect\rule{.1in}{.1in}}
\newtheorem{theorem}{Theorem}
\newtheorem{acknowledgement}[theorem]{Acknowledgement}
\newtheorem{corollary}[theorem]{Corollary}
\newtheorem{definition}[theorem]{Definition}
\newtheorem{example}[theorem]{Example}
\newtheorem{lemma}[theorem]{Lemma}
\newtheorem{proposition}[theorem]{Proposition}
\newtheorem{remark}[theorem]{Remark}
\newenvironment{proof}[1][Proof]{\noindent\textbf{#1.} }{\ \rule{0.5em}{0.5em}}
\begin{document}

\title{ Numerical stability of generalized entropies}
\author{Gy\"{o}rgy Steinbrecher\\University of Craiova, A. I. Cuza 13, 200585 Craiova, Romania.\\Email: gyorgy.steinbrecher@gmail.com
\and Giorgio Sonnino\\Department of Theoretical Physics and Mathematics, Universit\'{e}\\Libre de Bruxelles (ULB), Campus Plain CP 231, Boulevard de \\Triomphe, 1050 Brussels, Belgium \& \\Royal Military School (RMS)\\Av. de la Renaissance 30, 1000 Brussels - Belgium.\\E-mail: gsonnino@ulb.ac.be}
\maketitle

\begin{abstract}
In many applications, the probability density function is subject to
experimental errors. In this work the continuos dependence of a class of
generalized entropies on the experimental errors is studied. This class
includes the C. Shannon, C. Tsallis, A. R\'{e}nyi and generalized R\'{e}nyi
entropies. By using the connection between R\'{e}nyi or Tsallis entropies, and
the \textit{distance} in a the Lebesgue functional spaces, we introduce a
further extensive generalizations of the R\'{e}nyi entropy. In this work we
suppose that the experimental error is measured by some generalized $L^{p}$
distance. In line with the methodology normally used for treating the so
called \textit{ill-posed problems}, auxiliary stabilizing conditions are
determined, such that small - in the sense of $L^{p}$ metric - experimental
errors provoke small variations of the classical and generalized entropies.
These stabilizing conditions are formulated in terms of $L^{p}$ metric in a
class of generalized $L^{p}$ spaces of functions. Shannon's entropy requires,
however, more restrictive stabilizing conditions.

PACS: 05.70.Ln, 02.50.Cw, 02.30.Cj, 07.05.Kf, 02.30.Zz

\end{abstract}

\section{\bigskip Introduction}

The generalizations of the Boltzmann-Shannon entropy (BSE) \cite{shannon}, the
R\'{e}nyi entropy (RE) \cite{Renyi1}-\cite{Renyi4} and the Tsallis entropy
(TE) \cite{Tsallis1}-\cite{TsallisGelMann} were introduced, independently, by
an axiomatic respectively physical approach. These entropies are algebraically
related to the same norm, or \textit{distance functional}, in suitable $L^{p}$
spaces \cite{SonninoSteinbrGRE}. There are many applications of this class of
generalized entropies \cite{TsallisBook}, \cite{TsallisGelMann},
\cite{Klimontovich}-\cite{RenyiDivergence}. The problem of mathematical
naturalness of the generalized entropies, in the sense of category theory, is
treated in \cite{SGySASGCategory}. Since in the generic cases, the classical
and generalized entropies, are functionals defined in infinite-dimensional
$L^{p}$ spaces, where $p$ is exactly the parameter in the RE and TE (see
\cite{SonninoSteinbrGRE} and references therein), the first problem is related
to the correct domain of definition, the finiteness of the defining
functionals. Here, we shall use the general, Lebesgue $L^{p}$ space formalism
\cite{SonninoSteinbrGRE}, \cite{SGySASGCategory}.

In applications the probability density function (PDF) is known only
approximately, so the second, practical, problem concerns the numerical
stability of the generalized and the classical entropies when the PDF is known
only approximately. This stability problem is the main subject of the our
study. This problem was studied also in relation to the convergence in Central
Limit Theorem in \cite{CLTapproximation}. Since the PDF is always integrable,
one of the natural measure of the approximation of PDF is the $L^{1}$
\textit{distance} between the exact and approximate PDFs. This
\textit{distance} is always finite. In the case of discrete probability
distribution, the numerical stability is studied in \cite{Lesche}, where
numerical stability is formulated in terms of $l^{1}$ distance for the
input-data, and of \emph{relative error of the entropy} for the output-data.
We are pushed to extend the concepts of \textit{finiteness} and
\textit{numerical stability} by using the general formalism of the probability
theory in general measure spaces \cite{Rudin}-\cite{EStein}, that include the
$l^{1}$ space. Generally, when the only informations is related to the $L^{1}$
norm, we have no control on the other $L^{p}$ distances, both for RE and TE.
Both TE and RE (in the sequel, referred to as RTE) may be expressed in terms
of $L^{p}$-distances of the PDFs. However in general, despite the $L^{1}%
$-convergence of the approximate PDF sequence, different $L^{p}$-convergences
(and topologies) are not equivalent \cite{Rudin}-\cite{EStein}, in
contradistinction to finite dimensional case where all of the $l^{p}$ norms
are equivalent. It is clear that in the generic infinite-dimensional case, the
continuity of entropies in the case of approximation of PDFs cannot be deduced
without evoking extra stabilizing conditions, conditions that make the problem
similar to finite dimensional case: in the subset defined by stabilization
conditions, the convergence say in $L^{1}$ norm implies convergence in a
family of $L^{p}$ norm (see below). In our work we follow the methods adopted
for treating the so called \textit{ill-posed problem} (in the sense of J.
Hadamard). This method is used in the field of strong interactions in physics
not only for establishing numerical methods \cite{CiulliNumerical1},
\cite{CiulliNumerical2}, but also for formulating the problem in more rigorous
terms \cite{CiulliStabProblems}-\cite{NenciuHamonicMeasure}. Likely, this
method can be reformulated appropriately to tackle and solve our problem. In
this work we shall firstly find auxiliary \textit{stabilizing conditions}
e.g., by imposing that RTE is finite for a particular value of the parameter
$p$. Secondly, we shall prove that, by imposing these stabilizing conditions,
for a large range of the parameter $p$, the \textit{absolute error} for the
RTE is controlled by the $L^{1}$ error in the input-PDFs. The reason to use
the absolute error instead of the relative error is discussed in
\cite{JizbaArimitsu}. This choice is also justified in a counter example shown
and discussed in the Subsection \ref{markerSubsectCounterxDiscrete}. Indeed,
it is possible to have very small relative errors in TE even when TE diverges.
By using the absolute error as a measure of numerical stability, it follows
that the quantities that are related by continuos functions, like RTE, as well
as the corresponding $L^{p}$ distances, must have similar numerical stability
properties. Note that special attention should be addressed to the limit case
of BSE: the conditions that stabilize RTE are not sufficient to stabilize BSE,
more stabilizing conditions are necessary and in the stability proof the
powerful functional analytic methods used previously in high-energy physics
are involved \cite{CiulliStabProblems}-\cite{NenciuHamonicMeasure}.

The classical definitions of the quantity of information, or of the degree of
randomness, at least in the case of discrete (atomic) phase space, tacitly
assume the invariance of the quantity of information, or entropy, under
measure preserving groups that in classical simple cases are the permutations
\cite{shannon}-\cite{RenyiWT}, \cite{Tsallis1} -\cite{TsallisGelMann}.
\emph{This symmetry assumption applies to the case of more abstract
definitions i.e. the previous definitions of the entropy are invariant under
the group of measure preserving transformations. This reflects our complete
lack of knowledge on the possible dynamical behavior \cite{maszczyk}. Larger
symmetry group means less information.} Nevertheless, for real physical
systems, for instance when the phase space is a direct product of subspaces
having completely different physical interpretation and different mathematical
structures (e.g., the case when the total system splits into \textit{driving}
and \textit{driven} system \cite{SonninoSteinbrGRE}), we have already an
information. In such a situation, the general measure preserving
transformations has no physical interpretation and should not be considered as
a \textit{true}, we mean \textit{physical}, symmetry \cite{SGySASGCategory}.
Consequently, we need a new generalization of the entropy able to take into
account this direct product structure. The first step in this direction can be
found in a previous work cited in Ref. \cite{SonninoSteinbrGRE} where a new
class of extensive generalization of R\'{e}nyi's entropy is introduced to
analyze the case of a phase space, which may be split as a direct product of
two sub-spaces (note that the approach illustrated in this work applies
equally to the case where the phase-space may be split in an arbitrary, but
finite, number of direct products of sub-spaces, like in data with tensor structures).

In this work our aim is to study the stability properties of this class of
entropies. As in the previous work \cite{SonninoSteinbrGRE}, our guiding line
is to re-interpret the generalized entropy in terms of distance in some
generalization of classical \cite{Rudin}-\cite{LuschgiPages} or anisotropic
$L^{p}$ spaces \cite{Besov}.

The main results is the possibility to stabilize RTE or GRE for a large
domains of the defining parameters by imposing the finiteness of RTE or GRE
for some special values of parameter. An interesting result is that for the
stabilization of BSE we need more stabilizing conditions, the BSE is more
sensitive compared to general RTE.

The properties of $L^{p}$ spaces, and associated "distances", will be used for
setting the numerical stability properties when auxiliary \textit{stabilizing
conditions} are imposed. In this work the stabilizing conditions are expressed
as \textit{finiteness} of some classical or generalized (anisotropic
\cite{Besov}) Lebesgue space distances. In some particular cases, these
stabilizing conditions may be found by studying directly the partial
differential equations for the PDFs. This allows obtaining bounds on the
Sobolev space norm and, by using the embedding theorems, getting information
on the $L^{p}$ norms \cite{Besov}, \cite{ESteinSingularIntegrals}. Another
possibility is the study of the \textit{heavy tail index \cite{RLASATO},}
\cite{sgw}, \cite{sgbw}.

The main results is the possibility to stabilize RTE or GRE for a large
domains of the defining parameters by imposing the finiteness of RTE or GRE
for some special values of parameter. An interesting result is that for the
stabilization of BSE we need more stabilizing conditions, the BSE is more
sensitive compared to general RTE.

The work is organized as follows. In Section
\ref{markerSectionStabltyClassicalEntropies},
Subsection~\ref{markersubsectDefRenyiTsallisPseudonorm}, we relate the RTE to
a norm or pseudo-norm in a suitable defined $L^{p}$ spaces. In the
Subsection~\ref{markerSubsectDomainConvClassicalNentr}, the domain of
definition and the problem of continuity of the Shannon, R\'{e}nyi and Tsallis
entropies are studied, when the PDF is approximated in the sense of distance
in $L^{p}$ spaces, with particular emphasis on the natural (from probabilistic
point of view) $L^{1}$ distance. In particular, we discuss about different
concepts of stability (\textit{continuity}, in mathematical terms) of the
entropies when the PDF is known only approximately. We find
\textit{sufficient} stabilizing conditions for generic RTE. It is also proven
(by counter examples) that in the case of BSE, when a sequence of PDF
$\rho_{n}$ converges in the $L^{1}$ norm to the PDF$\rho$, extra stabilizing
conditions are necessary to ensure the continuity of BSE. We determine the
stability of BSE under suitable stabilizing conditions. We emphasize the
central role of the functional $Z_{q}[\rho]$ (see below,
\cite{SGySASGCategory}) that appears in the both definition of RTE and BSE.
The special analytic properties of this functional as well as its relation to
the distances in $L^{p}$ spaces are used in the proofs.

In Section~\ref{markerSectionGRE}, we introduce a further, natural, extension
of the Generalized R\'{e}nyi Entropy (GRE), studied in the previous work
ref.(\cite{SonninoSteinbrGRE}, \cite{SGYSGstructure}, \cite{SGySASGCategory}),
its general properties are studied and some physical applications are
presented. In the Section \ref{markerSectionGREboundStabiltiy} the finiteness
and continuity results from Section
\ref{markerSectionStabltyClassicalEntropies} are extended to the most general
case of GRE assuming suitable stabilizing conditions.

We emphasize that the stabilizing conditions are expressed in the terms of
measure theoretic concepts. Hence, the results can equally be applied to both
continuos and discrete probability distributions, on a general measured space.
No continuity or differentiable properties of the PDF are assumed.

Mathematical details can be found in the Appendix~\ref{markerSectionAppndx}.
In Subsection\ref{markerSubsectExtrapolTheorem}, the powerful analytic
extrapolation method is exposed. This method is necessary to prove the
stability of BSE. In Subsection~\ref{markerSubsectCalculSumeIntegrale} and In
Subsection \ref{markerSubsectionPropertiesOfFunctionalDpmf} we found details
on the proofs of counter examples and we expose the properties of a class of
metric vector spaces, related to the definition and properties of GRE,
respectively. In Subsection~\ref{markerAppndxLogConvexity}, the logarithmic
convexity of a class of functionals related to RTE and GRE is presented. These
functionals are used in the main proofs of the bound and the numerical
stability of the RTE and GRE. For easy reference, since our proof will be
generalized to the study of the GRE, some of the known logarithmic convexity
properties of the classical $L^{p}$\ norms \cite{EStein} are demonstrated. The
long proofs of stability of the classical Shannon-Boltzmann entropy and of GRE
are also presented in detail.

\section{Finiteness and continuity aspects of the Shannon, R\'{e}nyi and
Tsallis entropies\label{markerSectionStabltyClassicalEntropies}}

\subsection{Definitions of generalized entropies in term of norms and pseudo
norms \label{markersubsectDefRenyiTsallisPseudonorm}}

Let's consider a standard measure space $(\Omega,\mathcal{A},m)$ where
$\Omega$ is the phase space of a natural system (usually $\mathbf{R}^{n}$, or
in simplest probability models, some finite or denumerable set, or the space
of all Brownian trajectories), $\mathcal{A}$ is some $\sigma-$algebra
generated by a family-subsets of $\Omega$ (that are usually viewed as the
collection of the observable events in a probability space), and \thinspace$m$
is a $\sigma$-finite measure defined on $\mathcal{A}$. In this framework the
R\'{e}nyi divergence is expressed by R\'{e}nyi entropy \cite{SGySASGCategory}.

In the simplest cases of discrete (atomic) probability spaces, the measure
(not necessary finite) $m$ is a counting measure. Note that in this simple
case the counting measure is invariant under the group of permutations on
$\Omega$. In general, the measure $m$ is chosen to be invariant under some
symmetry group of the physical system under study, but there are simple
examples for measures with trivial measure preserving group
\cite{SGySASGCategory}. This formalism allows to consider mixture the discrete
and continuos distributions in many variables, and to express the R\'{e}nyi
divergence by RE \cite{SGySASGCategory}. In the sequel, we consider only
probability measures defined on $(\Omega,\mathcal{A})$ that are continuos with
respect to some fixed measure $m$, that means that for $A\in\mathcal{A}$ the
probability $p(A)$ can be represented by probability density function (PDF)
$\rho$ as follows
\begin{align}
p(A) &  =\int\limits_{A}\rho(x)dm(x);~A\subset\Omega\label{LL1}\\
\rho(x) &  \geq0;~\int\limits_{\Omega}\rho(x)dm(x)=1\label{LL1.1}%
\end{align}
or in shorthand notation $dp(x)=\rho(x)dm(x)$. In this framework, the
classical BSE is given by
\begin{equation}
S_{cl}[\rho]=-\int\limits_{\Omega}\log\left[  \rho(x)\right]  \rho
(x)dm(x)\label{LL2}%
\end{equation}
The next subsequent generalizations, R\'{e}nyi's \cite{Renyi1}-\cite{Renyi4}
and Tsallis' \cite{Tsallis1}-\cite{TsallisGelMann}, \cite{Tsallis2} entropies
can be reformulated in the terms of distances \cite{SonninoSteinbrGRE} in the
$L^{p}(\Omega,dm)$ spaces of the distribution functions as follows. They
correspond to the norms $\left\Vert \rho\right\Vert _{p}$ \cite{ReedSimon} for
$p\geq1$ and the pseudo-norm $N_{p}[\rho]$ \cite{Rudin}, \cite{LuschgiPages},
\cite{SonninoSteinbrGRE} for $0<p<1$, respectively:
\begin{align}
\left\Vert \rho\right\Vert _{p,m} &  =\left[  {\int\limits_{\Omega}}\left[
\rho(x)\right]  ^{p}dm(x)\right]  ^{\frac{1}{p}};~p\geq1\label{LL3}\\
N_{p,m}[\rho] &  ={\int\limits_{\Omega}}\left[  \rho(x)\right]  ^{p}%
dm(x);~0<p\leq1\label{LL4}%
\end{align}
Details can be found in Ref.~\cite{SonninoSteinbrGRE}, Section III. In order
to have a unified treatment for both cases $p\lessgtr1$, we introduce the
following basic \textit{condensed} notations:
\begin{align}
i(p) &  :=1/p~~for~p\geq1;~i(p):=1~for~p\leq1\label{LL4.1}\\
D_{p,m}[f] &  :=\left[  {\int\limits_{\Omega}}\left\vert f(x)\right\vert
^{p}dm(x)\right]  ^{i(p)}\label{LL4.2}\\
Z_{w,m}[f] &  :={\int\limits_{\Omega}}\left\vert f(x)\right\vert
^{w}dm(x);~\operatorname{Re}(w)>0\label{LL4.3}%
\end{align}

\begin{remark}
\label{MarkerRemarkN_p_Definition} The function $D_{p,m}[f]$ is a generalized
distance from the "point" $f$ (in the function space) to the origin. Note that
particular cases of $D_{p,m}[f]$ are $\left\Vert f\right\Vert _{p,m}$
($p\geq1)$ and $N_{p,m}[\rho]$ (for $0<p\leq1$) defined in Eqs.(\ref{LL3},
\ref{LL4}). The functional $f\rightarrow$ $D_{p,m}[f]$ has following basic
metric properties:
\begin{align}
D_{p,m}[f_{1}+f_{2}]  &  \leq D_{p,m}[f_{1}]+D_{p,m}[f_{2}]\label{LL4.4}\\
\left\vert D_{p,m}[f_{1}]-D_{p,m}[f_{2}]\right\vert  &  \leq D_{p,m}%
[f_{1}-f_{2}]\label{LL4.41}\\
D_{p,m}[\alpha f]  &  =\left\vert \alpha\right\vert ^{s(p)}\ D_{p,m}%
[f]~;~\alpha\in\mathbb{R}\label{LL4.5}\\
s(p)  &  =pi(p) \label{LL4.6}%
\end{align}
For $p\geq1$ the functional $D_{p,m}[f]=\left\Vert f\right\Vert _{p,m}$ (i.e.,
the classical $L^{p}$ norm), and for $0<p\leq1$ the functional $D_{p,m}%
[f]=N_{p,m}[f]$ (i.e., the "exotic" $L^{p}$ pseudo norm from
Refs.(\cite{Rudin}), \cite{LuschgiPages}). The "precision of approximation
$\rho_{n}$ of the true PDF $\rho$ " is quantified by $D_{p.m}[\rho-\rho_{n}]$
for some $p>0$. A possible natural choice is $p=1$, because $D_{1,m}[\rho
-\rho_{n}]=\int_{\Omega}\left\vert \rho-\rho_{n}\right\vert dm$ is at least
well defined due to Eq.(\ref{LL1.1}).
\end{remark}

The corresponding generalized entropies proposed by R\'{e}nyi \cite{Renyi1}%
-\cite{Renyi4} $S_{R,q,m}$ and by C. Tsallis \cite{Tsallis1}%
-\cite{TsallisGelMann} $S_{T,q,m}$, are given respectively by
\begin{align}
S_{R,q,m}[\rho]  &  =\frac{1}{1-q}\log Z_{q,m}[\rho]\label{LL5}\\
S_{R,q,m}[\rho]  &  =\frac{1}{(1-q)i(q)}\log D_{q,m}[\rho]\label{LL6}\\
S_{T,q,m}[\rho]  &  =\frac{1}{1-q}\left[  1-Z_{q,m}[\rho]\right]  =\frac
{1}{1-q}\left\{  1-\left[  D_{q,m}[\rho]\right]  ^{1/i(q)}\right\}
\label{LL7}%
\end{align}
From Eqs.(\ref{LL2}, \ref{LL4.3}) results%
\begin{equation}
S_{cl}[\rho]=-\frac{d}{dw}Z_{w,m}(\rho)|_{w=1} \label{LL71.}%
\end{equation}
when it exists. The functionals $S_{T,q,m}[\rho]$, $S_{T,q,m}[\rho]$,
$D_{q,m}[\rho]$ and $Z_{q,m}[\rho]$ are algebraically related so they contain
exactly the same amount of information. Functional $Z_{q,m}[\rho]$ is the best
candidate for including the entropy related functionals in a powerful
formalism of category theory \cite{SGySASGCategory}. This functional possesses
also the remarkable analytic and log-convex properties (see below). In the
following proof, we shall use successively the representation of the entropies
by functionals $Z_{q,m}[\rho]$ and the distance function $D_{q,m}[\rho]$. From
the previous definitions we get the following result:

\begin{remark}
\label{markerREMconvergenceEquivallence}Let $\ \rho_{n},\rho\in L^{p}%
(\Omega,dm)\cap L^{1}(\Omega,dm)$, $p\neq1$ with ${\int\limits_{\Omega}}%
\rho_{n}(x)dm(x)={\int\limits_{\Omega}}\rho(x)dm(x)=1$. \ The following
convergencies are equivalent:
\begin{align*}
&  D_{q,m}[\rho_{n}]\underset{n\rightarrow\infty}{\rightarrow}D_{q,m}[\rho]\\
&  Z_{q,m}[\rho_{n}]\underset{n\rightarrow\infty}{\rightarrow}Z_{q,m}[\rho]\\
&  S_{T,q,m}[\rho_{n}]\underset{n\rightarrow\infty}{\rightarrow}S_{T,q,m}%
[\rho]
\end{align*}
If in addition $D_{q,m}[\rho]>0$ then the previous convergences are equivalent
to the statement $S_{R,q,m}[\rho_{n}]\underset{n\rightarrow\infty}%
{\rightarrow}S_{R,q,m}[\rho]$.
\end{remark}

\begin{remark}
\label{markREM_counting_measure} Note that when $\Omega$ is a finite or
denumerable set, if we denote by $p_{k}$ the probabilities of element
$x_{k}\in$ $\Omega$, with the measure $m$ denoting the counting measure (equal
to number of elements in a subset) on the space $\Omega$, then from the
previous Eqs.(\ref{LL3}-\ref{LL6}) we get the original definitions
Refs.~\cite{Renyi1}-\cite{Renyi4}, \cite{Tsallis1}-\cite{TsallisGelMann}
\begin{align}
S_{R,q}[\rho]  &  =\frac{1}{1-q}\log\sum\limits_{k}p_{k}^{q}\label{LL8.1}\\
S_{T,q}[\rho]  &  =\frac{1}{1-q}\left[  \sum\limits_{k}p_{k}^{q}-1\right]
\end{align}

\end{remark}

\subsection{Domain of definition (finiteness) and continuity of the Tsallis,
R\'{e}nyi and Shannon entropies. \label{markerSubsectDomainConvClassicalNentr}%
}

\subsubsection{Generic case: Tsallis and R\'{e}nyi entropies}

\paragraph{Domain of finiteness of RTE's.}

In the study of the stability (or continuity) properties we consider the
general case, when the measure $m(\Omega)$ is not necessary finite and its
support does not reduce, necessarily, to a denumerable set. By simple
inspection of the Eqs.(\ref{LL4.3}, \ref{LL5}-\ref{LL7}) we conclude that the
R\'{e}nyi and Tsallis entropy (RTE) with same $q$ as well as the functionals
$D_{q,m}$ and $Z_{q,m}$, are finite or infinite, simultaneously. The problem
related to the domain of definition and continuity of the generalized
entropies, when some approximations of the probability density are used, is
reduced to the problem of finiteness of continuity of the norm Eq.(\ref{LL3}),
pseudo norms Eq.(\ref{LL4}) or the functional $Z_{w}(\rho)$ from
Eq.(\ref{LL4.3}), respectively. Note that in the case of general measure
space, when the measure $m$ is neither atomic nor probabilistic, for instance
when $\Omega=\mathbb{R}$ and $dm(x)=dx$, when $q_{1}\neq q_{2}$ the norms
$\left\Vert \rho\right\Vert _{q_{1}}$, $\left\Vert \rho\right\Vert _{q_{2}}$
as well as the pseudo norms $N_{q_{1}}[\rho]$, $N_{q_{2}}[\rho]$, are not
equivalent (the finiteness or convergence of a sequence $\rho_{n}$ in the norm
$\left\Vert \rho_{n}\right\Vert _{q_{1}}$ is not related to finiteness or
convergence in the norm $\left\Vert \rho_{n}\right\Vert _{q_{2}}$
\cite{ReedSimon}, \cite{Rudin} and there is no general inequality relating
them). Consider the following example.

\begin{example}
Let $\Omega=\mathbb{R}$, $dm(x)=dx$ and consider the PDF $\rho_{1}(x)$, such
that $\int\limits_{\mathbb{R}}\rho_{1}(x)dx=\left\Vert \rho_{1}\right\Vert
_{1}=1$ and even more $\int\limits_{\mathbb{R}}\left\vert \rho_{1}(x)\log
\rho_{1}(x)\right\vert dx<\infty$, \ $\int\limits_{\mathbb{R}}|\rho
_{1}(x)|^{r}dx<\infty$ for all $r>0$. Denote $\rho_{\lambda}(x):=\lambda
\rho_{1}(\lambda x)$. \ Despite $\left\Vert \rho_{\lambda}\right\Vert _{1}%
=1$\ observe that \ when $\lambda\rightarrow\infty$ \ for $p>1$ the RTE
diverges, for $0<p<1$ the RE diverges. In the limit $\lambda\rightarrow0$
\ for $p<1$ the RTE\ diverges and for $p>1$ the RE\ diverges. \ The BSE of
$\rho_{\lambda}(x)$ diverges both in \ limits $\lambda\rightarrow\infty$ and
$\lambda\rightarrow0$.
\end{example}

It follows that at first sight the answer to the previous questions is
tautological. But if we add the physical conditions Eqs~(\ref{LL1.1}) we can
obtain more information on the finiteness and convergence. The following lemma
comes from the well known results \cite{EStein}, when $p>1$. For easy
reference we give an elementary proof, which will be generalized to the study
of GRE (Generalized R\'{e}nyi entropies).

\begin{lemma}
\label{markerInterpolationLemma}Consider the measure space $(\Omega
,\mathcal{A},m)$ and let $f\in L^{1}(\Omega,dm)\cap L^{s}(\Omega,dm)$,
$s\neq1$, $s>0$\ with the property \bigskip%
\begin{align}
{\int\limits_{\Omega}}\left\vert f(x)\right\vert dm(x)  &  \leq a_{1}%
\label{LL8.1b}\\
{\int\limits_{\Omega}}\left\vert f(x)\right\vert ^{s}dm(x)  &  \leq a_{s}
\label{LL8.1c}%
\end{align}
Then for all $r$ in the range
\[
\min~(1,s)<r<\max\ (1,s)
\]
we have also the bound
\begin{equation}
{\int\limits_{\Omega}}\left\vert f(x)\right\vert ^{r}dm(x)\leq a_{1}%
^{\frac{s-r}{s-1}}a_{s}^{\frac{r-1}{s-1}} \label{LL8.1d}%
\end{equation}

\end{lemma}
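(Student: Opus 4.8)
The plan is to prove this interpolation bound via Hölder's inequality, which is the natural tool for controlling an intermediate $L^r$ norm by the two endpoint norms $L^1$ and $L^s$. The key observation is that for $r$ strictly between $1$ and $s$, we can write the exponent $r$ as a convex combination of the endpoints: we seek $\theta\in(0,1)$ such that $r=\theta\cdot 1+(1-\theta)\cdot s$, equivalently we split $|f|^r=|f|^{\theta}\cdot|f|^{(1-\theta)s}$ and apply Hölder to the two factors.

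More concretely, first I would introduce conjugate exponents $u,u'$ with $1/u+1/u'=1$ chosen so that the first factor lands in $L^1$ and the second in $L^s$. Writing $|f|^r=|f|^{a}|f|^{b}$ with $a+b=r$, I would apply Hölder in the form
\begin{equation}
{\int\limits_{\Omega}}|f|^{a}|f|^{b}\,dm\leq\left({\int\limits_{\Omega}}|f|^{au}\,dm\right)^{1/u}\left({\int\limits_{\Omega}}|f|^{bu'}\,dm\right)^{1/u'}.
\end{equation}
To match the hypotheses (\ref{LL8.1b}) and (\ref{LL8.1c}), I would force $au=1$ and $bu'=s$. Solving the resulting linear system $a+b=r$, $1/u+1/u'=1$ together with these two constraints gives $1/u=\frac{s-r}{s-1}$ and $1/u'=\frac{r-1}{s-1}$; one checks directly that these are the exponents appearing on the right-hand side of (\ref{LL8.1d}), and that both lie in $(0,1)$ precisely when $\min(1,s)<r<\max(1,s)$, so the Hölder exponents $u,u'$ are genuinely greater than $1$ and the inequality is valid. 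Substituting the bounds $a_1$ and $a_s$ then yields exactly $a_1^{(s-r)/(s-1)}a_s^{(r-1)/(s-1)}$.

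The main subtlety, rather than a deep obstacle, is verifying that the argument is uniform in the two cases $s>1$ and $0<s<1$, since the role of ``larger'' and ``smaller'' endpoint exponent switches; this is precisely why the hypothesis is phrased with $\min$ and $\max$. In both cases the exponents $\frac{s-r}{s-1}$ and $\frac{r-1}{s-1}$ remain positive and sum to $1$ (note $\frac{s-r}{s-1}+\frac{r-1}{s-1}=\frac{s-1}{s-1}=1$), so the same Hölder computation carries through verbatim; one only needs to check the sign of $s-1$ does not corrupt the positivity of the fractions, which it does not because $r$ lies strictly between $1$ and $s$ on either side. Finally I would remark that the finiteness of the right-hand side, guaranteed by $f\in L^1\cap L^s$, justifies the application of Hölder and hence establishes that $f\in L^r$ as a byproduct, which is the qualitative content needed for the later stability results.
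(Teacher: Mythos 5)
Your proof is correct: the exponent bookkeeping checks out ($1/u=\frac{s-r}{s-1}$, $1/u'=\frac{r-1}{s-1}$, both in $(0,1)$ exactly when $r$ lies strictly between $1$ and $s$, on either side of $1$), and the single H\"older application with $au=1$, $bu'=s$ yields precisely Eq.~(\ref{LL8.1d}). However, this is a genuinely different route from the paper's. The paper does not invoke H\"older at all: it treats the Lemma as a special case of Theorem~\ref{markerTheoremRieszThorinGen}, which in turn rests on the logarithmic convexity of the map $w\mapsto\int_{\Omega}|f(x)|^{w}dm(x)$ (Corollary~\ref{markCorllaryApendLogConvNorm}, proved in Proposition~\ref{markPropApendBasiclog_conv} by a differentiation-under-the-integral argument reducing to the nonnegativity of a variance), combined with the elementary interpolation property of log-convex functions (Proposition~\ref{markPropLogConv1varBound}). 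Your H\"older argument is the classical proof of Lyapunov's interpolation inequality; it is shorter, entirely self-contained, and avoids the smoothness assumptions implicit in the paper's second-derivative criterion for log-convexity. What the paper's heavier machinery buys is reusability: the authors state explicitly that they give this seemingly roundabout proof because the log-convexity framework is what extends to the nested, anisotropic functionals $N_{\mathbf{q},m}$ defining the GRE (Theorem~\ref{markerTheoremAppLogConvNvar} and Corollary~\ref{markerCorolaryLogconvBoundNvar}), where a direct H\"older decomposition in each nested integral would be much more awkward, and the same functional $Z_{w,m}$ viewed as an analytic function of $w$ (Corollary~\ref{markeerCorolaryApendAnalitictyinStrip}) is the key object in the stability proof for the Shannon entropy. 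So your proof is a perfectly valid, and arguably cleaner, substitute for the one-variable Lemma itself, but it does not set up the apparatus the paper needs downstream.
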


\noindent This Lemma is a particular case of the Theorem
\ref{markerTheoremRieszThorinGen}, from the Appendix
\ref{markerAppndxLogConvexity}. \noindent Before studying the continuity (also
called stability) of the generalized entropies with respect to small
variations of the input PDF, firstly we should discuss the problem of the
finiteness of the generalized entropies. The singular case of classical
entropy will be discussed later. From the previous Lemma
\ref{markerInterpolationLemma} results the following

\begin{proposition}
\label{markerPropositionFinitness} Let $p\neq1$ and $\min~(1,p)<r<\max
\ (1,p)$. For the PDF $\rho$ obeying Eq.(\ref{LL1.1}), $\rho\in L^{1}%
(\Omega,dm)\cap L^{p}(\Omega,dm)$, and ${\int\limits_{\Omega}[}\rho
(x)]^{p}dm(x)\leq a_{p}$ we have
\begin{equation}
{\int\limits_{\Omega}[}\rho(x)]^{r}dm(x)\leq a_{p}^{\frac{r-1}{p-1}}
\label{LL8.1f}%
\end{equation}
Consequently if the norm $\left\Vert \rho\right\Vert _{p}$ is finite for some
fixed index index $p>1$, then $\left\Vert \rho\right\Vert _{r}$, $S_{R,r}%
[\rho]$, $S_{T,r}[\rho]$ remain finite for all $r$ in the range $1<r\leq p$.
If for some fixed $p$, with $0<p<1$ the pseudonorm $N_{p}[\rho]$ is finite,
then for all $r$ in the range $p\leq r<1$ also $N_{r}[\rho]$, $S_{R,r}[\rho]$,
$S_{T,r}[\rho]$ remain finite.

\begin{proof}
By setting in Lemma \ref{markerInterpolationLemma} $f(x)=\rho(x)$ from the
normalization condition Eq.(\ref{LL1.1}) results $a_{1}=1$. Eq.(\ref{LL8.1f})
results directly from Eq.(\ref{LL8.1d}). The finiteness of entropies results
from Eqs.(\ref{LL4.3}, \ref{LL5}, \ref{LL7}).
\end{proof}
\end{proposition}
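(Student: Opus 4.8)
The plan is to derive everything from Lemma~\ref{markerInterpolationLemma} by the obvious substitution, and then read off the finiteness claims from the algebraic relations already established in the excerpt. First I would set $f(x) = \rho(x)$ in the Lemma. The hypotheses of the Lemma require $f \in L^1 \cap L^s$ with $s \neq 1$, $s > 0$; here I take $s = p$, and the membership $\rho \in L^1(\Omega, dm) \cap L^p(\Omega, dm)$ is exactly what is assumed in the Proposition. The two bounds Eqs.(\ref{LL8.1b},~\ref{LL8.1c}) become $\int_\Omega |\rho|\,dm \leq a_1$ and $\int_\Omega |\rho|^p\,dm \leq a_p$. The crucial simplification is that the normalization condition Eq.(\ref{LL1.1}) gives $\int_\Omega \rho\,dm = 1$, and since $\rho \geq 0$ we have $\int_\Omega |\rho|\,dm = 1$, so I may take $a_1 = 1$.

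With $a_1 = 1$ the conclusion Eq.(\ref{LL8.1d}) of the Lemma collapses: the factor $a_1^{\frac{s-r}{s-1}} = 1^{\frac{p-r}{p-1}} = 1$ drops out, leaving
\[
\int_\Omega [\rho(x)]^r \, dm(x) \leq a_p^{\frac{r-1}{p-1}},
\]
which is precisely Eq.(\ref{LL8.1f}). The range condition $\min(1,p) < r < \max(1,p)$ is inherited verbatim from the Lemma, so no new work is needed there. This disposes of the integral bound itself; there is essentially no obstacle, since the Proposition is designed as an immediate specialization of the Lemma to the probabilistically normalized case.

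It then remains to translate the integral bound into finiteness of the norms, pseudo-norms, and entropies. For $p > 1$ and $1 < r \leq p$ (the boundary $r = p$ being the hypothesis itself and $r = 1$ being trivial by normalization), the quantity $\int_\Omega [\rho]^r\,dm$ is finite by Eq.(\ref{LL8.1f}), hence $\|\rho\|_r = (\int_\Omega [\rho]^r\,dm)^{1/r}$ is finite. Symmetrically, for $0 < p < 1$ and $p \leq r < 1$ the pseudo-norm $N_r[\rho] = \int_\Omega [\rho]^r\,dm$ is directly finite. In both regimes the functional $Z_{r,m}[\rho] = \int_\Omega [\rho]^r\,dm$ of Eq.(\ref{LL4.3}) is finite, and since the R\'enyi and Tsallis entropies $S_{R,r}[\rho]$, $S_{T,r}[\rho]$ are, by Eqs.(\ref{LL5},~\ref{LL7}), explicit continuous algebraic functions of $Z_{r,m}[\rho]$ (with the prefactor $\frac{1}{1-q}$ well defined for $r \neq 1$), their finiteness follows immediately.

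The only point demanding any care — and the closest thing to an obstacle — is the endpoint and sign bookkeeping: one must note that for $p > 1$ the open range of the Lemma is $1 < r < p$, so the endpoint $r = p$ is covered separately by the hypothesis $\int_\Omega [\rho]^p\,dm \leq a_p < \infty$, and the endpoint $r = 1$ by normalization, which together justify the closed range $1 < r \leq p$ stated in the Proposition; the analogous remark applies to $p \leq r < 1$. Beyond this, the argument is purely a matter of substituting $a_1 = 1$ into an already-proven inequality and invoking the continuity of the entropy representations recalled in Remark~\ref{markerREMconvergenceEquivallence} and in Eqs.(\ref{LL5}--\ref{LL7}).
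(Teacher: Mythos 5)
Your proof is correct and follows exactly the paper's own route: substituting $f=\rho$ into Lemma~\ref{markerInterpolationLemma}, using the normalization Eq.(\ref{LL1.1}) to set $a_{1}=1$ so that Eq.(\ref{LL8.1d}) collapses to Eq.(\ref{LL8.1f}), and then reading off finiteness of the entropies from Eqs.(\ref{LL4.3}, \ref{LL5}, \ref{LL7}). Your explicit handling of the endpoints $r=p$ and $r=1$ is a useful clarification that the paper leaves implicit, but it does not change the argument.
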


When $\Omega=\mathbb{R}$ and $dm(x):=dx$, the pseudo-norm $N_{p}[\rho]$, and
the generalized entropies $S_{R,p}[\rho]$, $S_{T,p}[\rho]$ are divergent for
low values of $p<p_{0}<1$ (or, in practical estimations, they have large
fluctuations), we may argue that the PDF has a heavy tail: $\rho
(x)\underset{|x|\rightarrow\infty}{\asymp}|x|^{-1/p_{0}}$, like in models of
stochasticity-induced instability \cite{RLASATO}, \cite{sgbw}, \cite{sgw}. If
there exists some $p_{1}$ such that for $p>p_{1}>1$ the entropies, and both
norm and $S_{R,p}[\rho]$, $S_{T,p}[\rho]$ are divergent, this suggests that
the PDF has an integrable singularity of the type $|x-x_{0}|^{-1/p_{1}}$, for
instance in the models of noise driven intermittency \cite{intermittency}.
Similarly, in the case of PDF defined in higher dimensional, anisotropic space
the maximal domain of definition of generalized R\'{e}nyi entropies
\cite{SonninoSteinbrGRE} is related to more complicated singularity structure
and asymptotic behavior of the multivariate PDF.

In the continuation let us suppose to have an exact PDF $\rho\in L^{p}%
(\Omega,dm)\cap L^{1}(\Omega,dm)$, with $p\neq1$, which is approximated by an
approximant sequence $\rho_{n}\in L^{p}(\Omega,dm)\cap L^{1}(\Omega,dm)$, and
$\rho$, $\rho_{n}$ satisfy the Eq.(\ref{LL1.1}). We also suppose that the
"true" limit PDF $\rho$ exists, so $\rho_{n}$ is a Cauchy sequence, in $L^{1}$.

\begin{remark}
Our approach on the stability problem is different from Ref.\cite{Lesche},
where no stabilizing conditions are imposed. There are simple counter examples
\ref{markerSubsectCounterxDiscrete} of sequences of probability distributions
on $\mathbb{N}$ that are convergent in $\ l^{1}$, consequently bounded and
convergent in $l^{p}$ ($p>1)$ norm but the BSE diverges, despite it is Lesche
stable (see below). We mention that in the case when we restrict ourselves to
discrete distributions, from all of the convergence results based on
stabilizing conditions it follows also the Lesche stability. The counter
example from \ref{markerSubsectCounterxContinous} proves that there exist
sequences of PDF's $\rho_{n}$ that are convergent in all the spaces
$L^{p}([0,1/2],dx)$ with $0<p\leq1$, nevertheless the BSE diverges.
\end{remark}

\paragraph{Continuity of RTE's when the PDF is approximated in $L^{1}$ norm.}

Suppose that for some $p\neq1$ we have $L^{p}$ bounds%
\begin{align*}
{\int\limits_{\Omega}}\left\vert \rho_{n}(x)\right\vert ^{p}dm(x)  &  \leq
b_{1}\\
{\int\limits_{\Omega}}\left\vert \rho(x)\right\vert ^{p}dm(x)  &  \leq b_{2}%
\end{align*}

For instance, such kind of bounds could be obtained by the technique used in
the study of the heavy tail phenomena in random affine processes \cite{sgw},
\cite{sgbw}. The quality of the approximation is quantified in the $L^{1}$
norm $\left\Vert \rho_{n}-\rho\right\Vert _{L_{1}}$. By using Eq.(\ref{LL4.4})
it follows
\begin{align}
\int\limits_{\Omega}\left\vert \rho_{n}(x)-\rho(x)\right\vert ^{p}dm(x)  &
\leq b\label{LL8.37.1}\\
{\int\limits_{\Omega}}\left\vert \rho_{n}(x)-\rho(x)\right\vert dm(x)  &
\leq\varepsilon_{n}\underset{n\rightarrow\infty}{\rightarrow}0
\label{LL8.37.2}%
\end{align}
Here $b=b_{1}+b_{2}$ for $0<p<1$ and $b=\left(  b_{1}^{1/p}+b_{2}%
^{1/p}\right)  ^{p}$, see Eq.(\ref{LL4.2}). By using Eqs.(\ref{LL8.37.1},
\ref{LL8.37.2}) and the Lemma \ref{markerInterpolationLemma}, with
$a_{1}=\varepsilon_{n}$, $a_{p}=b$, with $f(x):=\rho_{n}(x)-\rho(x\mathbf{)}$,
we get for $\min~(1,p)<r<\max\ (1,p)$%

\begin{align}
{\int\limits_{\Omega}}\left\vert \rho_{n}(x)-\rho(x)\right\vert ^{r}dm(x)  &
\leq\varepsilon_{n}^{\frac{p-r}{p-1}}b^{\frac{r-1}{p-1}}\label{LL8.37.3}\\
\left[  D_{r,m}[\rho_{n}-\rho]\right]  ^{\frac{1}{i(p)}}  &  \leq
\varepsilon_{n}^{\frac{p-r}{p-1}}b^{\frac{r-1}{p-1}} \label{LL8.37.3.0}%
\end{align}
\noindent From Eqs.(\ref{LL8.37.3.0}, \ref{LL4.2}, \ref{LL4.3}, \ref{LL4.41})
we obtain
\begin{align}
\left\vert D_{r,m}[\rho_{n}]-D_{r,m}[\rho]\right\vert  &  \leq\delta
_{n}\underset{n\rightarrow\infty}{\rightarrow}0\label{LL8.37.3.1}\\
\delta_{n}  &  =\left[  \varepsilon_{n}^{\frac{p-r}{p-1}}m^{\frac{r-1}{p-1}
}\right]  ^{i(p)} \label{LL8.37.3.2}%
\end{align}
\noindent Taking into account remark \ref{markerREMconvergenceEquivallence} we
can summarize the previous results Eq.(\ref{LL8.37.3.1}) as follows

\begin{proposition}
\label{markerPropstionRenyiContinuity} Suppose that, for some $p\neq1$, we
have boundednes of $\left\vert \rho_{n}(x)-\rho(x)\right\vert $ in the $L^{p}$
norm (Eq~(\ref{LL8.37.1})) and convergence of the sequence $\rho_{n}(x)$ to
$\rho(x)$ in $L^{1}$ norm Eq.~(\ref{LL8.37.2}). Then it follows the
convergence in $L^{r}(\Omega,dm)$ distance for all values of $r$ in the range
$\min~(p,1)<r<\max~(p,1)$
\begin{equation}
D_{r,m}[\rho_{n}(x)-\rho(x)]\underset{n\rightarrow\infty}{\rightarrow}0
\label{LL8.37.3.2.3}%
\end{equation}
In particular we have for all the Tsallis and R\'{e}nyi entropies and
functionals $D_{r,m}[]$ with $r$ in this range
\begin{align}
&  S_{T,r}[\rho_{n}]\underset{n\rightarrow\infty}{\rightarrow}S_{T,r}%
[\rho]\label{LL8.37.4}\\
&  S_{R,r}[\rho_{n}]\underset{n\rightarrow\infty}{\rightarrow}S_{R,r}%
[\rho]\label{LL8.37.5}\\
&  D_{r,m}[\rho_{n}(x)]\underset{n\rightarrow\infty}{\rightarrow}D_{r,m}%
[\rho(x)]
\end{align}

\end{proposition}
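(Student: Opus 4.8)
The plan is to recognize that the chain of estimates displayed immediately before the statement already contains all the analytic content, so the proof is essentially a matter of assembling those estimates and citing the appropriate lemma and remark. First I would apply the Interpolation Lemma~\ref{markerInterpolationLemma} to the function $f=\rho_n-\rho$. The hypotheses furnish an $L^1$ bound, which I take as $a_1=\varepsilon_n$ from Eq.~(\ref{LL8.37.2}), and an $L^p$ bound, which I take as $a_p=b$ from Eq.~(\ref{LL8.37.1}); here $b$ is the combined constant produced from the individual bounds on $\rho_n$ and $\rho$ through the triangle inequality~(\ref{LL4.4}). The Lemma then yields, for every $r$ strictly between $\min(1,p)$ and $\max(1,p)$, the estimate
\[
\int\limits_{\Omega}\left\vert \rho_n(x)-\rho(x)\right\vert^{r}\,dm(x)\leq \varepsilon_n^{\frac{p-r}{p-1}}\,b^{\frac{r-1}{p-1}},
\]
which is precisely Eq.~(\ref{LL8.37.3}).

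The crucial observation---and the only place where the strict inequality on $r$ genuinely matters---is that the exponent $(p-r)/(p-1)$ is strictly positive whenever $r$ lies in the open interval $(\min(1,p),\max(1,p))$, because $p-r$ and $p-1$ carry the same sign throughout that range. Since $b$ is a fixed constant and $\varepsilon_n\rightarrow 0$, the right-hand side tends to $0$, which gives $D_{r,m}[\rho_n-\rho]\rightarrow 0$ and hence Eq.~(\ref{LL8.37.3.2.3}). This is also the subtle point to flag: at the endpoints $r=1$ and $r=p$ the exponent vanishes and the estimate collapses to the trivial bound $b$, conveying no convergence, which is exactly why those cases are excluded. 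I would present this sign check as the one step deserving care, rather than as a genuine obstacle.

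Next I would convert convergence of the difference into convergence of the individual distances by invoking the reverse triangle inequality~(\ref{LL4.41}), namely $\left\vert D_{r,m}[\rho_n]-D_{r,m}[\rho]\right\vert\leq D_{r,m}[\rho_n-\rho]\rightarrow 0$, so that $D_{r,m}[\rho_n]\rightarrow D_{r,m}[\rho]$. Finally, I would appeal to Remark~\ref{markerREMconvergenceEquivallence}, whose equivalences upgrade this single convergence of $D_{r,m}$ to the asserted statements $S_{T,r}[\rho_n]\rightarrow S_{T,r}[\rho]$ and, using that $D_{r,m}[\rho]>0$ for any genuine normalized PDF (so that the logarithm appearing in the R\'{e}nyi entropy is continuous at the limit), also $S_{R,r}[\rho_n]\rightarrow S_{R,r}[\rho]$. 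In summary, the entire analytic difficulty has been pushed into the Interpolation Lemma, which I am free to assume; the work of this proof reduces to verifying the positivity of the interpolation exponent and quoting the equivalence remark, so I expect no real obstacle beyond that bookkeeping.
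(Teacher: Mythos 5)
Your proposal is correct and follows essentially the same route as the paper: the paper's own argument (the text immediately preceding the proposition) applies Lemma~\ref{markerInterpolationLemma} with $f=\rho_n-\rho$, $a_1=\varepsilon_n$, $a_p=b$, passes to $D_{r,m}$ via Eqs.~(\ref{LL4.2}) and (\ref{LL4.41}), and concludes through Remark~\ref{markerREMconvergenceEquivallence}. Your two added touches---the explicit sign check showing $(p-r)/(p-1)>0$ on the open interval, and the observation that $D_{r,m}[\rho]>0$ for a normalized PDF so the R\'enyi logarithm is continuous at the limit---are points the paper leaves implicit, and they strengthen rather than alter the argument.
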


\begin{remark}
From the previous result we note that the boundednes of the RE for some
$p\neq1$ has a stabilizing effect, only in the subset of all the PDFs
satisfying Eq.(\ref{LL8.37.1}). Hence, from the convergence in the natural
$L^{1}$ distance we can conclude the convergence of the R\'{e}nyi or Tsallis
entropies. Note that from Eq.(\ref{LL8.37.3}) and $p>1$, when $r\nearrow p$
the error in the $L^{r}$ norm increases and finally, when $r=p$ it attains the
upper bound Eq.(\ref{LL8.37.1}).
\end{remark}

From the point of view of statistical physics the convergence of sequence of
PDF defined by some $L^{p}$ distance is important because it follows the
convergence of the expectation values of physical observable that belongs to
the dual spaces \cite{ReedSimon}. Hence, in the case $p>1$, from
Eq.(\ref{LL8.37.3.2.3}) we get the convergence $\left\Vert \rho_{n}%
-\rho\right\Vert _{r}\rightarrow0$ for $1<r<p$ and it follows the continuity
of the expectation values of observable $f(x)~$ from the dual space $f\in
L^{q}(\Omega,dm)$ where $1/q+1/r=1$
\begin{align}
&  {\int\limits_{\Omega}}\rho_{n}(x)f(x)dm(x)\underset{n\rightarrow\infty
}{\rightarrow}{\int\limits_{\Omega}}\rho(x)f(x)dm(x){\ }\label{LL8.35}\\
f  &  \in L^{q}(\Omega,dm);~q>\frac{p}{p-1}%
\end{align}
For $0<q<1$, it is possible that the dual space $S$ of $L^{q}(\Omega,dm)$ is
trivial. When $S$ is not trivial (e.g., the space of sequences $l^{q}$) and
$f\in S$), from $L^{p}$ we obtain again the continuity of corresponding
expectation values, similar to Eq.(\ref{LL8.35}).

\paragraph{Continuity of RTE's when the PDF is approximated in $L^{p}$ norm.}

Suppose that the \textit{quality of approximation} $\rho_{n}(x)$ of the true
PDF $\rho(x)$ is quantified in some $L^{p}$ norm with $p\neq1$. A convenient
choice in the applications is $p=2$. Suppose that $\rho(x),~\rho_{n}(x)\in
L^{1}(\Omega,m)\cap L^{p}(\Omega,m)$ and in analogy to the previous case we
have
\begin{align}
{\int\limits_{\Omega}}\left\vert \rho_{n}(x)-\rho(x)\right\vert ^{p}dm(x)  &
\leq\varepsilon_{n}\underset{n\rightarrow\infty}{\rightarrow}0\label{lp1}\\
{\int\limits_{\Omega}}\rho_{n}(x)dm(x)  &  =1 \label{lp1.1}%
\end{align}
From the normalization condition, for $\rho,\rho_{n}$ Eq.(\ref{LL1.1},
\ref{lp1.1}) we obtain
\begin{equation}
{\int\limits_{\Omega}}\left\vert \rho_{n}(x)-\rho(x)\right\vert dm(x)\leq2
\label{lp2}%
\end{equation}
By using the Lemma \ref{markerInterpolationLemma} with $a_{1}=2$ and
$a_{p}=\varepsilon_{n}$ and Eq.(\ref{LL4.2}) we obtain for all $r$ in the
domain
\begin{equation}
D=\{r|\min(1,p)<r<\max(1,p)\} \label{lp2.1}%
\end{equation}
the following bounds
\begin{align}
{\int\limits_{\Omega}}\left\vert \rho_{n}(x)-\rho(x)\right\vert ^{r}dm(x)  &
\leq2^{\frac{p-r}{p-1}}\varepsilon_{n}^{\frac{r-1}{p-1}}\underset
{n\rightarrow\infty}{\rightarrow}0\label{lp3}\\
\left[  D_{r,m}[\rho_{n}-\rho]\right]  ^{\frac{1}{i(p)}}  &  \leq2^{\frac
{p-r}{p-1}}\varepsilon_{n}^{\frac{r-1}{p-1}} \label{lp4}%
\end{align}
By using Eqs.(\ref{lp4}, \ref{LL4.2}, respectively, \ref{LL4.41}) is
\begin{align}
\left\vert D_{r,m}[\rho_{n}]-D_{r,m}[\rho]\right\vert  &  \leq\delta
_{n}\underset{n\rightarrow\infty}{\rightarrow}0\label{lp5}\\
\delta_{n}  &  :=\left[  2^{\frac{p-r}{p-1}}\varepsilon_{n}^{\frac{r-1}{p-1}
}\right]  ^{i(p)} \label{lp6}%
\end{align}
From the Eqs.(\ref{lp5}, \ref{LL5}-\ref{LL7}) and Remark
\ref{markerREMconvergenceEquivallence} we have the following stability results

\begin{proposition}
\label{markerPropLpConvL1bound}Under the conditions Eqs.(\ref{LL1.1},
\ref{lp1}, \ref{lp1.1}) for $r$ in \ the range $\min(1,p)<r<\max(1,p)$ we
have
\begin{align*}
&  S_{T,r}[\rho_{n}]\underset{n\rightarrow\infty}{\rightarrow}S_{T,r}[\rho]\\
&  S_{R,r}[\rho_{n}]\underset{n\rightarrow\infty}{\rightarrow}S_{R,r}[\rho]
\end{align*}

\end{proposition}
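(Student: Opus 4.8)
The plan is to route everything through the distance functional $D_{r,m}$, for which the metric inequalities (\ref{LL4.41}) are available, and then to hand off to the equivalence in Remark \ref{markerREMconvergenceEquivallence}. That is, I would first prove $D_{r,m}[\rho_n]\to D_{r,m}[\rho]$ for every $r$ in the open interval $(\min(1,p),\max(1,p))$, and only at the end translate this into the stated convergence of $S_{T,r}$ and $S_{R,r}$.

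The first concrete step is to extract a uniform $L^1$ bound on the error $f:=\rho_n-\rho$ from the two normalization constraints. Since $\rho,\rho_n\ge 0$ and both integrate to $1$ by Eqs.(\ref{LL1.1}, \ref{lp1.1}), the triangle inequality yields $\int_\Omega|\rho_n-\rho|\,dm\le\int_\Omega\rho_n\,dm+\int_\Omega\rho\,dm=2$, which is Eq.(\ref{lp2}). Combined with the hypothesis Eq.(\ref{lp1}), this places $f$ in exactly the situation of Lemma \ref{markerInterpolationLemma} with $s=p$, $a_1=2$, $a_p=\varepsilon_n$. I would then invoke that Lemma to obtain, for each $r$ strictly between $1$ and $p$,
\[
\int_\Omega|\rho_n-\rho|^r\,dm\le 2^{\frac{p-r}{p-1}}\,\varepsilon_n^{\frac{r-1}{p-1}}.
\]
Because $r-1$ and $p-1$ share a sign on this interval, the exponent $\frac{r-1}{p-1}$ is strictly positive, so the right-hand side tends to $0$ as $\varepsilon_n\to 0$; raising to the relevant fixed positive exponent then preserves this limit and gives $D_{r,m}[\rho_n-\rho]\to 0$.

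The final step is short. The metric inequality Eq.(\ref{LL4.41}) gives $|D_{r,m}[\rho_n]-D_{r,m}[\rho]|\le D_{r,m}[\rho_n-\rho]\to 0$, hence $D_{r,m}[\rho_n]\to D_{r,m}[\rho]$. Remark \ref{markerREMconvergenceEquivallence} then immediately yields $S_{T,r}[\rho_n]\to S_{T,r}[\rho]$. For the R\'{e}nyi entropy the same remark needs the extra hypothesis $D_{r,m}[\rho]>0$; this holds automatically because $\rho$ is a genuine PDF (it integrates to $1$, hence is not null, so its distance $D_{r,m}[\rho]$ to the origin is positive), giving $S_{R,r}[\rho_n]\to S_{R,r}[\rho]$ as well.

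I do not expect a real obstacle here, since the analytic work is carried by the interpolation Lemma; this argument is the mirror image of Proposition \ref{markerPropstionRenyiContinuity}, with the roles of the $L^1$ and $L^p$ data exchanged (there an $L^p$ bound and $L^1$ convergence, here an $L^1$ bound forced by normalization together with $L^p$ convergence). The only point demanding care is the sign of the exponent $\frac{r-1}{p-1}$, which is why the conclusion must be confined to the open interval: at the endpoint $r=1$ the exponent vanishes and the interpolation bound collapses to the constant $2$, so no convergence can be claimed there.
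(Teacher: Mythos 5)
Your proof is correct and follows essentially the same route as the paper: the normalization conditions give the uniform $L^{1}$ bound $\int_\Omega|\rho_n-\rho|\,dm\le 2$ (the paper's Eq.(\ref{lp2})), Lemma \ref{markerInterpolationLemma} with $a_1=2$, $a_p=\varepsilon_n$ gives the interpolated bound Eq.(\ref{lp3}), the metric inequality Eq.(\ref{LL4.41}) converts this into $D_{r,m}[\rho_n]\to D_{r,m}[\rho]$, and Remark \ref{markerREMconvergenceEquivallence} yields the entropy convergences. Your explicit verification that $D_{r,m}[\rho]>0$ (needed for the R\'enyi case of the Remark) is a small point of care the paper leaves implicit, but it does not change the argument.
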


We recall that in the case when the measure $dm(x)$ is probabilistic, if
$p>q\geq1~$\ then $\left\Vert f\right\Vert _{L_{q}}<\left\Vert f\right\Vert
_{L_{p}}$. If the measure $dm(x)$ is atomic, then from the convergence or
boundednes in some $l_{p}$ norm results the convergence or boundednes in all
$l_{q}$ norm, with $q>p$, including $l_{\infty}$.

\paragraph{}

\subsubsection{\bigskip Boundednes and stability of the BSE}

Details are given in Appendix \ref{markerSubsectCalculSumeIntegrale}

\paragraph{Counter example 1, discrete distribution
\label{markerSubsectCounterxDiscrete}}

We shall provide some counter examples that prove that the previous
stabilizing conditions are not sufficient for the stability of the BSE. The
details can be found in Appendix \ref{markerAppendCounterexDiscrete}. Consider
now the problem of continuity of the classical entropy, in the simplest case
of the countable infinite probability space, with probabilities $\ \mathbf{p}%
:=\left\{  p_{1},...,p_{n},...\right\}  :=\left\{  p_{k}\right\}
_{k=1}^{\infty}$. In this case
\begin{equation}
S_{cl}[\mathbf{p}]=-\sum\limits_{k=1}^{\infty}p_{k}\log p_{k} \label{LL8.4}%
\end{equation}
Clearly
\begin{equation}
1=\sum\limits_{k=1}^{\infty}p_{k}=\left\Vert \mathbf{p}\right\Vert _{l^{1}}
\label{LL8.5}%
\end{equation}
and it is logical to consider that the \textit{distance} between two
probability laws $\mathbf{p}:=\left\{  p_{1},...,p_{n},...\right\}  $ and
$\mathbf{p}^{\prime}:=\left\{  p_{1}^{\prime},...,p_{n}^{\prime},...\right\}
$ is given by $l^{1}$ distance:
\[
\left\Vert \mathbf{p-p}^{\prime}\right\Vert _{l^{1}}:=\sum\limits_{k=1}%
^{\infty}|p_{k}-p_{k}^{\prime}|
\]
Consider now the sequence that is convergent in the $l^{1}$
\begin{equation}
\mathbf{p}^{(n)}:=\left\{  p_{k}^{(n)}\right\}  _{k=1}^{\infty} \label{LL8.6}%
\end{equation}
where
\begin{align}
p_{k}^{(n)}  &  :=\frac{1}{K_{n}}\frac{1}{(k+4)\left[  \log\left(  k+4\right)
\right]  ^{2+\frac{1}{n}}}~\label{LL8.7}\\
K_{n}  &  =\sum\limits_{k=1}^{\infty}\frac{1}{(k+4)\left[  \log\left(
k+4\right)  \right]  ^{2+\frac{1}{n}}}<\infty\label{LL8.8}%
\end{align}
Note that $\left\Vert \mathbf{p}^{(n)}\right\Vert _{l^{1}}=1$ and the limit of
the sequence $\mathbf{p}^{(n)}$ in the space $l^{1}$ is the probability
distribution given by $\mathbf{p}:=\left\{  p_{k}\right\}  _{k=1}^{\infty}$
where
\begin{align}
\mathbf{p}  &  :=\left\{  p_{k}\right\}  _{k=1}^{\infty}\label{LL8.9}\\
p_{k}  &  =\frac{1}{K}\frac{1}{(k+4)\left[  \log\left(  k+4\right)  \right]
^{2}}\label{LL8.10}\\
K  &  =\sum\limits_{k=1}^{\infty}\frac{1}{(k+4)\left[  \log\left(  k+4\right)
\right]  ^{2}}<\infty\label{LL8.11}%
\end{align}
We have also the following (general) inclusion:
\begin{equation}
\left\Vert \mathbf{p}^{(n)}\right\Vert _{l^{1}}=1\Rightarrow\mathbf{p}%
^{(n)}\in l^{q}~;q>1 \label{LL8.11.1}%
\end{equation}
Despite the sequence $\mathbf{p}^{(n)}$ is convergent in $l^{1}$:
\[
\left\Vert \mathbf{p}^{(n)}-\mathbf{p}\right\Vert \underset{n\rightarrow
\infty}{\rightarrow}0
\]
it is easy to prove that the classical entropy is divergent:
\begin{equation}
S_{cl}[\mathbf{p}^{(n)}]=\mathcal{O}(n);~n\rightarrow\infty\label{LL8.12}%
\end{equation}
Consequently, the functional $\mathbf{p\rightarrow}S_{cl}[\mathbf{p}]$ from
the space $l^{1}$ to $\mathbb{R}$ is not continuos, despite the classical
entropy has the Lesche stability property and we have stabilizing condition
from Eq.(\ref{LL8.11.1}) in the form of $l^{p}$ boundednes with $p>1$.

\paragraph{Counter example 2, continuos distribution with finite measure $m$.
\label{markerSubsectCounterxContinous}}

A second counter example is the following (see also \cite{CLTmaxent}, page 7)%
\begin{align}
\rho_{n}(x)  &  =\frac{M_{n}}{x\left(  \log\frac{1}{x}\right)  ^{\alpha}%
};~x\geq\frac{1}{n}\label{cex2.1}\\
\rho_{n}(x)  &  =0;~0<x<\frac{1}{n}\label{cex2.2}\\
\rho(x)  &  =\frac{M}{x\left(  \log\frac{1}{x}\right)  ^{\alpha}%
}\label{cex2.3}\\
1  &  <\alpha<2 \label{cex2.4}%
\end{align}
where $M_{n}$, $M$ are finite normalization constants. Note that (for further
details see Appendix \ref{markerAppendixCounterexContinous})
\begin{equation}
D_{p}(\rho_{n})<C_{p};~D_{p}(\rho)<C_{p} \label{cex2.5}%
\end{equation}
for some constants $C_{p}<\infty$, that does not depend on $n$. Consequently
$\rho_{n},\rho\in L^{1}\cap L^{p}$ for all $0<p\leq1$ and
\begin{equation}
D_{1}[\rho_{n}-\rho]=\left\Vert \rho_{n}-\rho\right\Vert _{L^{1}}%
\underset{n\rightarrow\infty}{\rightarrow}0 \label{cex2.6}%
\end{equation}
so we have a wide choice of stabilizing conditions (see Proposition
\ref{markerPropstionRenyiContinuity}) assuring that $S_{R,q}(\rho
_{n})\rightarrow S_{R,q}(\rho)$ for all $q$ in the range $\ p<q<1$.
Nevertheless, the classical entropy diverges:
\begin{equation}
S_{cl}[\mathbf{\rho}_{n}]=-\mathcal{O}(\log n)^{2-\alpha}\rightarrow
-\infty\label{cex2.8}%
\end{equation}

\subsubsection{Boundednes and stability of the BSE.}

At the first stage, let us suppose in the following that the true PDF
$\rho(x)$ is approximated by the sequence $\rho_{n}\in L^{p}(\Omega,dm)\cap
L^{q}(\Omega,dm)$ where $0<p<1$ and $q>1$ in the $L^{1}$ norm:
\begin{equation}
\int\limits_{\Omega}\left\vert \rho_{n}(x)-\rho(x)\right\vert
dm(x):=\varepsilon_{n}\underset{n\rightarrow\infty}{\rightarrow}0 \label{sh0}%
\end{equation}
By Theorem \ref{markerTheoremRieszThorinGen} we have also $\rho_{n}\in
L^{1}(\Omega,dm)$. Suppose, without loss of generality, that in addition we
have the following stabilization conditions
\begin{align}
Z_{p,m}[\rho_{n}]  &  =D_{p,m}[\rho_{n}]\leq A;~p<1\label{sh1}\\
Z_{p,m}[\rho]  &  =D_{p,m}[\rho]\leq A;~p<1\label{sh1.1}\\
Z_{q,m}[\rho_{n}]  &  =D_{q,m}[\rho_{n}]^{q}\leq A;~q>1\label{sh2}\\
Z_{q,m}[\rho]  &  =D_{q,m}[\rho]^{q}\leq A;~q>1 \label{sh2.1}%
\end{align}
For the sake of simplicity we will used a less strict bounds. These bounds are
compatible with the normalization condition $Z_{1,m}[\rho_{n}]=1$ \ if $1\leq
A$. From Theorem \ref{markerTheoremRieszThorinGen}, from the Appendix
\ref{markerAppndxLogConvexity} and previous bounds, we get
\begin{align}
Z_{r,m}[\rho_{n}]  &  =\left[  D_{r,m}[\rho_{n}]\right]  ^{1/i(r)}\leq
A;~p\leq r\leq q\label{sh2.11}\\
Z_{r,m}[\rho]  &  =\left[  D_{r,m}[\rho]\right]  ^{1/i(r)}\leq A;~p\leq r\leq
q \label{sh2.12}%
\end{align}
From Eqs.(\ref{sh2.11}, \ref{sh2.12}, \ref{LL4.3}) we extend these bounds to
the complex domain, $p\leq\operatorname{Re}(w)\leq q$:
\begin{equation}
\left\vert Z_{w,m}[\rho_{n}]\right\vert =\left\vert {\int\limits_{\Omega}%
}\left\vert \rho_{n}(x)\right\vert ^{w}dm(x)\right\vert \leq{\int
\limits_{\Omega}}\left\vert \rho_{n}\right\vert ^{\operatorname{Re}%
(w)}dm(x)\leq A \label{sh2.2}%
\end{equation}
and in similar manner
\begin{equation}
\left\vert Z_{w,m}[\rho]\right\vert \leq A;~p\leq\operatorname{Re}(w)\leq q
\label{sh2.3}%
\end{equation}

\paragraph{Boundednes}

In the following $Z_{q,m}[\rho]$ can be considered a particular value
\cite{JizbaArimitsuAnalitic} of the analytic function $w\rightarrow
Z_{w,m}[\rho]$, and related to the BSE by Eq.(\ref{LL71.})

\begin{proposition}
\label{markerPropositionAnaliticityBound} Under the conditions Eqs.[\ref{sh1}%
,\ref{sh2} ] there exists an analytic continuation of the function
$Z_{r,m}[\rho_{n}]$, denoted by
\begin{equation}
F_{n}(z):=\int_{\Omega}\left\vert \rho_{n}(x)\right\vert ^{z}dm(x)
\label{sh3.01}%
\end{equation}
in the strip from the complex $z$ plane $D=\{z|p<\operatorname{Re}(z)<q\}$
such that $F_{n}(x)=Z_{x,m}[\rho_{n}]$ for $p\leq x\leq q$. We have for all
$\ z\in D$ the bound
\begin{equation}
\left\vert F_{n}(z)\right\vert \leq A \label{Sh3.1}%
\end{equation}
The BSE of the PDF is given by the Cauchy integral
\begin{equation}
S_{cl}[\rho_{n}]=-\left[  \frac{d}{dz}F_{n}(z)\right]  _{z=1}=-\frac{1}{2\pi
i}\oint\limits_{C}\frac{F_{n}(w)dw}{(w-1)^{2}} \label{sh4}%
\end{equation}
where $C$ is a sufficiently small circle centered in $w=1$. The BSE is bounded
by
\begin{equation}
\left\vert S_{cl}[\rho_{n}\right\vert \leq\frac{2A}{\min(q-1,1-p)}
\label{sh4.01}%
\end{equation}

\end{proposition}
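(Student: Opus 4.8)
The plan is to proceed in four steps: establish analyticity of $F_n$ in the strip, recall the uniform bound, represent the BSE as a Cauchy integral of the derivative, and finally estimate that integral.

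First I would show that $F_n(z)=\int_\Omega|\rho_n(x)|^z\,dm(x)$ is analytic in the open strip $D=\{z\mid p<\operatorname{Re}(z)<q\}$. For each fixed $x$ the integrand $z\mapsto|\rho_n(x)|^z=\exp(z\log|\rho_n(x)|)$ is entire (setting it equal to $0$ on the null set $\{\rho_n=0\}$, consistent with $\operatorname{Re}(z)>0$ since $p>0$). The crucial estimate is the pointwise domination: for $p\le\operatorname{Re}(z)\le q$ one has $|\rho_n(x)|^{\operatorname{Re}(z)}\le|\rho_n(x)|^{p}+|\rho_n(x)|^{q}$, because the exponent $\operatorname{Re}(z)$ lies between $p$ and $q$ while the power is monotone in the exponent separately on $\{\rho_n\le1\}$ and on $\{\rho_n\ge1\}$. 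By the stabilizing conditions Eqs.(\ref{sh1},\ref{sh2}) this majorant has integral at most $2A$, so the integral defining $F_n$ converges absolutely and uniformly on compact subsets of $D$. Analyticity then follows either by differentiating under the integral sign, or---the route I would favour for cleanliness---by Morera's theorem: for any closed triangle $\Gamma\subset D$, Fubini's theorem (justified by the integrable majorant on the compact set $\Gamma$) permits interchanging $\oint_\Gamma$ and $\int_\Omega$, and since each inner contour integral $\oint_\Gamma|\rho_n(x)|^z\,dz$ vanishes by Cauchy's theorem for the entire integrand, we get $\oint_\Gamma F_n(z)\,dz=0$.

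Second, the uniform bound $|F_n(z)|\le A$ on $D$ is already at hand from Eq.(\ref{sh2.2}), since $|F_n(z)|\le\int_\Omega|\rho_n|^{\operatorname{Re}(z)}\,dm\le A$ for $p\le\operatorname{Re}(z)\le q$. Third, by Eq.(\ref{LL71.}) together with $F_n(x)=Z_{x,m}[\rho_n]$ we have $S_{cl}[\rho_n]=-F_n'(1)$; since $1\in D$ and $F_n$ is analytic there, the Cauchy integral formula for the first derivative yields precisely Eq.(\ref{sh4}) for any positively oriented circle $C$ about $w=1$ lying in $D$.

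Finally I would estimate the contour integral. Choosing $C$ to be the circle of radius $R=\tfrac12\min(q-1,1-p)$ centred at $w=1$ keeps $C$ strictly inside the open strip $D$, so the bound $|F_n(w)|\le A$ applies on $C$. The standard $ML$-estimate then gives $|S_{cl}[\rho_n]|=|F_n'(1)|\le\frac{1}{2\pi}\cdot\frac{A}{R^{2}}\cdot 2\pi R=\frac{A}{R}=\frac{2A}{\min(q-1,1-p)}$, which is Eq.(\ref{sh4.01}); the factor $2$ is exactly the price of retreating from the strip's boundary to radius $R=\tfrac12\min(q-1,1-p)$. I expect the only genuine subtlety to be the analyticity step---specifically the Fubini interchange underlying Morera---but the explicit integrable majorant $|\rho_n|^p+|\rho_n|^q$, with integral $\le 2A$, removes any difficulty there.
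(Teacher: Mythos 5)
Your proof is correct and takes essentially the same route as the paper's: the uniform bound $\left\vert F_{n}(z)\right\vert \leq A$ on the strip (the paper's Eq.(\ref{sh2.2})), the Cauchy integral formula for $-F_{n}^{\prime}(1)$ over a circle of radius $\tfrac{1}{2}\min(q-1,1-p)$, and the ML estimate yielding Eq.(\ref{sh4.01}). The only divergence is that you establish the analyticity of $F_{n}$ directly (via the majorant $\left\vert \rho_{n}\right\vert ^{p}+\left\vert \rho_{n}\right\vert ^{q}$ together with Fubini and Morera), whereas the paper cites its Corollary \ref{markeerCorolaryApendAnalitictyinStrip}; since that corollary is stated in the appendix without a detailed proof, your argument usefully supplies the missing justification.
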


\begin{proof}
The Eq.(\ref{Sh3.1}) results from Corollary
\ref{markeerCorolaryApendAnalitictyinStrip}, Appendix
(\ref{markerAppndxLogConvexity}). Eq.(\ref{sh4}) is a direct consequence of
Eqs~(\ref{LL71.}, \ref{sh3.01}). To prove Eq.(\ref{sh4.01}), we denote
\begin{equation}
r=\min(q-1,1-p)/2 \label{shp1}%
\end{equation}
the radius of the circle $C$, which is in the interior to the analyticity
domain of $F_{n}(w)$. By Cauchy theorem and Eq.(\ref{sh4}) we have
\begin{align}
F_{n}(z)  &  =\frac{1}{2\pi i}\oint\limits_{C}\frac{F_{n}(z^{\prime}
)}{z^{\prime}-z}dz^{\prime}\label{shp2}\\
S_{cl}[\rho_{n}]  &  =-\frac{1}{2\pi i}\oint\limits_{C}\frac{F_{n}(z^{\prime
})}{(z^{\prime}-1)^{2}}dz^{\prime} \label{shp3}%
\end{align}
where $C$ is the circle with radius $r$, so $|z^{\prime}-1|=r.$ By
Eqs.(\ref{sh2.2}, \ref{sh2.3})
\begin{equation}
\left\vert F_{n}(z^{\prime})\right\vert \leq A \label{shp3.1}%
\end{equation}
From Eq.(\ref{shp3}) we get
\[
|S_{cl}[\rho_{n}]|\leq\frac{1}{r}\underset{z\in C^{\prime}}{\max}|F_{n}(z)|
\]
which combined with Eqs.(\ref{shp1}, \ref{shp3.1}) completes the proof.
\end{proof}

\paragraph{Stability of the BSE}

.

\noindent Consider now the stability problem. Suppose that the sequence of PDF
$\rho_{n}$ approximates the exact PDF $\rho$ in the sense of Eq.(\ref{sh0})
and we have the bounds Eqs.(\ref{sh1}-\ref{sh2.1}). We obtain the following
result on the stability of the BSE:

\begin{theorem}
\label{markerTheoremShannonStability}Under the he previous conditions
\ Eqs.(\ref{sh0}-\ref{sh2.1}) we have%
\begin{equation}
\underset{n\rightarrow\infty}{\lim}S_{cl}[\rho_{n}]=S_{cl}[\rho] \label{sh6}%
\end{equation}

\end{theorem}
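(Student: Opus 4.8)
The plan is to realise both entropies as derivatives at $z=1$ of uniformly bounded holomorphic functions, to convert the $L^{1}$ convergence of the PDFs into convergence of these functions on the real axis, and then to upgrade this to convergence of the derivatives by an analytic extrapolation argument. First I would apply Proposition~\ref{markerPropositionAnaliticityBound} to the limit PDF $\rho$ itself, which is legitimate because of the bounds Eqs.(\ref{sh1.1},\ref{sh2.1}): this produces the analytic function $F(z)=\int_{\Omega}|\rho(x)|^{z}\,dm(x)$ on the open strip $D=\{z\mid p<\operatorname{Re}(z)<q\}$, satisfying $|F(z)|\le A$ there, with $S_{cl}[\rho]=-F'(1)$ by Eq.(\ref{LL71.}). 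Together with the family $F_{n}(z)=\int_{\Omega}|\rho_{n}(x)|^{z}\,dm(x)$ of Proposition~\ref{markerPropositionAnaliticityBound} I then have a uniformly bounded family of holomorphic functions on $D$, namely $|F_{n}(z)|\le A$ and $|F(z)|\le A$, with $S_{cl}[\rho_{n}]-S_{cl}[\rho]=-\bigl(F_{n}'(1)-F'(1)\bigr)$.

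Second, I would establish pointwise convergence $F_{n}(r)\to F(r)$ for every real $r\in(p,q)$. For $r=1$ this is the normalisation $F_{n}(1)=F(1)=1$. For $1<r<q$, the triangle inequality Eq.(\ref{LL4.4}) applied to $f=\rho_{n}-\rho$ together with Eqs.(\ref{sh2},\ref{sh2.1}) bounds $\int_{\Omega}|\rho_{n}-\rho|^{q}\,dm\le 2^{q}A$, and interpolating this against the $L^{1}$ bound Eq.(\ref{sh0}) (so $a_{1}=\varepsilon_{n}$) via Lemma~\ref{markerInterpolationLemma} gives $\Vert\rho_{n}-\rho\Vert_{r}\to 0$, whence $Z_{r,m}[\rho_{n}]\to Z_{r,m}[\rho]$ exactly as in Proposition~\ref{markerPropstionRenyiContinuity}. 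For $p<r<1$, the same triangle inequality with Eqs.(\ref{sh1},\ref{sh1.1}) bounds $N_{p,m}[\rho_{n}-\rho]\le 2A$, and interpolating between this $L^{p}$ bound and the $L^{1}$ bound yields $N_{r,m}[\rho_{n}-\rho]\to 0$, so that Eq.(\ref{LL4.41}) gives $Z_{r,m}[\rho_{n}]\to Z_{r,m}[\rho]$. Since $F_{n}(r)=Z_{r,m}[\rho_{n}]$ and $F(r)=Z_{r,m}[\rho]$ for real $r$, this is the desired convergence on the whole real segment $(p,q)$.

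Third, and this is the crux, I would pass from convergence on the real segment to convergence of the derivatives at $z=1$. The interval $(p,q)$ has accumulation points inside the open strip $D$, and a uniformly bounded sequence of holomorphic functions on $D$ that converges on such a set converges uniformly on every compact subset of $D$ (Vitali--Montel; the analytic extrapolation method of Appendix~\ref{markerSubsectExtrapolTheorem}). Applying this to $F_{n}$ and identifying the limit with $F$ on the real axis, hence on all of $D$ by the identity theorem, I obtain uniform convergence $F_{n}\to F$ on the small circle $C$ of Proposition~\ref{markerPropositionAnaliticityBound}. The Cauchy representation Eq.(\ref{sh4}) then gives, the interchange of limit and contour integral being justified by uniform convergence on $C$,
\begin{equation*}
F_{n}'(1)=\frac{1}{2\pi i}\oint_{C}\frac{F_{n}(w)}{(w-1)^{2}}\,dw\;\longrightarrow\;\frac{1}{2\pi i}\oint_{C}\frac{F(w)}{(w-1)^{2}}\,dw=F'(1),
\end{equation*}
so that $S_{cl}[\rho_{n}]=-F_{n}'(1)\to-F'(1)=S_{cl}[\rho]$, which is Eq.(\ref{sh6}).

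The main obstacle is precisely this third step: the $L^{1}$ bound together with the two stabilising bounds controls $Z_{r,m}$ only on the real axis, and pointwise real-axis convergence by itself says nothing about the \emph{derivative} at $z=1$; indeed the counter-examples~\ref{markerSubsectCounterxDiscrete} and~\ref{markerSubsectCounterxContinous} show that without both-sided bounds the derivative, i.e. the BSE, can diverge. It is the uniform a priori bound $|F_{n}|\le A$ on the \emph{genuinely complex} strip, established in Proposition~\ref{markerPropositionAnaliticityBound}, combined with the rigidity of holomorphic functions, that converts real-axis convergence into convergence of $F_{n}'(1)$. Verifying the hypotheses of this extrapolation step carefully, in particular that the bound holds throughout the complex strip and not merely on the real axis, is where the real work lies.
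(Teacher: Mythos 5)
Your proof is correct, and its skeleton is the paper's: realize the entropies as $S_{cl}[\rho_{n}]=-F_{n}'(1)$, $S_{cl}[\rho]=-F'(1)$ for the analytic continuations on the strip $\{p<\operatorname{Re}(z)<q\}$ with the uniform bound $A$ (Proposition~\ref{markerPropositionAnaliticityBound}), use Lemma~\ref{markerInterpolationLemma} with the stabilizing bounds to convert $L^{1}$ convergence into convergence of $Z_{r,m}$ on the real segment, and finish with the Cauchy formula on a small circle around $z=1$. The genuine difference is the tool used to pass from real-axis convergence to convergence on the complex circle. The paper first establishes \emph{uniform} convergence $|F_{n}(r)-F(r)|\le\delta_{n}\to0$ on a closed subinterval $[p_{1},q_{1}]=[(1+p)/2,(1+q)/2]$ (for $r>1$ this needs the extra mean-value inequality $|x^{r}-y^{r}|\le rB_{r}^{r-1}|x-y|$ to go from convergence of norms to convergence of $Z_{r,m}$), and then applies its two-constants/harmonic-measure extrapolation theorem (Theorem~\ref{markerTheoremExtrapol}, proved in the appendix, supplemented by a Schwarz-reflection remark for the lower half of the circle), obtaining the explicit estimate $|F_{n}(w)-F(w)|\le K\delta_{n}^{u_{0}(w)}$ and hence a quantitative rate for $|S_{cl}[\rho_{n}]-S_{cl}[\rho]|$. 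You instead invoke the Vitali--Montel theorem, which needs only \emph{pointwise} convergence on the real segment (a set with accumulation points in the strip) together with the uniform bound $|F_{n}|\le A$; this legitimately shortens the argument --- you skip the uniformization over $[p_{1},q_{1}]$ and the entire harmonic-measure machinery --- but it is purely qualitative: it produces no modulus of continuity, whereas the paper's method yields an explicit H\"older-type error bound in $\varepsilon_{n}$, which is precisely the kind of output the paper's ``ill-posed problem'' stabilization program is after. One small correction: your parenthetical identifying Vitali--Montel with the extrapolation method of Appendix~\ref{markerSubsectExtrapolTheorem} is inaccurate, since that theorem requires uniform smallness on a boundary arc rather than pointwise convergence on an interior set, and in return gives quantitative information that normal-families compactness cannot; as your own argument is self-contained, this mislabeling does not affect its correctness.
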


\noindent The proof (See Appendix
\ref{markerSubsectAppendShannonStabilityProof}) uses mathematical methods
adopted in high energy physics \cite{CiulliStabProblems}%
-\cite{NenciuHamonicMeasure}.

\noindent In conclusion, the problem of numerical approximation of the
classical and generalized entropies in the general case, cannot be solved
without additional assumptions, that are tacitly used in practice. Among
auxiliary assumptions that could stabilize the numerical instability in the
computation of the entropies, we have smoothness conditions and bounds on the
tail of the probability density function. We also mention that in many
practical situation, the cumulative probability distribution function is
approximated by the empirical cumulative distribution obtained from
experiment. In this case the goodness of the fitting is characterized by a
random variable, whose distribution is described by the Kolmogorov-Smirnov or
the Anderson-Darling statistics. Consequently, the computed entropies are
itself random variables and the problem of continuity must be formulated in
terms of convergence of random variables. This class of problems deserves a
further study.

\section{The generalized R\'{e}nyi entropies (GRE). \label{markerSectionGRE}}

\subsection{Motivations}

Our generalization of the GRE defined in the previous work
\cite{SonninoSteinbrGRE} is a straightforward extension of the previous case
when the full phase space is a Cartesian product of two spaces, to the more
general case of $N$ factors. The line of reasoning is the same: first, we
remark the mathematical relation between classical R\'{e}nyi entropy and the
metric in Lebesgue spaces and, second, we define the new entropies by using
the metric in generalized Lebesgue spaces \cite{Besov} (similar to the
particular case $N=2$ studied in \cite{SonninoSteinbrGRE}). For
characterization of generalized entropies from the point of view of
fundamental mathematical structures, see \cite{SGySASGCategory}.

\subsubsection{Avoiding integrability problems}

The PDF functions in many variables may have a more complex singularity
structure such that all the RTE diverge. We proved in \cite{SonninoSteinbrGRE}
that the GRE can solve this problem, in the case of PDF with two variables.
The generalization introduced here is intended to treat similar integrability
problems in many variables. As specified in the previous part, the domain of
the entropy parameter $q$, where the R\'{e}nyi or the Tsallis entropies are
defined, is related to the singularities and to the asymptotic behavior of the
PDF. Indeed, consider the case when the PDF depends on the real variable $x$
so $(\Omega,\mathcal{A},m)=(\mathbb{R},\mathcal{B}(\mathbb{R}),\lambda_{1})$
where $\mathcal{B}(\mathbb{R})$ is the family of subsets of $\mathbb{R}$
generated by denumerable intersections and unions of open and close intervals,
and $\lambda_{1}$ is the Lebesgue measure (more exactly $dm=d\lambda_{1}=dx$).
Suppose that the PDF $\rho(x)$ has the decomposition in a regular part,
$\rho_{0}(x)$ is differentiable and decays, at least exponentially at
infinity, in both singular part and a heavy tail part
\[
\rho(x)=\rho_{0}(x)\left[  1+\frac{A}{\left\vert x\right\vert ^{\alpha}%
}\right]  +\frac{B}{1+\left\vert x\right\vert ^{1+\beta}}%
\]
where $A,B,\beta~>0$, and $0<\alpha<1$. The RTE is finite when $1/\left(
1+\beta\right)  <q<1/\alpha$. In processing the experimental data, the large
value of the RTE for some $q<1$ is a signal characterized by a heavy tail of
PDF (this is shown in the stationary PDF in self organized criticality models
as well as in the linearized stochastic models with multiplicative noise
\cite{sgbw}). The large value of RTE for some $q>1$ suggests the existence of
local singularity, that in some models is related to the phenomenon of noise
driven intermittency \cite{intermittency}. Consequently, the plotting of the
RTE is a good practice for detecting the existence of singularity and heavy
tail effects.

\noindent Suppose now that we have a PDF depending of $3$ variables
$\rho(x_{1} ,x_{2},x_{3})$ that similarly has a regular part $\rho_{0}$ that
is differentiable and decay exponentially at infinity, and singular and heavy
tail parts
\begin{equation}
\rho(x_{1},x_{2},x_{3})=\rho_{0}(x_{1},x_{2},x_{3})\left[  1+\sum
\limits_{i=1}^{3}\frac{A_{i}}{\left\vert x_{i}\right\vert ^{\alpha_{i}}
}\right]  +\prod\limits_{i=1}^{3}\frac{B_{i}}{1+\left\vert x_{i}\right\vert
^{1+\beta_{i}}} \label{LL8.13}%
\end{equation}
\noindent with $\alpha_{1}<\alpha_{2}<\alpha_{3}$ and $\beta_{1}<\beta
_{2}<\beta_{3}$. Note that the RTE associated to $\rho$ from Eq.(\ref{LL8.13})
is defined in the domain
\begin{equation}
1/\left(  1+\beta_{1}\right)  <q<1/\alpha_{3} \label{LL8.14}%
\end{equation}
so it cannot detect the singularity (possible intermittency phenomena) in the
variables $x_{1},x_{2}$ and the heavy tail (possible SOC effects) in the
variable $x_{2},x_{3}$. In the sequel we will introduce a subsequent
generalization that overcomes this difficulty.

\subsubsection{Invariance under measure preserving transformations.}

The classical as well as generalized entropies are general functionals that
extracts/condense information about PDF, according to most general rules of
the probability theory. As a result the BSE, RTE are invariant under measure
preserving transformations. For illustration we consider the simplest case,
when we have a finite discrete probability distribution $p_{i,\alpha}$, with
$1\leq i\leq N$ and $1\leq\alpha\leq A$, with $\sum\limits_{i=1}^{N}%
\sum\limits_{a=1}^{A}p_{i,\alpha}=1$. Consider the case when the measure $m$
in Eqs(\ref{LL4.3}, \ref{LL5}) is the product of the counting measures on the
sets $\{1,...,N\}$ \ and $\{1,...,A\}$. In the terminology of "Big Data", the
$NA$ data are presented in tensorized form \cite{BigData1}. The corresponding
BSE and RE are
\begin{align}
S_{cl}  &  =-\sum\limits_{i=1}^{N}\sum\limits_{a=1}^{A}p_{i,\alpha}\log
p_{i,\alpha}\label{LL8.141}\\
S_{R,w}  &  =\frac{1}{1-w}\log\sum\limits_{i=1}^{N}\sum\limits_{a=1}%
^{A}p_{i,\alpha}^{w};~w>0 \label{ll8.142}%
\end{align}
These entropies are invariant under the change of variables
\[
p_{i,\alpha}\rightarrow p_{\sigma(i,\alpha)}%
\]
where the transformation $\sigma:$ $\ (i,\alpha)\rightarrow\sigma(i,\alpha)$
is an element of the permutation group $\mathcal{S}_{NA}$ of $NA$ objects,
\ $\mathcal{S}_{NA}$ having $(NA)!$ elements, that reflect our complete lack
of information about state space and ignore its Cartesian product structure.
However, when the indices $i$ and $\alpha$ have different physical meaning,
such an extended symmetry hypothesis is not appropriate. On the other hand,
the GRE's whose construction use the Cartesian product structure for generic
$p_{i,\alpha}$ \cite{SonninoSteinbrGRE}
\begin{align}
S_{v,w}  &  :=\frac{1}{1-w}\log\sum\limits_{i=1}^{N}\left[  \sum
\limits_{a=1}^{A}p_{i,\alpha}^{w}\right]  ^{v}\label{LL8.145}\\
S_{p,q}^{(permuted)}  &  :=\frac{1}{1-w}\log\sum\limits_{\alpha=1}^{A}\left[
\sum\limits_{i=1}^{N}p_{i,\alpha}^{w}\right]  ^{v} \label{LL8.145a}%
\end{align}
is not invariant under full permutation group $\mathcal{S}_{NA}$. The
knowledge of $S_{R,q}$ from Eq.(\ref{ll8.142}) for all $w>0$ allows to
reconstruct the probabilities $p_{i,a}$ modulo permutation group
$\mathcal{S}_{NA}$ (i.e. reconstruct the probabilities without specification
of their place in the list). On the other hand the knowledge of the GRE's from
Eqs.(\ref{LL8.145}, \ref{LL8.145a}) for all $v>0,~w>0$ allows to reconstruct
$p_{i,a}$ modulo smaller group $\mathcal{S}_{N}\times\mathcal{S}_{A}$ (see
\cite{SGySASGCategory}). Similar problems appear in the case of probability
distribution $p_{i,\alpha,m}$ where the indices $i,\alpha,m$ have different
interpretation. The invariance group $\mathcal{S}_{NAM}$ of the R\'{e}nyi
entropy
\[
S_{R,q}=\frac{1}{1-q}\log\sum\limits_{i=1}^{N}\sum\limits_{a=1}^{A}%
\sum\limits_{m=1}^{M}p_{i,\alpha,m}^{q}%
\]
is too large, it contains $(NAM)!$ elements. In contrast, the generalized
R\'{e}nyi entropy, introduced in this work
\begin{equation}
\frac{1}{1-q_{3}}\log\sum\limits_{i=1}^{N}\left[  \sum\limits_{a=1}^{A}\left[
\sum\limits_{m=1}^{M}p_{i,\alpha,m}^{q_{3}}\right]  ^{q_{2}}\right]  ^{q_{1}}
\label{LL8.148}%
\end{equation}
is no more invariant under full permutation group $\mathcal{S}_{NAM}$. On the
other hand it is easy to see that the invariance group of Eq.(\ref{LL8.148})
contains at least the product of subgroups, having at least $N!A!M!$ elements.
For more detailed discussion, see \cite{SGySASGCategory}. In conclusion, while
the RTE, BSE are constructed by using the most fundamental structures of the
probability theory, the GRE also take into account the Cartesian product
structure of the phase space and the corresponding product structures of measures.

\subsection{Definitions and notations \label{markerSubsectGREdeffinition}}

We follow the same approach as in ref.\cite{SonninoSteinbrGRE}. We will define
the Generalized R\'{e}nyi entropies by using the results on Banach spaces with
the anisotropic norm, exposed in ref.\cite{Besov}. In the first part of the
discussion we will restrict our discussions to the set of parameters that
defines the GRE, when a) The integrals that appear in the definition can be
interpreted as the distance in a suitable function space and b) The formula
for entropy can be related to convexity or concavity properties of some
functional, in the subspace of non negative density functions. Consequently,
we will define only two class of distance-functionals and entropies, in
analogy to functionals $S_{p_{y},p_{z}}^{(1)}[\rho]$ and $S_{q_{y},q_{z}%
}^{(2)}[\rho]$ defined in ref.\cite{SonninoSteinbrGRE}.

Consider that the measure space $(\Omega,\mathcal{A},m)$ has the following
direct product structure. First, the phase space $\Omega$ is split into $N$
subspaces
\begin{equation}
\Omega=\Omega_{1}\times\Omega_{2}\times...\times\Omega_{N} \label{LL9}%
\end{equation}
This means that the argument $\mathbf{x}$ of probability density function can
be represented as $\mathbf{x=}\left\{  x_{1},x_{2},...,x_{N}\right\}  $, so
\begin{equation}
\rho(\mathbf{x})=\rho(x_{1},x_{2},...,x_{N}) \label{LL9.1}%
\end{equation}
with $x_{k}\in\Omega_{k}$, $1\leq k\leq N$. We mention also that, in general,
it is possible that the component spaces $\Omega_{k}$ are discrete, or has the
structure of $\mathbf{R}^{n}$ or more in general, of an infinite dimensional
measure space. Each of the spaces $\Omega_{k}$ has their $\sigma-$algebra
$\mathcal{A}_{k}$. (The $\sigma-$algebra $\mathcal{A}$, that contains subsets
of $\Omega=\Omega_{1}\times\Omega_{2}\times...\times\Omega_{N}$ is defined as
a tensor product of the $\sigma-$algebras $\mathcal{A}_{k}$ : it is the
smallest $\sigma-$algebra on $\Omega$ such that all of the projections
$\Omega\overset{p_{k}}{\rightarrow}\Omega_{k}$ are measurable \cite{Rudin}%
-\cite{EStein}). The measure $m$ is also factorizable:
\begin{equation}
dm(\mathbf{x)=}dm(x_{1},x_{2},...,x_{N}\mathbf{\mathbf{)}=}\prod
\limits_{j=1}^{N}dm_{k}(x_{k}) \label{LL10}%
\end{equation}
where the measures $m_{k}$ are defined on the $\sigma-$algebras (space of
events) $\mathcal{A}_{k}$. In other words, the measure space $(\Omega
,\mathcal{A},m)$ is the direct product
\begin{equation}
(\Omega,\mathcal{A},m)=\bigotimes\limits_{j=1}^{N}(\Omega_{j},\mathcal{A}%
_{j},m_{j}) \label{LL11}%
\end{equation}
The elementary probability $dP(\mathbf{x})$ is given by
\begin{equation}
dP(\mathbf{x})=\rho(x_{1},x_{2},...,x_{N})dm(\mathbf{x}) \label{LL12}%
\end{equation}
where $dm(\mathbf{x})$ is given by Eq.(\ref{LL10}). Consider a vector
$\mathbf{p}=\{p_{1},p_{2},...,p_{N}\}$ of real numbers with $p_{k}\geq1$.
According to Ref.\cite{Besov}, in close analogy to
Ref.\cite{SonninoSteinbrGRE} (where the particular case $N=2$ was studied), we
define recursively the anisotropic norm (depending on the measure $m$)
$\left\Vert \rho\right\Vert _{\mathbf{p},m}$ as follows (see Appendix,
subsection \ref{markerSubsectionPropertiesOfFunctionalDpmf})
\begin{align}
\rho_{N}(,x_{1},x_{2},...,x_{N})  &  :=\rho(x_{1},x_{2},...,x_{N}%
)\label{LL13}\\
\rho_{N-1}(x_{1,},x_{2},...,x_{N-1})  &  :=\left[  \int\limits_{\Omega_{N}%
}\left[  \rho_{N\text{ }}(x_{1},...,x_{N})\right]  ^{p_{N}}dm_{N}%
(x_{N})\right]  ^{1/p_{N}}....\label{LL14}\\
\rho_{k-1}(x_{1},x_{2},...,x_{k-1})  &  :=\left[  \int\limits_{\Omega_{k}%
}\left[  \rho_{k\text{ }}(x_{1},...,x_{k})\right]  ^{p_{k}}dm_{k}%
(x_{k})\right]  ^{1/p_{k}}...\label{LL15}\\
\rho_{1\text{ }}(x_{1})  &  :=\left[  \int\limits_{\Omega_{2}}\left[
\rho_{2\text{ }}(x_{1},x_{2})\right]  ^{p_{2}}dm_{2}(x_{2})\right]  ^{1/p_{2}%
}\ \label{LL16}\\
\left\Vert \rho\right\Vert _{\mathbf{p},m}  &  :=\left[  \int\limits_{\Omega
_{1}}\left[  \rho_{1\text{ }}(x_{1})\right]  ^{p_{1}}dm_{1}(x_{1})\right]
^{1/p_{1}}\ \label{LL17}%
\end{align}
In analogy with Eqs.(\ref{LL3}, \ref{LL5}) and Ref.~\cite{SonninoSteinbrGRE}
we define the GRE, with respect to the measure $m$
\begin{equation}
S_{\mathbf{p}}^{(1)}[\rho,m]=\frac{p_{1}}{1-p_{N}}\log\left\Vert
\rho\right\Vert _{\mathbf{p},m};p_{i}>1 \label{LL18}%
\end{equation}
Note that the also the anisotropic norm function $\rho\rightarrow\left\Vert
\rho\right\Vert _{\mathbf{p},m}$ is convex and satisfies the axioms related to
the norm~\cite{Besov}. The corresponding normed vector space is complete, i.
e. it is a Banach space (see Ref.\cite{Besov}). There is another range of
parameters that generalizes the R\'{e}nyi entropy corresponding to
Eqs.(\ref{LL4}, \ref{LL6}). Consider a vector $\mathbf{q}=\{q_{1}%
,q_{2},...,q_{N}\}$ of real numbers with $0<q_{k}\leq1$. In analogy to
Eqs.(\ref{LL13}-\ref{LL17}) we define recursively
\begin{align}
\rho_{N}^{\prime}(x_{1},x_{2},...,x_{N})  &  :=\rho(x_{1},x_{2},...,x_{N}%
)\label{LL20}\\
\rho_{N-1}^{\prime}(x_{1},x_{2},...,x_{N-1})  &  :=\int\limits_{\Omega_{N}%
}\left[  \rho_{M\text{ }}^{\prime}(x_{1},...,x_{N})\right]  ^{q_{N}}%
dm_{N}(x_{N})....\label{LL21}\\
\rho_{k-1}^{\prime}(x_{1},x_{2},...,x_{k-1})  &  :=\int\limits_{\Omega_{k}%
}\left[  \rho_{k\text{ }}^{\prime}(x_{1},...,x_{k})\right]  ^{q_{k}}%
dm_{k}(x_{k})...\label{LL23}\\
\rho_{1\text{ }}^{\prime}(x_{1})  &  :=\int\limits_{\Omega_{2}}\left[
\rho_{2\text{ }}^{\prime}(x_{1},x_{2})\right]  ^{q_{2}}dm_{2}(x_{2}%
)\label{LL24}\\
N_{\mathbf{q},m}[\rho]  &  :=\int\limits_{\Omega_{1}}\left[  \rho_{1\text{ }%
}^{\prime}(x_{1})\right]  ^{q_{1}}dm_{1}(x_{1}) \label{LL25}%
\end{align}
Observe that the mapping $\rho\rightarrow N_{\mathbf{q},m}[\rho]$ defines a
pseudo-norm on the space of probability density functions (see Appendix,
subsection \ref{markerSubsectionPropertiesOfFunctionalDpmf}). The map
$\rho\rightarrow N_{\mathbf{q},m}[\rho]$ defines a concave function, in the
subset of physically admissible PDF's, when $\rho(x_{1},x_{2},...,x_{N})\geq
0$. The GRE will be defined in analogy to Eqs.(\ref{LL4}, \ref{LL6}) and to
the case $N=2$ from Ref.~\cite{SonninoSteinbrGRE}%
\begin{equation}
S_{\mathbf{q}}^{(2)}[\rho,m]=\frac{1}{1-q_{N}}\log N_{\mathbf{q},m}%
[\rho];~0<q_{i}<1 \label{LL26}%
\end{equation}
For the sake of simplicity, we will use the extrapolated form of
Eq.~(\ref{LL26}) in all range $q_{k}>0$ of \ the parameters $\{q_{1}%
,...,q_{N}\}$ allowing to relate $S_{\mathbf{q}}^{(2)}[\rho,m]$ to
$S_{\mathbf{p}}^{(1)}[\rho,m]$. We obtain
\begin{align}
S_{\mathbf{q}}^{(2)}[\rho,m]  &  =S_{\mathbf{p}}^{(1)}[\rho,m]\label{LL27}\\
N_{\mathbf{q},m}[\rho]  &  =\left[  \left\Vert \rho\right\Vert _{\mathbf{p}%
,m}\right]  ^{p_{1}} \label{LL28}%
\end{align}
when $p_{i}$ and $q_{i}$ are related as follows
\begin{align}
q_{N}  &  =p_{N}\label{LL29}\\
q_{N-1}  &  =\frac{p_{N-1}}{p_{N}}...\label{LL30}\\
q_{k}  &  =\frac{p_{k}}{p_{k+1}},...\label{LL31}\\
q_{1}  &  =\frac{p_{1}}{p_{2}} \label{LL32}%
\end{align}

\begin{remark}
\label{RemDomain_p_q} The algebraic equations for the Lagrange multipliers
associated to maximal entropy problem are very complicated in the general
case, nevertheless from the convexity or concavity properties the uniqueness
of the solution follows. According to Eqs.(\ref{LL27}-\ref{LL32}), we are in
the domain when $\rho\rightarrow\left\Vert \rho\right\Vert _{\mathbf{p},m}$ is
a convex functional when
\begin{equation}
p_{k}=\prod\limits_{j=k}^{N}q_{j}\geq1 \label{LL33}%
\end{equation}
In this case, the problem of maximal entropy with linear restriction is
equivalent to minimization of a positive convex function and has unique
solution. In the domain $0<q_{k}<1$, where the map $\rho\rightarrow
N[\rho]_{\mathbf{q,}m}$ is a concave function, the generalized MaxEnt problem
is equivalent to the maximization of a concave function with linear
restriction. If the solution exists, it is unique. In the more general case
exposed below the problem of uniqueness deserves further study.
\end{remark}

To have a more compact, more general extension of both the definitions of
$\left\Vert \rho\right\Vert _{\mathbf{p},m}$ in Eqs.(\ref{LL13} -\ref{LL17})
and $N_{\mathbf{q},m}[\rho]$ in Eqs.(\ref{LL20} -\ref{LL25}), we define a more
general distance functional for a more general range of parameters
$\mathbf{p}$, that are generalizations of the functional defined in
Eq.(\ref{LL4.2}). We shall use the notation of Eq.(\ref{LL4.1}), so the
generalization of Eqs.(\ref{LL13}-\ref{LL17}, \ref{LL20}-\ref{LL25}) are the
following (for another equivalent construction see Appendix subsection
\ref{markerSubsectionPropertiesOfFunctionalDpmf})
\begin{align}
f_{N}(,x_{1},x_{2},...,x_{N})  &  :=f(x_{1},x_{2},...,x_{N})\label{33.1}\\
f_{N-1}(x_{1,},x_{2},...,x_{N-1})  &  :=\left[  \int\limits_{\Omega_{N}
}\left\vert f_{N}(x_{1},...,x_{N})\right\vert ^{p_{N}}dm_{N}(x_{N})\right]
^{i(p_{N})}....\label{33.2}\\
f_{k-1}(x_{1},x_{2},...,x_{k-1})  &  :=\left[  \int\limits_{\Omega_{k}
}\left\vert f_{k\text{ }}(x_{1},...,x_{k})\right\vert ^{p_{k}}dm_{k}
(x_{k})\right]  ^{i(p_{k})}...\label{33.3}\\
f_{1}(x_{1})  &  :=\left[  \int\limits_{\Omega_{2}}\left\vert f_{2\text{ }
}(x_{1},x_{2})\right\vert ^{p_{2}}dm_{2}(x_{2})\right]  ^{i(p_{2}
)}\ \label{33.4}\\
D_{\mathbf{p},m}\left[  f\right]   &  :=\left[  \int\limits_{\Omega_{1}
}\left\vert f_{1\text{ }}(x_{1})\right\vert ^{p_{1}}dm_{1}(x_{1})\right]
^{i(p_{1})}\ \label{33.5}%
\end{align}
We have the following important properties

\begin{proposition}
\label{markerPropositionD[f]Pseudonorm}The functional $f\rightarrow
D_{\mathbf{p},m}\left[  f\right]  $ is a pseudo norm
\begin{align}
D_{\mathbf{p},m}\left[  \alpha f\right]   &  =\left\vert \alpha\right\vert
^{s}D_{\mathbf{p},m}\left[  f\right]  ;~\alpha\in\mathbb{R}\label{33.6}\\
D_{\mathbf{p},m}\left[  f+g\right]   &  \leq D_{\mathbf{p},m}\left[  f\right]
+D_{\mathbf{p},m}\left[  g\right] \label{33.7}\\
\left\vert D_{\mathbf{p},m}\left[  f\right]  -D_{\mathbf{p},m}\left[
g\right]  \right\vert  &  \leq D_{\mathbf{p},m}\left[  f-g\right]
\label{33.8}%
\end{align}
where the homogeneity degree from Eq.(\ref{33.6}) \ is~ $s=\prod
\limits_{k=1}^{N}\left[  i(p_{k})p_{k}\right]  $
\end{proposition}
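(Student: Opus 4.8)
The plan is to recognize $D_{\mathbf{p},m}[f]$ as a composition of $N$ elementary one-variable operators and to propagate the scalar properties of each operator through the composition. For $1\le k\le N$, define the operator $T_{k}$ acting on a nonnegative measurable function $h$ of the variables $(x_{1},\dots,x_{k})$ by $(T_{k}h)(x_{1},\dots,x_{k-1}):=\left[\int_{\Omega_{k}}\left\vert h(x_{1},\dots,x_{k})\right\vert^{p_{k}}dm_{k}(x_{k})\right]^{i(p_{k})}$, so that, comparing with Eqs.(\ref{33.1}--\ref{33.5}), one has $f_{k-1}=T_{k}f_{k}$ and $D_{\mathbf{p},m}[f]=(T_{1}\circ T_{2}\circ\cdots\circ T_{N})[f]$. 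For each fixed value of the remaining arguments $(x_{1},\dots,x_{k-1})$, the operator $T_{k}$ is precisely the one-variable functional $D_{p_{k},m_{k}}$ of Eq.(\ref{LL4.2}) applied to the fiber $x_{k}\mapsto h(x_{1},\dots,x_{k})$. Hence the scalar identities collected in Remark~\ref{MarkerRemarkN_p_Definition}, namely Eqs.(\ref{LL4.4}--\ref{LL4.6}), hold for each $T_{k}$ fiberwise, and these are the only ingredients I would use.

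The homogeneity Eq.(\ref{33.6}) is immediate by layer-by-layer propagation. From Eq.(\ref{LL4.5}), each $T_{k}$ is positively homogeneous of degree $s(p_{k})=p_{k}i(p_{k})$, that is $T_{k}[\alpha h]=\left\vert\alpha\right\vert^{p_{k}i(p_{k})}T_{k}[h]$. As $\alpha f$ is passed successively through $T_{N},T_{N-1},\dots,T_{1}$, the scalar factor accumulated on $\left\vert\alpha\right\vert$ becomes the product of the individual degrees, so that $D_{\mathbf{p},m}[\alpha f]=\left\vert\alpha\right\vert^{\prod_{k=1}^{N}p_{k}i(p_{k})}D_{\mathbf{p},m}[f]$, which is exactly the claimed exponent $s=\prod_{k=1}^{N}[i(p_{k})p_{k}]$.

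The subadditivity Eq.(\ref{33.7}) is the main point, and I would establish it by downward induction on the layer index. Besides the fiberwise subadditivity Eq.(\ref{LL4.4}), the operators $T_{k}$ enjoy a second property that is essential here: \emph{monotonicity} on nonnegative functions, i.e. $0\le h_{1}\le h_{2}$ implies $T_{k}h_{1}\le T_{k}h_{2}$, which follows since $t\mapsto t^{p_{k}}$ is increasing on $[0,\infty)$ and the outer exponent $i(p_{k})>0$ preserves order. Writing $\Phi_{k}:=T_{k}\circ T_{k+1}\circ\cdots\circ T_{N}$, so that $\Phi_{1}=D_{\mathbf{p},m}$, the inductive claim is $\Phi_{k}[f+g]\le\Phi_{k}[f]+\Phi_{k}[g]$ pointwise. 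The base case $k=N$ is exactly Eq.(\ref{LL4.4}) applied on each fiber. For the inductive step one composes: the inductive hypothesis gives $\Phi_{k+1}[f+g]\le\Phi_{k+1}[f]+\Phi_{k+1}[g]$, then monotonicity of $T_{k}$ yields $T_{k}\Phi_{k+1}[f+g]\le T_{k}(\Phi_{k+1}[f]+\Phi_{k+1}[g])$, and finally subadditivity of $T_{k}$ yields $T_{k}(\Phi_{k+1}[f]+\Phi_{k+1}[g])\le T_{k}\Phi_{k+1}[f]+T_{k}\Phi_{k+1}[g]$, i.e. $\Phi_{k}[f+g]\le\Phi_{k}[f]+\Phi_{k}[g]$. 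Taking $k=1$ delivers Eq.(\ref{33.7}). Finally, Eq.(\ref{33.8}) follows by the standard argument: from Eq.(\ref{33.7}), $D_{\mathbf{p},m}[f]=D_{\mathbf{p},m}[(f-g)+g]\le D_{\mathbf{p},m}[f-g]+D_{\mathbf{p},m}[g]$, and symmetrically with $f,g$ interchanged, using Eq.(\ref{33.6}) with $\alpha=-1$ so that $D_{\mathbf{p},m}[g-f]=D_{\mathbf{p},m}[f-g]$; combining gives $\left\vert D_{\mathbf{p},m}[f]-D_{\mathbf{p},m}[g]\right\vert\le D_{\mathbf{p},m}[f-g]$.

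The step I expect to require the most care is the inductive proof of subadditivity, specifically the interplay between monotonicity and subadditivity of the $T_{k}$: monotonicity is precisely what allows the inner triangle inequality, which produces a sum inside the next integral, to be pushed through the outer operator, and its role is easy to overlook. A secondary technical point is to guarantee that all intermediate functions $f_{k}$ are measurable and finite on the relevant domain, so that the applications of Eqs.(\ref{LL4.4}--\ref{LL4.6}) are legitimate; this is where the membership of $\rho$ in the appropriate intersection of $L^{p}$ spaces enters.
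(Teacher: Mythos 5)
Your proof is correct and is essentially the paper's own argument: the paper also proceeds by backward induction over the layers (packaged in Appendix \ref{markerSubsectionPropertiesOfFunctionalDpmf} as nested metric vector spaces $E_{k}=L^{p_{k}}(\Omega_{k},m_{k},E_{k+1},\Delta_{k+1})$ via Proposition \ref{markerRemarkDistFunctionTriangleHomogeneity}), with each inductive step combining the inner triangle inequality with the scalar inequality $(a+b)^{p}\le a^{p}+b^{p}$ for $0<p\le1$ or Minkowski for $p\ge1$, and accumulating the homogeneity degrees $p_{k}i(p_{k})$ into the product $s$, exactly as your operators $\Phi_{k}=T_{k}\circ\cdots\circ T_{N}$ do. Your only departures are presentational: you phrase the nested-space construction as an operator composition, and you make explicit the monotonicity of $T_{k}$ on nonnegative functions, which the paper uses tacitly in passing from $\delta(f+g)\le\delta(f)+\delta(g)$ to the integrated bound in Eq.(\ref{dd3.2}).
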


For proof see Appendix, Subsection
\ref{markerSubsectionPropertiesOfFunctionalDpmf}

\noindent Always is it possible to relate the functionals $D_{\mathbf{p}%
,m}\left[  f\right]  $ by $N_{\mathbf{q},m}\left[  f\right]  $. By comparing
\ Eqs.(\ref{LL20}-\ref{LL25}) we obtain
\begin{equation}
D_{\mathbf{p},m}\left[  f\right]  =\left[  N_{\mathbf{q},m}[f]\right]
^{i(p_{1})} \label{33.9}%
\end{equation}
where the relation between exponents $\mathbf{q=}(q_{1},...,q_{N})$ and
$\mathbf{p}=(p_{1},...,p_{N})$ is
\begin{align}
q_{N}  &  =p_{N}\label{33.10}\\
q_{N-1}  &  =p_{N-1}i(p_{N})\label{33.11}\\
q_{N-2}  &  =p_{N-2}i(p_{N-1})\label{33.12}\\
&  ...\\
q_{2}  &  =p_{2}i(p_{3})\label{33.13}\\
q_{1}  &  =p_{1}i(p_{2}) \label{33.16}%
\end{align}
From previous equations we have that the GRE can be expressed always either by
functional $D_{\mathbf{p},m}\left[  \rho\right]  $ or by the functional
$N_{\mathbf{q},m}[\rho]$
\begin{equation}
S_{\mathbf{q}}^{(2)}[\rho,m]=\frac{1}{1-q_{N}}\log N_{\mathbf{q},m}%
[\rho]=S_{\mathbf{q}}^{(2)}[\rho,m]=\frac{1}{i(p_{1)}(1-p_{N})}\log
D_{\mathbf{p},m}\left[  f\right]  \label{33.17}%
\end{equation}

\begin{remark}
Beside the previous definitions a more complete information about PDF can be
obtained from $S_{\mathbf{q}}^{(2)}[\rho^{(perm)},m]$ where ~$\rho
^{(perm)}(x_{1},...,x_{N}):=\rho(x_{T(1)},...,x_{T(N)})$ and the map
$k\rightarrow T(k)$ is a permutation of $N$ indices.
\end{remark}

\subsection{Properties of the GRE}

\subsubsection{Extensivity, in the classical sense}

The extensivity follows from the multiplicative property of the norms
$\left\Vert \rho\right\Vert _{\mathbf{p}}$ or pseudo-norms $N[\rho
]_{\mathbf{q}}$. Suppose that for all of measure spaces $(\Omega
_{j},\mathcal{A}_{j},m_{j})$ that appear in Eq.(\ref{LL11}) we have the
splitting
\begin{equation}
(\Omega_{j},\mathcal{A}_{j},m_{j})=(\Omega_{j}^{(1)},\mathcal{A}_{j}%
^{(1)},m_{j}^{(1)})\otimes(\Omega_{j}^{(2)},\mathcal{A}_{j}^{(2)},m_{j}^{(2)})
\label{LL34}%
\end{equation}
that means in particular that $\Omega_{j}=\Omega_{j}^{(1)}\times\Omega
_{j}^{(2)}$, $x_{j}=\{x_{j}^{(1)},x_{j}^{(2)}\}\in\Omega_{j}$,
\begin{equation}
dm_{j}(x_{j})=dm_{j}(x_{j}^{(1)},x_{j}^{(2)})=dm_{j}^{(1)}(x_{j}^{(1)}%
)dm_{j}^{(2)}(x_{j}^{(2)}) \label{LL35}%
\end{equation}
Accordingly we have the splitting of the phase space $\Omega$
\begin{align}
\Omega &  =\Omega^{(1)}\times\Omega^{(2)}\label{LL36}\\
dm(\mathbf{x)}  &  \mathbf{=}dm^{(1)}(\mathbf{x}^{(1)}\mathbf{)}%
dm^{(2)}(\mathbf{x}^{(2)}\mathbf{)}\label{LL37}\\
\mathbf{x}^{(a)}  &  =\{x_{1}^{(a)},x_{2}^{(a)}...,x_{N}^{(a)}\};~a=\overline
{1,2} \label{LL38}%
\end{align}
where
\begin{align}
\Omega^{(a)}  &  =\Omega_{1}^{(a)}\times\Omega_{2}^{(a)}\times...\times
\Omega_{N}^{(a)};~a=\overline{1,2}\label{LL39}\\
dm^{(a)}(\mathbf{x}^{(a)}\mathbf{)}  &  \mathbf{=}\prod\limits_{k=1}^{N}%
dm_{j}^{(a)}(x_{j}^{(a)});~a=\overline{1,2} \label{LL40}%
\end{align}
or in compact notation
\begin{equation}
(\Omega,\mathcal{A},m)=(\Omega^{(1)},\mathcal{A}^{(1)},m^{(1)})\otimes
(\Omega^{(2)},\mathcal{A}^{(2)},m^{(2)}) \label{LL41}%
\end{equation}
Suppose that the PDF is also factorized
\begin{equation}
\rho(\mathbf{x})=\rho_{1}(\mathbf{x}^{(1)})\rho_{2}(\mathbf{x}^{(2)})
\label{LL42}%
\end{equation}
Then we have the following relations, for all values of the parameters
$\{p_{1},...,p_{N}\}$ or $\{q_{1},...,q_{N}\}\,\ $such that the integrals make
sense
\begin{align}
\left\Vert \rho\right\Vert _{\mathbf{p},m}  &  =\left\Vert \rho_{1}\right\Vert
_{\mathbf{p},m_{1}}\left\Vert \rho_{2}\right\Vert _{\mathbf{p},m_{2}%
}\label{LL43}\\
N_{\mathbf{q},m}[\rho]  &  =N_{\mathbf{q},m_{1}}[\rho_{1}]N_{\mathbf{q},m_{2}%
}[\rho_{2}]\label{LL44}\\
S_{\mathbf{p}}^{(1)}[\rho,m]  &  =S_{\mathbf{p}}^{(1)}[\rho_{1},m_{1}%
]+S_{\mathbf{p}}^{(1)}[\rho_{2},m_{2}]\label{LL45}\\
S_{\mathbf{q}}^{(2)}[\rho,m]  &  =S_{\mathbf{q}}^{(2)}[\rho_{1},m_{1}%
]+S_{\mathbf{q}}^{(2)}[\rho_{2},m_{2}] \label{LL46}%
\end{align}
In the previous relations we defined $S_{\mathbf{p}}^{(1)}[\rho,m]$ and
$S_{\mathbf{q}}^{(2)}[\rho,m]$ according to Eqs.(\ref{LL18}, \ref{LL26}) and
correspondingly
\begin{align}
S_{\mathbf{p}}^{(1)}[\rho_{a},m_{a}]  &  =\frac{p_{1}}{1-p_{N}}\log\left\Vert
\rho_{a}\right\Vert _{\mathbf{p},m_{a}};~\ a=\overline{1,2}\label{LL48}\\
S_{\mathbf{q}}^{(2)}[\rho_{a},m_{a}]  &  =\frac{1}{1-q_{N}}\log N_{\mathbf{q}%
,m_{a}}[\rho_{a}];~\ a=\overline{1,2} \label{LL49}%
\end{align}
For the sake of clarity, consider the following example with $N=3$. Suppose
$0<q_{1},q_{2},q_{3}\leq1$ and $p_{1},p_{2},p_{3}\geq1$. We define
$N_{\mathbf{q},m_{a}}[\rho_{a}]$ respectively $\left\Vert \rho_{a}\right\Vert
_{\mathbf{p},m_{a}}$ with $a=\overline{1,2}$ as follows
\begin{align}
N_{\mathbf{q},m_{a}}[\rho_{a}]  &  =\int\limits_{\Omega_{1}^{(a)}}dm_{1}%
^{(a)}(x_{1}^{(a)})\label{LL50}\\
&  \left[  \int\limits_{\Omega_{2}^{(a)}}dm_{2}^{(a)}(x_{2}^{(a)})\left[
\int\limits_{\Omega_{3}^{(a)}}dm_{3}^{(a)}(x_{3}^{(a)})\rho\left(  x_{1}%
^{(a)},x_{2}^{(a)},x_{3}^{(a)}\right)  ^{q_{3}}\right]  ^{q_{2}}\right]
^{q_{1}}\\
\left\Vert \rho_{a}\right\Vert _{\mathbf{p},m_{a}}^{p_{1}}  &  =\int
\limits_{\Omega_{1}^{(a)}}dm_{1}^{(a)}(x_{1}^{(a)})\label{LL51}\\
&  \left[  \int\limits_{\Omega_{2}^{(a)}}dm_{2}^{(a)}(x_{2}^{(a)})\left[
\int\limits_{\Omega_{3}^{(a)}}dm_{3}^{(a)}(x_{3}^{(a)})\rho\left(  x_{1}%
^{(a)},x_{2}^{(a)},x_{3}^{(a)}\right)  ^{p_{3}}\right]  ^{p_{2}/p_{3}}\right]
^{p_{1}/p_{2}}%
\end{align}

\subsubsection{Particular cases.}

In the following we will omit the measure, when no confusion arise:
$\left\Vert \rho\right\Vert _{\mathbf{p},m}:=\left\Vert \rho\right\Vert
_{\mathbf{p}}$; $N[\rho]_{\mathbf{q},m}:=N[\rho]_{\mathbf{q}}$;
\ $S_{\mathbf{p}}^{(a)}[\rho,m]:=S_{\mathbf{p}}^{(a)}[\rho]$. In the
particular case when $p_{1}=...=p_{N}>1$, or $0<q_{1}=q_{2}=...=q_{N-1}=1$ and
$\ q_{N}<1$ the GRE is equal to the classical R\'{e}nyi entropy from
Eqs.(\ref{LL5}, \ref{LL6}):
\begin{align}
S_{\mathbf{p}}^{(1)}[\rho]  &  =S_{R,p_{N}}[\rho]=\frac{p_{N}}{1-p_{N}}%
\log\left\Vert \rho\right\Vert _{p_{N}}\label{LL52}\\
\left\Vert \rho\right\Vert _{p_{N}}  &  =\left[  \int\limits_{\Omega
}dm(\mathbf{x})\rho(\mathbf{x})^{p_{N}}\right]  ^{1/p_{N}} \label{LL53}%
\end{align}
respectively
\begin{align}
S_{\mathbf{q}}^{(2)}[\rho]  &  =S_{R,q_{N}}[\rho]=\frac{1}{1-q_{N}}\log
N[\rho]_{q_{N}}\label{LL54}\\
N_{q_{N}}[\rho]  &  =\int\limits_{\Omega}dm(\mathbf{x})\rho(\mathbf{x}%
)^{q_{N}} \label{LL55}%
\end{align}
We used the notation from Eqs.(\ref{LL20}-\ref{LL26}). In particular when
$p_{N}\searrow1$ in Eqs.(\ref{LL52}, \ref{LL53}) and when $q_{1}%
=q_{2}=...=q_{N-1}=1;~q_{N}\nearrow1$ in Eqs.(\ref{LL54}, \ref{LL55}),
respectively, we obtain the classical BSE
\begin{equation}
\underset{p_{1}=...=p_{N}\searrow1}{\lim}S_{\mathbf{p}}^{(1)}[\rho
]=\underset{q_{1}=...=q_{N-1}=1;~q_{N}\nearrow1}{\lim}S_{\mathbf{q}}%
^{(2)}[\rho]==-\int\limits_{\Omega}dm(\mathbf{x})\rho(\mathbf{x})\log
\rho(\mathbf{x}) \label{LL56}%
\end{equation}

\subsubsection{Symmetry properties}

Recall that the R\'{e}nyi entropy is invariant under the group $\Gamma
(\Omega,m)$ of invertible transformations of $\Omega$ that preserve the
measure $m$. A measure preserving transformation $\Omega\overset
{T}{\rightarrow}\Omega$ of the measure space $(\Omega,\mathcal{A},m)$ is a
transformation such that for all $A\subset\Omega$, $A\in\mathcal{A}$, we have
$m\left[  T^{-1}\left(  A\right)  \right]  =m\left(  A\right)  $. Define the
operator $U_{T}$ acting on the distribution function as
\begin{align*}
\rho &  \rightarrow U_{T}\rho=\rho^{\prime}\\
\rho^{\prime}(x)  &  :=\rho\left[  T(x)\right]
\end{align*}
Then it is easy to verify that both Tsallis and R\'{e}nyi entropies are
invariant: for all $\rho$ such that $S_{R,q}[\rho]$ is finite we have for all
$T\in\Gamma(\Omega,m)$
\[
S_{R,q}[\rho]=S_{R,q}[U_{T}\rho]
\]
In the case of the classical definition of the R\'{e}nyi entropy, when the
measure space is discrete and the measure $m$ is the counting measure, the
group $\Gamma(\Omega,m)$ is the group generated by permutations of finite
subsets of elements of $\Omega$. Clearly, from Eq.(\ref{LL8.1}), $S_{R,q}%
[\rho]$ is invariant under permutations. This property was one of the axioms
in the axiomatic definitions of the classical R\'{e}nyi entropy.

\noindent We denote by $\Gamma(\Omega_{j}~,m_{j})~$ the group of measures
preserving transformations of the measure space $(\Omega_{j},\mathcal{A}%
_{j},m_{j})$ from the decomposition of $(\Omega,\mathcal{A},m)$ from
Eqs.(\ref{LL9} -\ref{LL11}). Then we have the following

\begin{proposition}
\label{markProp_GRE_invariance} The GRE $S_{\mathbf{p}}^{(1)}[\rho]$,
$S_{\mathbf{q}}^{(2)}[\rho]$, defined by Eqs.(\ref{LL18}, \ref{LL26}), is
invariant under the sub group $\Gamma(\Omega_{1}~,m_{1})\times\Gamma
(\Omega_{2}~,m_{2})\times...\times\Gamma(\Omega_{N}~,m_{N})$ of the full group
$\Gamma(\Omega,m)$. Let $T_{k}$ be a transformation of the space $\Omega_{k}$
that preserves the measure $m_{k}$. In other words $T_{k}\in\Gamma(\Omega
_{k}~,m_{k})$. Define
\[
\rho^{\prime}(x_{1},...,x_{N}):=\rho\left[  T_{1}(x_{1}),...,T_{N}
(x_{N})\right]
\]
Then we have the invariance properties
\begin{align*}
S_{\mathbf{p}}^{(1)}[\rho]  &  =S_{\mathbf{p}}^{(1)}[\rho^{\prime}]\\
S_{\mathbf{q}}^{(2)}[\rho]  &  =S_{\mathbf{q}}^{(2)}[\rho^{\prime}]
\end{align*}
with $S_{\mathbf{p}}^{(1)}[\rho]$, $S_{\mathbf{q}}^{(2)}[\rho]$ defined in
Eqs.(\ref{LL18}, \ref{LL26}).
\end{proposition}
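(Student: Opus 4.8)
The plan is to reduce everything to the single structural feature that both GRE functionals are assembled as a tower of one-variable integrals, Eqs.(\ref{LL13}--\ref{LL17}) for $\left\Vert\rho\right\Vert_{\mathbf{p},m}$ and Eqs.(\ref{LL20}--\ref{LL25}) for $N_{\mathbf{q},m}[\rho]$, in which the $k$-th integration touches only the variable $x_k$ and the measure $m_k$. Since the transformation is the product map $\Phi:=T_1\times\cdots\times T_N$, i.e. $\rho^{\prime}=\rho\circ\Phi$ with $\Phi(x_1,\dots,x_N)=(T_1(x_1),\dots,T_N(x_N))$, each factor $T_k$ acts only on the corresponding slot, so I can peel the tower off from the innermost ($x_N$) to the outermost ($x_1$) integration, applying a measure-preserving change of variables one layer at a time.

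The engine at each layer is the defining property of a measure-preserving map: for $T_k\in\Gamma(\Omega_k,m_k)$ and any nonnegative measurable $g$ on $\Omega_k$ one has $\int_{\Omega_k}g(T_k(x_k))\,dm_k(x_k)=\int_{\Omega_k}g(y)\,dm_k(y)$. To organize the induction I would introduce the partial norms $\tilde\rho_k$ obtained by running the recursion Eqs.(\ref{LL13}--\ref{LL17}) on the transformed density $\rho^{\prime}=\rho\circ\Phi$, and prove the claim that, writing $\Phi_k:=T_1\times\cdots\times T_k$ for the partial product map on $\Omega_1\times\cdots\times\Omega_k$, one has $\tilde\rho_k=\rho_k\circ\Phi_k$ for every $1\le k\le N$. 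The base case $k=N$ is just $\tilde\rho_N=\rho^{\prime}=\rho\circ\Phi_N=\rho_N\circ\Phi_N$. For the inductive step I fix $(x_1,\dots,x_{k-1})$ as spectator parameters, set $g(y):=\rho_k(T_1(x_1),\dots,T_{k-1}(x_{k-1}),y)^{p_k}$, and apply the change of variables in the $x_k$ integral defining $\tilde\rho_{k-1}$ from $\tilde\rho_k$; measure preservation of $T_k$ turns the argument $T_k(x_k)$ back into a dummy $y$, yielding $\tilde\rho_{k-1}(x_1,\dots,x_{k-1})=\rho_{k-1}(T_1(x_1),\dots,T_{k-1}(x_{k-1}))=(\rho_{k-1}\circ\Phi_{k-1})(x_1,\dots,x_{k-1})$. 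Carrying the induction down to $k=1$ and performing the outermost $x_1$-integration with $T_1$ measure-preserving gives $\left\Vert\rho^{\prime}\right\Vert_{\mathbf{p},m}=\left\Vert\rho\right\Vert_{\mathbf{p},m}$; the identical argument run on Eqs.(\ref{LL20}--\ref{LL25}) gives $N_{\mathbf{q},m}[\rho^{\prime}]=N_{\mathbf{q},m}[\rho]$.

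Since the prefactors $p_1/(1-p_N)$ and $1/(1-q_N)$ in Eqs.(\ref{LL18}, \ref{LL26}) depend only on the parameter vectors and not on the density, the equality of the norm and of the pseudo-norm immediately gives $S_{\mathbf{p}}^{(1)}[\rho^{\prime}]=S_{\mathbf{p}}^{(1)}[\rho]$ and $S_{\mathbf{q}}^{(2)}[\rho^{\prime}]=S_{\mathbf{q}}^{(2)}[\rho]$, which is the assertion. The containment of $\Gamma(\Omega_1,m_1)\times\cdots\times\Gamma(\Omega_N,m_N)$ in the full group $\Gamma(\Omega,m)$ follows because a product of measure-preserving factors preserves the product measure $m=\bigotimes_j m_j$ of Eq.(\ref{LL10}).

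The computation itself is routine; the only points requiring care are bookkeeping ones. First, the change of variables is applied with the other coordinates frozen, so I must invoke Tonelli's theorem (all integrands are nonnegative, the $\rho_k$ being built from a nonnegative PDF) to guarantee that each partial norm is a well-defined measurable function of its remaining arguments and that the layered integrations may legitimately be treated one variable at a time. Second, the argument is insensitive to whether a given factor $\Omega_k$ is atomic, Euclidean, or infinite-dimensional, and to whether the exponent at that layer is above or below $1$ — the identity $\int g\circ T_k\,dm_k=\int g\,dm_k$ is the same in all cases — so the single induction covers both $S_{\mathbf{p}}^{(1)}$ and $S_{\mathbf{q}}^{(2)}$, and by the extrapolation Eqs.(\ref{LL27}--\ref{LL32}) the whole parameter range at once. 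I expect the main (mild) obstacle to be precisely this measurability and Tonelli bookkeeping rather than any genuine analytic difficulty.
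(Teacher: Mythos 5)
Your proof is correct. Note that the paper itself offers no proof of this proposition: it is stated and then the reader is referred to the cited references for details, with only the remark (made for the classical R\'enyi/Tsallis case) that invariance is ``easy to verify.'' Your layer-by-layer induction is precisely the argument that remark gestures at, carried out in full: the base case $\tilde\rho_N=\rho_N\circ\Phi_N$, the inductive step via the identity $\int_{\Omega_k}g\circ T_k\,dm_k=\int_{\Omega_k}g\,dm_k$ for nonnegative measurable $g$ (which follows from $m_k[T_k^{-1}(A)]=m_k(A)$ by the usual simple-function and monotone-convergence argument), and the final outermost integration all check out, and the conclusion for $S_{\mathbf{p}}^{(1)}$ and $S_{\mathbf{q}}^{(2)}$ follows since the prefactors depend only on $\mathbf{p}$, $\mathbf{q}$. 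Your attention to the two bookkeeping points is also well placed: the measurability of each partial norm in its remaining arguments is exactly the measurability clause of Tonelli applied layer by layer (with the harmless powers $p_k$, $i(p_k)$ in between), and the containment of the product group in $\Gamma(\Omega,m)$ does require the observation that a product of measure-preserving maps preserves the product measure, which holds on measurable rectangles and extends by uniqueness of the product measure. In short, you have supplied a complete proof of a statement the paper leaves unproven.
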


Fore more details on the symmetry properties of GRE with respect to measure
preserving transformations, see ref. \cite{SGySASGCategory},
\cite{SGYSGstructure}.

\subsubsection{Geometric properties}

Beyond the physical applications, the previous geometric definitions of the
R\'{e}nyi entropy and GRE are more advantageous; in our approach the basic
objects are the norms defined in Eqs.(\ref{LL13}-\ref{LL17}) or pseudo-norms
defined in Eqs.(\ref{LL20} -\ref{LL25}). In the case when $p_{k}\geq1$,
$\mathbf{p}=\{p_{1},...,p_{N}\}$, (see \cite{Besov}) the norm $\left\Vert
.\right\Vert _{\mathbf{p}}$ has the usual properties: for $a\in\mathbf{R}$ we
have $\left\Vert a\rho\right\Vert _{\mathbf{p,}m}=\left\vert a\right\vert
\left\Vert \rho\right\Vert _{\mathbf{p,}m}$, respectively \cite{Besov}
\begin{equation}
\left\Vert \rho_{1}+\rho_{2}\right\Vert _{\mathbf{p,}m}\leq\left\Vert \rho
_{1}\right\Vert _{\mathbf{p,}m}+\left\Vert \rho_{2}\right\Vert _{\mathbf{p,}m}
\label{LL56.1}%
\end{equation}
In particular, it follows the convexity of the mapping $\rho\rightarrow
\left\Vert \rho\right\Vert _{\mathbf{p}}$ : for $0\leq\alpha\leq1$ we have
\begin{equation}
\left\Vert \alpha\rho_{1}+\left(  1-\alpha\right)  \rho_{2}\right\Vert
_{\mathbf{p,m}}\leq\alpha\left\Vert \rho_{1}\right\Vert _{\mathbf{p,m}%
}+\left(  1-\alpha\right)  \left\Vert \rho_{2}\right\Vert _{\mathbf{p,}m}
\label{LL57}%
\end{equation}
In the case $0<q_{k}\leq1$, $\mathbf{q}=\{q_{1},...,q_{N}\}$, the properties
of the pseudo-norms $N[\rho]_{\mathbf{q},m}$ defined in Eqs.(\ref{LL20}
-\ref{LL25}) also allows geometrical interpretations. We have
\begin{equation}
N_{\mathbf{q,m}}[\rho_{1}+\rho_{2}]\leq N[\rho_{1}]_{\mathbf{q,m}}+N[\rho
_{2}]_{\mathbf{q,m}} \label{LL58}%
\end{equation}
This can be proven recursively by using the definition and the simple
inequality $|x+y|^{q}\leq|x|^{q}+|y|^{q}$, with $0<q\leq1$. Instead of
convexity we have the following concavity inequality \textbf{in the first
octant only} ($\rho_{1,2}\geq0$)
\begin{equation}
N_{\mathbf{q,m}}[\alpha\rho_{1}+\left(  1-\alpha\right)  \rho_{2}]\geq\alpha
N_{\mathbf{q,m}}[\rho_{1}]+(1-\alpha)N_{\mathbf{q,m}}[\rho_{2}] \label{LL59}%
\end{equation}
The Eq.(\ref{LL59}) can be proven recursively by using the concavity of the
function $f(x):=x^{q}$ with $0<q\leq1$.

By defining, the distance function between distribution functions $\rho_{1}$
and $\rho_{2}$ in the infinite dimensional space of PDF's by $d(\rho_{1}%
,\rho_{2}):=\left\Vert \rho_{1}-\rho_{2}\right\Vert _{\mathbf{p}}$ for
$p_{k}\geq1$ and $d(\rho_{1},\rho_{2}):=N_{\mathbf{q,m}}[\rho_{1}-\rho_{2}]$
for $0<q_{k}\leq1$, respectively, we have the triangle inequality
\begin{equation}
d(\rho_{1},\rho_{3})\leq d(\rho_{1},\rho_{2})+d(\rho_{2},\rho_{3})
\label{LL59.1}%
\end{equation}
allowing the geometrical interpretation of GRE in term of distance in the
functional space of admissible PDF's. The more general functional
$D_{\mathbf{p},m}(\rho)$ is exposed in Appendix, subsection
\ref{markerSubsectionPropertiesOfFunctionalDpmf}.

\subsection{Applications of GRE.}

\ In the work \cite{SonninoSteinbrGRE} was proven the H-Theorem: \ in the case
$N=2$, the GRE is a Liapunov functional for a class of random dynamical
systems that describe the anomalous transport in plasmas \cite{Balescu1},
\cite{balescu3}. This property can be generalized easily for the $N>2$ case.
Recall that by using the Maximal Entropy principle for the RTE it is possible
to obtain probability distribution functions with algebraic decay, similar to
the derivation of the normal distribution \cite{CLTmaxent}. \ By using the
MaxEnt principle for GRE, a class of PDF with algebraic decay in one variable
(for a particular parameter value it is the symmetric stable Cauchy-Lorentz
distribution), and Gaussian in the second variable was derived
\cite{SonninoSteinbrGRE}. From qualitative point of view such kind of mixed
behavior is typical for the full PDF (the joint PDF of the driving and driven
system) in random linear stochastic processes (see \cite{sgbw}, \cite{sgw}).

By computing the GRE\ of the full PDF of a complex dynamical system, that
contains a Hamiltonian subsystem supposedly driven by an external dynamical
system, it is possible in a systematic way to detect the existence or absence
of the back reaction \cite{SGYSGstructure}.

In the case when $\Omega=\mathbb{R}^{N}$, $dm(x)=d^{N}x$ , the \ $N$
dimensional volume element, it is easy to prove that despite the R\'{e}nyi
entropy is sensitive to the rescaling, the variation of the R\'{e}nyi entropy
\[
S_{R,p,m}[\rho]-S_{R,q,m}[\rho]
\]
is invariant with respect of the full group of affine transformation of the
space $\mathbb{R}^{N}$, including rescaling, so it can \ used as a first
criteria to identify distributions that differs by rescaling and Euclidean
motions. This result can be generalized to GRE. Consider for instance in the
case $N=2$, $\Omega_{k}=\mathbb{R}^{N_{k}}$, $dm_{k}(x_{k})=d^{N_{k}}x_{k}$
with $k=\overline{1,2}$ . \ Let denote
\begin{align*}
N_{q_{1},q_{2}}  &  =\int\limits_{\mathbb{R}^{N_{1}}}d^{N_{1}}\mathbf{x}%
_{1}\left[  \int\limits_{\mathbb{R}^{N_{2}}}d\mathbf{x}_{2}\rho(\mathbf{x}%
_{1},\mathbf{x}_{2})^{q_{2}}\right]  ^{q_{1}}\\
S_{q_{1},q_{2}}  &  =\frac{1}{1-q_{2}}\log N_{q_{1},q_{2}}\\
T_{q_{1},q_{2}}  &  =S_{q_{1},q_{2}}\frac{1-q_{2}}{q_{1}}%
\end{align*}

\bigskip

While $S_{q_{1}q_{2}}$ is not invariant on the general affine group
\begin{align*}
\mathbb{R}^{N_{1}}  &  \ni\mathbf{x}_{1}\rightarrow A_{1}\mathbf{x}%
_{1}+\mathbf{b}_{1}\in\mathbb{R}^{N_{1}}\\
\mathbb{R}^{N_{2}}  &  \ni\mathbf{x}_{2}\rightarrow A_{2}\mathbf{x}%
_{2}+\mathbf{b}_{2}\in\mathbb{R}^{N_{2}}%
\end{align*}
the linear combination of GRE's, \ with $\overline{q_{1}}\neq q_{1}$,
\ $\overline{q_{2}}\neq q_{2}$, defined as follows
\[
T_{\overline{q_{1}},\overline{q_{2}}}-T_{\overline{q_{1}},q_{2}}%
-T_{q_{1},\overline{q_{2}}}+T_{q_{1},q_{2}}%
\]
is an invariant and can be used to the classification of probability distributions.

A strategy to use the R\'{e}nyi distribution is to compute \ for a large range
of parameters. By\ \ \ \ Lemma on Rearrangements from \cite{SGYSGstructure},
in the case of large class of measured spaces, from the equality
\begin{equation}
S_{R,q,m}[\rho_{1}]=S_{R,q,m}[\rho_{2}] \label{aplic1}%
\end{equation}
\textbf{for all }$q\in(a,b)$,results that $\rho_{1},\rho_{2}$ are related by
\begin{equation}
\rho_{1}(x)=\rho_{2}(T(x)) \label{aplic10}%
\end{equation}
where $x\rightarrow T(x)$ is a map (not necessary continuos) that preserves
the measure $m$. This observation allows to identify PDF's that are related by
a measure-preserving coordinate change. \ Nevertheless there exists situations
where the range of distributions $\rho_{2}$ that for given $\rho_{1}$
\ satisfy Eq.(\ref{aplic1}) is too large.\ Let a fixed measure space
$(\Omega,\mathcal{A},m)$ and consider two separate time series $x(t),y(t)\in
\Omega$ , \ that that is either a stochastic process or a deterministic
process with random initial conditions, denote $x_{k}=x(t_{k})$,
$y_{k}=y(t_{k})$. Denote by $\rho^{(x)}(x_{1},x_{2})$ respectively by
$\rho^{(y)}(y_{1},y_{2})$ the corresponding joint PDF. If their R\'{e}nyi
entropies are equal, or, by Eq.(\ref{LL4.3}, \ref{LL5})%
\[
\int\limits_{\mathbb{\Omega}}dx_{1}\int\limits_{\mathbb{\Omega}}dx_{2}%
\rho^{(x)}(x_{1},x_{2})^{q}=\int\limits_{\mathbb{\Omega}}dy_{1}\int
\limits_{\mathbb{\Omega}}dy_{2}\rho^{(y)}(y_{1},y_{2})^{q}%
\]
for all $q\in(a,b)$, according to Lemma on Rearrangements from
\cite{SGYSGstructure} there exists a map $\ \Omega\times\Omega\ni$
$(x_{1},x_{2})\rightarrow(y_{1},y_{2})=T(x_{1},x_{2})\in\Omega\times\Omega$
that preserves the measure $dm(x_{1})dm(x_{2})=dm(y_{1})dm(y_{2})$ such that%
\begin{equation}
\rho^{(y)}(T(x_{1},x_{2}))=\rho^{(x)}(x_{1},x_{2}) \label{aplic20}%
\end{equation}

The degree of indeterminacy from Eq.(\ref{aplic20}) can be reduced by if we
compare the numerical values of the GRE corresponding to $N=2$ case, or by
Eqs.(\ref{LL20}-\ref{LL26})
\begin{align}
\int\limits_{\mathbb{\Omega}}dx_{1}\left[  \int\limits_{\mathbb{\Omega}}%
dx_{2}\rho^{(x)}(x_{1},x_{2})^{q_{2}}\right]  ^{q_{1}}  &  =\int
\limits_{\mathbb{\Omega}}dy_{1}\left[  \int\limits_{\mathbb{\Omega}}dy_{2}%
\rho^{(y)}(y_{1},y_{2})^{q_{2}}\right]  ^{q_{1}}\label{aplic30}\\
\int\limits_{\mathbb{\Omega}}dx_{2}\left[  \int\limits_{\mathbb{\Omega}}%
dx_{1}\rho^{(x)}(x_{1},x_{2})^{q_{2}}\right]  ^{q_{1}}  &  =\int
\limits_{\mathbb{\Omega}}dy_{2}\left[  \int\limits_{\mathbb{\Omega}}dy_{1}%
\rho^{(y)}(y_{1},y_{2})^{q_{2}}\right]  ^{q_{1}} \label{aplic40}%
\end{align}

for $\ a_{1}<q_{1}<b_{1}$, $a_{2}<q_{2}<b_{2}$. According to Lemma on
Rearrangements from \cite{SGYSGstructure} there exists a maps $\ \Omega_{1}%
\ni$ $x_{1}\rightarrow y_{1}=T_{1}(x_{1})\in\Omega_{1}$ and $\Omega_{2}\ni$
$x_{2}\rightarrow y_{2}=T_{2}(x_{2})\in\Omega_{2}$\ \ that preserves the
measures $dm(x_{1})=dm(y_{1})$ , $dm(x_{2})=dm(y_{2})$ such that%
\[
\rho^{(y)}(T_{1}(x_{1}),T_{2}(x_{2})))=\rho^{(x)}(x_{1},x_{2})
\]
In conclusion, the use of GRE provide a finer classification of the PDF,
classification that use measure theoretic aspects, in the case when we use the
GRE for a large set of values of the parameters $q_{1},q_{2}$.

In the case of analytic models of parametric destabilizations \cite{RLASATO}%
,\cite{sgbw}, \cite{sgw}, gyrokinetic simulations of the micro instabilities
in the tokamak plasma \cite{YanickXavierGuillaume} or more generally,
stochastic processes related to self-organized criticality
\cite{BakTangWiesenfeld}-\cite{GarciaCarreras}, one of the problems is that at
least in the theoretical models the mean value $\mathbb{E}(|x(t)|)$ is
infinite or practically it is highly fluctuating. In this case the study of
the long time correlation decay can be performed by studying, for example the
GRE of the joint PDF for $x_{k}=x(t_{k})$ where $k=\overline{1,3}$ with
$t_{2}-t_{1}\rightarrow\infty$ and $t_{3}-t_{2}\rightarrow\infty$. In the case
of stationary regime (see Eq.(\ref{LL50})) we have the asymptotic
factorization
\begin{align*}
N_{\mathbf{q}}[\rho] &  =\int\limits_{\mathbb{R}}dx_{1}\left[  \int
\limits_{\mathbb{R}}dx_{2}\left[  \int\limits_{\mathbb{R}}dx_{3}\rho
(x_{1},x_{2},x_{3})^{q_{3}}\right]  ^{q_{2}}\right]  ^{q_{1}}\rightarrow\\
&  \left[  \int\limits_{\mathbb{R}}dx_{2}\left[  \rho_{2}(x_{2})\right]
^{q_{2}q_{3}}\right]  ^{q_{1}}\left[  \int\limits_{\mathbb{R}}dx_{3}\rho
_{3}(x_{3})^{q_{3}}\right]  ^{q_{1}q_{2}}\int\limits_{\mathbb{R}}dx_{1}\left[
\rho_{1}(x_{1})\right]  ^{q_{1}q_{2}q_{3}}%
\end{align*}
where
\[
\rho_{1}(x)=\int\limits_{\mathbb{R}}dx_{2}\int\limits_{\mathbb{R}}dx_{3}%
\rho(x,x_{2},x_{3}),...
\]
or in term of entropies we have the decomposition at large time lags%
\begin{align}
(q_{3}-1)S_{q_{1},q_{2},q_{3}}^{(2)}[\rho] &  \rightarrow q_{1}q_{2}%
(q_{3}-1)S_{R,q_{3}}[\rho_{3}]+\label{aplic50}\\
&  q_{1}(q_{2}q_{3}-1)S_{R,q_{2}q_{3}}[\rho_{2}]+(q_{1}q_{2}q_{3}%
-1)S_{R,q_{1}q_{2}q_{3}}[\rho_{1}]\nonumber
\end{align}
The speed of convergence in Eq.(\ref{aplic50}) can be used to characterize the
correlation decay in stochastic processes, where the usual mean values diverge.

We remark also that in the cases when limiting values for $q_{k}\searrow0$
and/or $q_{j}\rightarrow\infty$ , the GRE\ give information on the support and
extreme value properties of the PDF.

\section{Bound and stability properties of
GRE.\label{markerSectionGREboundStabiltiy}}

By the results and notations from subsection
\ref{markerSubsectLogCcnvManyVriables}, the problem of boundednes of GRE can
be treated directly. In the following we consider, according to the notations
from subsection \ref{markerSubsectGREdeffinition}, the measure space
$(\Omega,\mathcal{A},m)$ Eq.(\ref{LL10}, \ref{LL11}) and probability measure
$dP(\mathbf{x})=\rho dm$ from Eq.(\ref{LL12}). Consider the set of exponents
$(q_{1},...,q_{N}):=\mathbf{q}$ the associated functional $N_{\mathbf{q},m}$
from Eq.(\ref{LL25}) and the hyper rectangle $D_{N}\subset\mathbb{R}^{N}$
defined in Eq.(\ref{logconv180}), Appendix
\ref{markerSubsectLogCcnvManyVriables} and suppose, for technical reasons,
that we are in the generic case: $D_{N}$ has non zero volume. Suppose in the
continuation that at least one of the vertices of the hyper rectangle $D_{N}$
contains the point $q_{i}=1;~1\leq i\leq N$, and denote this point by
$\mathbf{u}$. Due to the normalization condition PDF we have $N_{\mathbf{u}%
,m}(\rho)=1$, so it is plausible to suppose that the functional $N_{\mathbf{q}%
,m}$ is defined also in some neighborhood of $\mathbf{u}=(1,...1)$. According
to Theorem \ref{markerTheoremAppLogConvNvar} the function $\mathbf{q}%
\rightarrow N_{\mathbf{q},m}(\rho)$ is log-convex in the variable $\mathbf{q}%
$. Suppose that on the vertices of the hyper rectangle $D_{N}$ we have the
bounds Eqs.(\ref{logconv190}), where in our case
\begin{equation}
g(w_{1},...w_{N})=N_{\mathbf{w},m}(\rho) \label{LL60}%
\end{equation}
Then, according to the Corollary \ref{markerCorolaryLogconvBoundNvar}, we have
the bounds Eq.(\ref{logconv220.1}) (with the notations in subsection
\ref{markerSubsectLogCcnvManyVriables}, Eqs.(\ref{logconv200}-\ref{1300},
\ref{1350})
\begin{equation}
\log N_{\mathbf{w},m}(\rho)\leq b_{N}(\mathbf{w});\ \mathbf{w}\in D_{N}
\label{LL61}%
\end{equation}
\newline Denote by $V_{N}$ the set of vertices of the hyper rectangle $D_{N}$
and by $V_{N}^{\prime}$ the set of vertices excepting the vertex
$\mathbf{u}=(1,...1)$. Consider \ the exact PDF $\rho(\mathbf{x})$ and the
approximating sequence $\rho_{n}(\mathbf{x})$
\begin{equation}
\int\limits_{\Omega}dm(\mathbf{x})\left\vert \rho_{n}(\mathbf{x}%
)-\rho(\mathbf{x})\right\vert dm(\mathbf{x})\leq\varepsilon_{n}\underset
{n\rightarrow\infty}{\rightarrow}0 \label{LL62}%
\end{equation}
or, equivalently, (Eqs. \ref{LL20}-\ref{LL25})
\begin{equation}
N_{\mathbf{u},m}(\rho_{n}-\rho)\leq\varepsilon_{n}\underset{n\rightarrow
\infty}{\rightarrow}0 \label{LL63}%
\end{equation}
Suppose that on the rest of the vertices $\mathbf{v}\in V_{N}^{\prime}$ we
have the bounds
\begin{align}
N_{\mathbf{v},m}(\rho_{n})  &  \leq B;~\mathbf{v}\in V_{N}^{\prime
}\label{LL64}\\
N_{\mathbf{v},m}(\rho)  &  \leq B;;~\mathbf{v}\in V_{N}^{\prime} \label{LL65}%
\end{align}
where $B$ is a constant. Denote by $D_{N}^{\prime}$ the subset of the hyper
rectangle the set $Int\ (D_{N})$, the set of interior points of $D_{N}$,
defined by $Int(D_{N}):=\{\mathbf{w}|a_{k}^{(1)}<w_{k}<a_{k}^{(2)}%
~;~k=\overline{1,N}\}$. By our previous technical assumption, $D_{N}^{\prime}$
is non void. The following stability result will be proved:

\begin{proposition}
\label{markerPropositionGREstability}Under previous conditions Eqs.(\ref{LL62}
-\ref{LL65}) for all $\mathbf{w\in}D_{N}^{\prime}$ we have \
\begin{align}
&  N_{\mathbf{w},m}(\rho_{n})\underset{n\rightarrow\infty}{\rightarrow
}N_{\mathbf{w},m}(\rho)\label{LL67}\\
&  S_{\mathbf{q}}^{(2)}[\rho_{n},m]\underset{n\rightarrow\infty}{\rightarrow
}S_{\mathbf{q}}^{(2)}[\rho,m] \label{LL68}%
\end{align}

\end{proposition}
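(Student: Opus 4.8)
The plan is to reduce the whole statement to a single multivariate interpolation estimate on the error field $f_n := \rho_n - \rho$, in exact analogy with the one-parameter Propositions \ref{markerPropstionRenyiContinuity} and \ref{markerPropLpConvL1bound}, but using the log-convexity of $\mathbf{w}\to N_{\mathbf{w},m}$ on the hyper rectangle $D_N$ (Theorem \ref{markerTheoremAppLogConvNvar} and Corollary \ref{markerCorolaryLogconvBoundNvar}) in place of the scalar interpolation Lemma \ref{markerInterpolationLemma}.

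First I would control $f_n$ at every vertex of $D_N$. At the distinguished vertex $\mathbf{u}=(1,\dots,1)$ every exponent equals $1$, so $i(p_k)=1$ and the functional collapses to the $L^1$ norm; hypothesis Eq.(\ref{LL63}) then gives directly $N_{\mathbf{u},m}(f_n)\leq\varepsilon_n\to 0$. At each remaining vertex $\mathbf{v}\in V_N'$ I would invoke the triangle inequality for the associated pseudo-norm (Proposition \ref{markerPropositionD[f]Pseudonorm}, Eq.(\ref{33.7})), namely $D_{\mathbf{p}(\mathbf{v}),m}[f_n]\leq D_{\mathbf{p}(\mathbf{v}),m}[\rho_n]+D_{\mathbf{p}(\mathbf{v}),m}[\rho]$, where $\mathbf{p}(\mathbf{v})$ is the exponent vector linked to $\mathbf{v}$ through Eqs.(\ref{33.10}--\ref{33.16}). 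Combined with the bounds Eqs.(\ref{LL64}, \ref{LL65}) and the algebraic identity Eq.(\ref{33.9}) relating $D$ and $N$, this yields a uniform bound $N_{\mathbf{v},m}(f_n)\leq B'$ with $B'$ depending only on $B$ and the fixed exponents. Thus $f_n$ is finite and uniformly bounded in $n$ at all vertices except $\mathbf{u}$, where its value tends to $0$.

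The decisive step is the interpolation itself. Applying Theorem \ref{markerTheoremAppLogConvNvar} and Corollary \ref{markerCorolaryLogconvBoundNvar} to $f_n$ on $D_N$ gives, for each interior point $\mathbf{w}\in D_N'$, the bound $\log N_{\mathbf{w},m}(f_n)\leq b_N(\mathbf{w})$, with $b_N(\mathbf{w})$ the multilinear interpolation of the vertex values $\log N_{\mathbf{v},m}(f_n)$. Since multilinear interpolation on a hyper rectangle assigns to each vertex a weight equal to a product of normalized coordinates, the weight $\lambda(\mathbf{w})$ attached to $\mathbf{u}$ is strictly positive for every interior $\mathbf{w}$ — this is precisely where the genericity assumption that $D_N$ has non-zero volume is used. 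Separating the $\mathbf{u}$-contribution from the bounded remainder gives $\log N_{\mathbf{w},m}(f_n)\leq \lambda(\mathbf{w})\log\varepsilon_n+C$, hence $N_{\mathbf{w},m}(f_n)\leq C'\,\varepsilon_n^{\lambda(\mathbf{w})}\to 0$, equivalently $D_{\mathbf{p}(\mathbf{w}),m}[f_n]\to 0$.

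Finally I would upgrade this to convergence of the entropy. The reverse triangle inequality Eq.(\ref{33.8}) gives $\big|D_{\mathbf{p}(\mathbf{w}),m}[\rho_n]-D_{\mathbf{p}(\mathbf{w}),m}[\rho]\big|\leq D_{\mathbf{p}(\mathbf{w}),m}[f_n]\to 0$, and since $N=D^{1/i(p_1)}$ depends continuously on $D$, this establishes Eq.(\ref{LL67}). Convergence of $S_{\mathbf{q}}^{(2)}$, Eq.(\ref{LL68}), then follows from the definition Eq.(\ref{LL26}) and continuity of the logarithm, provided $N_{\mathbf{w},m}(\rho)>0$. I expect the main obstacle to be this interpolation step: one must verify that the log-convexity theorem genuinely applies to the signed field $f_n$ (which is not a density, so the pseudo-norm inequalities of Proposition \ref{markerPropositionD[f]Pseudonorm} are essential here), and confirm strict positivity of $\lambda(\mathbf{w})$ throughout $D_N'$. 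The strict positivity $N_{\mathbf{w},m}(\rho)>0$ required for the last logarithm is the multivariate analogue of the extra hypothesis $D_{q,m}[\rho]>0$ appearing in Remark \ref{markerREMconvergenceEquivallence}.
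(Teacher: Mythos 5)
Your proposal is correct and takes essentially the same route as the paper's own proof: uniform vertex bounds on $\rho_n-\rho$ obtained from the triangle inequality Eq.(\ref{33.7}) and the $D$--$N$ conversion Eq.(\ref{33.9}), the $\varepsilon_n$ bound at the vertex $\mathbf{u}$, multilinear log-convex interpolation (Corollary \ref{markerCorolaryLogconvBoundNvar}) with strictly positive weight $P_{\mathbf{u}}(\mathbf{w})$ at interior points, then the reverse triangle inequality Eq.(\ref{33.8}) and passage to the entropy. The positivity $N_{\mathbf{w},m}(\rho)>0$ that you flag at the end is exactly what the paper's closing appeal to the normalization of the PDF's $\rho,\rho_n$ is meant to secure, so your proof and the paper's agree on this point as well.
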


\bigskip

\begin{proof}
The bounds from Eqs.(\ref{LL64}, \ref{LL65}) can be translated in terms of
distances $D_{\mathbf{p},m}\left[  \rho\right]  $, by using Eqs.(\ref{33.9}
-\ref{33.16}) :
\begin{align}
D_{\mathbf{v},m}(\rho_{n})  &  \leq B^{\prime};~\mathbf{v}\in V_{N}^{\prime
}\label{LL69}\\
D_{\mathbf{v},m}(\rho)  &  \leq B^{\prime};~\mathbf{v}\in V_{N}^{\prime
}\label{LL70}\\
B^{\prime}  &  =\underset{\mathbf{v}\in V_{N}^{\prime}}{\max}\left(
B^{i(v_{1})}\right) \nonumber
\end{align}
From Eqs.(\ref{33.7}, \ref{LL69}, \ref{LL70}) results
\begin{equation}
D_{\mathbf{v},m}(\rho_{n}-\rho)\leq2B^{\prime};~\mathbf{v}\in V_{N}^{\prime}
\label{LL71}%
\end{equation}
This set of bounds we rewrite again in the term of $N_{\mathbf{w},m}$, by
using Eqs.(\ref{33.9}-\ref{33.16}):
\begin{align}
N_{\mathbf{v},m}(\rho_{n}-\rho)  &  \leq B";~\mathbf{v}\in V_{N}^{\prime
}\label{LL72}\\
B"  &  =\underset{\mathbf{v}\in V_{N}^{\prime}}{\max}\left(  2B^{\prime
}\right)  ^{1/i(v_{1})}\nonumber
\end{align}
Now we have bounds on all of the vertices of $D_{N}$ and we can use Corrolary
\ref{markerCorolaryLogconvBoundNvar} and Eq.(\ref{1350}) with $g(\mathbf{w}%
)=N_{\mathbf{w},m}(\rho_{n}-\rho)$. We get:
\begin{align}
\log N_{\mathbf{w},m}(\rho_{n}-\rho)  &  \leq b_{N}(\mathbf{w});~\mathbf{w}\in
D_{N}^{\prime}=Int(D_{N})\label{LL73}\\
b_{N}(\mathbf{w})  &  :=\sum\limits_{\mathbf{v}\in V_{N}}P(\mathbf{v}%
,\mathbf{w})\log A_{\mathbf{v}}^{\prime}\label{LL74}\\
0  &  <P(\mathbf{v},\mathbf{w})<1;~\mathbf{w}\in D_{N}^{\prime} \label{LL75}%
\end{align}
Because $V_{N}=V_{N}^{\prime}\cup\{\mathbf{u}\}$ \ the Eq.(\ref{LL74}) can be
rewritten as follows
\begin{align}
b_{N}(\mathbf{w})  &  =\sum\limits_{\mathbf{v}\in V_{N}}P_{\mathbf{v}%
}(\mathbf{w})\log~A_{\mathbf{v}}\label{LL77}\\
0  &  <P_{\mathbf{v}}(\mathbf{w})<1;~\mathbf{w}\in D_{N}^{\prime} \label{LL78}%
\end{align}
and $A_{\mathbf{v}}$ are the bounds on $g(\mathbf{w})$ on the vertices $V_{N}%
$. By using Eqs.(\ref{LL63}, \ref{LL72}) we rewrite Eq.(\ref{LL77}) as
follows
\begin{align}
\log N_{\mathbf{w},m}(\rho_{n}-\rho)  &  \leq b_{N}(\mathbf{w})=P_{\mathbf{u}%
}(\mathbf{w})\log\varepsilon_{n}+K(\mathbf{w});\mathbf{w}\in D_{N}^{\prime
}\label{LL79}\\
K(\mathbf{w})  &  =\sum\limits_{\mathbf{v}\in V_{N}^{\prime}}P_{\mathbf{v}%
}(\mathbf{w})\log~B^{\prime\prime} \label{LL80}%
\end{align}
From Eqs.(\ref{LL62}, \ref{LL78}-\ref{LL80}) results
\[
N_{\mathbf{w},m}(\rho_{n}-\rho)\underset{n\rightarrow\infty}{\rightarrow
}0;\ \mathbf{w}\in D_{N}^{\prime}%
\]
By using Eqs.(\ref{33.9}, \ref{33.16})
\[
D_{\mathbf{w},m}(\rho_{n}-\rho)\underset{n\rightarrow\infty}{\rightarrow
}0;\ \mathbf{w}\in D_{N}^{\prime}%
\]
From Eq.(\ref{33.8}) we obtain
\[
\left\vert D_{\mathbf{w},m}(\rho_{n})-D_{\mathbf{w},m}(\rho)\right\vert \leq
D_{\mathbf{w},m}(\rho_{n}-\rho)\underset{n\rightarrow\infty}{\rightarrow
}0;\ \mathbf{w}\in D_{N}^{\prime}%
\]
and using again Eqs.(\ref{33.9}, \ref{33.16}) we find the requested stability
results
\begin{align*}
D_{\mathbf{w},m}(\rho_{n})\underset{n\rightarrow\infty}{\rightarrow
}D_{\mathbf{w},m}(\rho);\ \mathbf{w}  &  \in D_{N}^{\prime}\\
N_{\mathbf{w},m}(\rho_{n})\underset{n\rightarrow\infty}{\rightarrow
}N_{\mathbf{w},m}(\rho);\ \mathbf{w}  &  \in D_{N}^{\prime}%
\end{align*}
and from Eq.(\ref{33.17}) and normalization of the PDF's $\rho,\rho_{n}$
\[
S_{\mathbf{w}}^{(2)}[\rho_{n},m]\underset{n\rightarrow\infty}{\rightarrow
}S_{\mathbf{w}}^{(2)}[\rho_{n},m];\ \mathbf{w}\in D_{N}^{\prime}%
\]
which completes the proof.
\end{proof}

\begin{remark}
Note that it is possible the extend the proof to the case when the volume of
$D_{N}$ is zero, or to extend the stability proof to the part of boundary of
$D_{N}$, where $P_{\mathbf{u}}(\mathbf{w})>0$.
\end{remark}

\section{Conclusions}

We proved that, in the general case of measure spaces, the entropies defined
by C. Tsallis (TE), A. R\'{e}nyi (RE) as well as the generalized R\'{e}nyi's
entropy (GRE), are well defined concepts and can be computed in a numerically
stable manner for a large range of parameters, if a stabilizing condition is
imposed. However, for the case of Shannon-Boltzmann's entropy (BSE) two
stabilizing conditions are necessary. In all cases the stabilizing conditions
are expressed in the term of finiteness of Lebesgue space $L^{p}$ norm of the
probability density functions. As a mathematical by-product, we proved the
logarithmic convexity of the integrals related to generalized Lebesgue space norms.

\begin{acknowledgement}
The authors are grateful to Prof. M. Van Schoor and Dr D. Van Eester from
Royal Military School, Brussels. Gy. Steinbrecher is grateful to Prof. S.
Ciulli, University Montpelier, G. Nenciu, Institute of Mathematics "Simion
Stoilow" of the Romanian Academy, I. Sabba \c{S}tef\~{a}nescu, University
Karlsruhe and C. Pomponiu for useful discussions. Giorgio Sonnino is also
grateful to Prof. P. Nardone and Dr. P. Peeters of the Universit\'{e} Libre de
Bruxelles (ULB) for useful discussions and suggestions.
\end{acknowledgement}

\section{Appendix \label{markerSectionAppndx}}

\subsection{Analytic extrapolation
theorem.\label{markerSubsectExtrapolTheorem}}

For easy reference, we shall prove in a special case the following theorem
\cite{CiulliStabProblems}, \cite{CiulliNenciu}, \cite{CiulliPhysRep}. Denote
by $D$ the domain in the complex plain $\mathbb{C}$ defined as follows
\begin{equation}
D=\left\{  z|z\in\mathbb{C},~0<\operatorname{Im}(z)<b,~a_{1}<\operatorname{Re}%
(z)<a_{2}\right\}  \label{ae1}%
\end{equation}
The boundary of the domain is $\partial D=\Gamma_{0}\cup\Gamma_{2}\cup
\Gamma_{2}\cup\Gamma_{3}$ \ where
\begin{align*}
\Gamma_{0}  &  =\left\{  z|z\in\mathbb{C},~\operatorname{Im}(z)=0,~a_{1}%
<\operatorname{Re}(z)<a_{2}\right\} \\
\Gamma_{1}  &  =\left\{  z|z\in\mathbb{C},b\geq~\operatorname{Im}%
(z)\geq0,~\operatorname{Re}(z)=a_{1}\right\} \\
\Gamma_{2}  &  =\left\{  z|z\in\mathbb{C},~b\geq~\operatorname{Im}%
(z)\geq0,~\operatorname{Re}(z)=a_{2}\right\} \\
\Gamma_{3}  &  =\left\{  z|z\in\mathbb{C},~\operatorname{Im}(z)=b,~a_{1}%
<\operatorname{Re}(z)<a_{2}\right\}
\end{align*}
Denote by $\mathcal{K}_{\varepsilon}$ the family of analytic functions in $D$,
such that for all $f(z)\in\mathcal{K}_{\varepsilon}$ we have
\begin{align}
|f(z)|  &  \leq\varepsilon;~z\in\Gamma_{0}\label{ae2}\\
|f(z)|  &  \leq m;~z\in\Gamma_{1}\cup\Gamma_{2}\cup\Gamma_{3} \label{ae3}%
\end{align}
We shall prove the following theorem

\begin{theorem}
\label{markerTheoremExtrapol}Let $z_{0}\in D$ \ and the constant $m$ fixed.
Under the previous conditions Eqs.(\ref{ae2}, \ref{ae3}), for fixed $m$ and
for all $f(z)\in\mathcal{K}_{\varepsilon}$ we have
\begin{equation}
|f(z_{0})|\leq\delta(\varepsilon) \label{ae5}%
\end{equation}
where $\delta(\varepsilon)$ does not depend on $f$ and
\begin{equation}
\underset{\varepsilon\rightarrow0}{\lim}\delta(\varepsilon)=0 \label{ae6}%
\end{equation}

\end{theorem}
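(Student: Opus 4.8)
The plan is to reduce the statement to the classical \emph{two-constants theorem}, i.e. to an estimate of $|f(z_0)|$ by a weighted geometric mean of the two boundary bounds $\varepsilon$ and $m$, the weights being supplied by harmonic measure. First I would introduce the harmonic measure $\omega(z)$ of the edge $\Gamma_0$ relative to $D$: the unique bounded harmonic function on $D$ whose boundary values are $1$ on $\Gamma_0$ and $0$ on $\Gamma_1\cup\Gamma_2\cup\Gamma_3$. Its existence is guaranteed because $D$ is a bounded Jordan domain; for the rectangle it can moreover be written down explicitly through the conformal map onto a half-plane. By construction $0\le\omega(z)\le1$, and the strong minimum principle forces $\omega(z)>0$ at every interior point, since $\omega\not\equiv0$ (it equals $1$ on the nondegenerate arc $\Gamma_0$). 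In particular $\alpha:=\omega(z_0)$ satisfies $0<\alpha<1$.

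Next I would compare $\log|f|$ with the harmonic majorant built from $\omega$. Set $h(z):=\omega(z)\log\varepsilon+(1-\omega(z))\log m$, which is harmonic and bounded on $D$, and consider $u(z):=\log|f(z)|-h(z)$. Since $f$ is analytic, $\log|f|$ is subharmonic (the possible zeros of $f$ only push it to $-\infty$, which is harmless), so $u$ is subharmonic on $D$. On $\Gamma_0$ one has $\omega=1$, hence $h=\log\varepsilon$, while the bound $|f|\le\varepsilon$ gives $\limsup\log|f|\le\log\varepsilon$; on the remaining three edges $\omega=0$, hence $h=\log m$, while $\limsup\log|f|\le\log m$. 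Thus $\limsup_{z\to\zeta}u(z)\le0$ for every $\zeta\in\partial D$ (the finitely many corners, being of harmonic measure zero, do not affect the argument). The maximum principle for subharmonic functions on the bounded domain $D$ then yields $u\le0$ throughout $D$, that is
\[
\log|f(z)|\le\omega(z)\log\varepsilon+(1-\omega(z))\log m,\qquad z\in D .
\]

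Evaluating at $z_0$ and exponentiating gives the required bound with
\[
\delta(\varepsilon):=\varepsilon^{\alpha}\,m^{1-\alpha},\qquad \alpha=\omega(z_0)\in(0,1),
\]
a quantity depending only on $z_0$, on $m$, and on the geometry of $D$ through $\alpha$, but \emph{not} on the particular $f\in\mathcal{K}_\varepsilon$. Since $\alpha>0$, we have $\delta(\varepsilon)=m^{1-\alpha}\varepsilon^{\alpha}\to0$ as $\varepsilon\to0$, which is Eq.~(\ref{ae6}). I expect the main obstacle to be the rigorous use of the maximum principle rather than the algebra: one must read the boundary hypotheses Eqs.~(\ref{ae2}, \ref{ae3}) as $\limsup$ conditions along interior approach to $\partial D$, control the behaviour at the corner points, and justify the strict positivity $\omega(z_0)>0$ of the harmonic measure. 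Everything else---the subharmonicity of $\log|f|$ and the elementary estimate of the weighted geometric mean---is routine.
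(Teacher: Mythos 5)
Your proof is correct and rests on the same underlying idea as the paper's: the two-constants theorem, with the harmonic measure of $\Gamma_{0}$ (your $\omega$, the paper's $u_{0}$; the paper's $u_{1}$ is just $1-\omega$) supplying the interpolation weights, and both arguments terminate in the identical bound $\delta(\varepsilon)=\varepsilon^{\omega(z_{0})}m^{1-\omega(z_{0})}$. The implementations differ in one technical step. The paper turns the harmonic interpolant $U_{\varepsilon}=u_{0}\log\varepsilon+u_{1}\log m$ into an analytic, zero-free comparison function $C_{\varepsilon}=\exp(U_{\varepsilon}+iV_{\varepsilon})$ by adjoining a harmonic conjugate $V_{\varepsilon}$, and then applies the maximum modulus principle to the analytic quotient $G=f/C_{\varepsilon}$; you instead apply the maximum principle for subharmonic functions directly to $u=\log|f|-h$ with $h=\omega\log\varepsilon+(1-\omega)\log m$. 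The two are equivalent for the rectangle at hand, but your route is marginally more robust: it does not require simple connectivity of $D$ (needed for a single-valued conjugate $V_{\varepsilon}$), it absorbs possible zeros of $f$ automatically (where $\log|f|=-\infty$), and it makes explicit the boundary issues --- the $\limsup$ reading of Eqs.(\ref{ae2}, \ref{ae3}) and the negligibility of the corner points --- which the paper passes over silently when invoking the maximum modulus principle for $G$. One point you should state explicitly for the corner argument to be legitimate: $u$ must be bounded above on $D$ (equivalently, $f$ bounded on $D$), which is implicit in the definition of $\mathcal{K}_{\varepsilon}$; a $\limsup$ condition holding off a finite exceptional set does not by itself yield the maximum principle for subharmonic functions that are unbounded above.
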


\begin{proof}
Denote by $\Gamma_{m}:=\Gamma_{1}\cup\Gamma_{2}\cup\Gamma_{3}$ and by
$u_{0}(z)$, $u_{1}(z)$ the harmonic functions in $D$ with the following
Dirichlet boundary conditions.
\begin{align}
u_{0}(z)  &  =1;~z\in\Gamma_{0}\label{ae6.1}\\
u_{0}(z)  &  =0;~z\in\Gamma_{m}\label{ae6.2}\\
u_{1}(z)  &  =1;~z\in\Gamma_{m}\label{ae6.3}\\
u_{1}(z)  &  =0;~z\in\Gamma_{0} \label{ae6.4}%
\end{align}
According to the standard terminology \cite{Nevanlinna}, \cite{Oksendal},
\cite{NenciuHamonicMeasure} the function $u_{0}(z)~$\ is the harmonic measure
of the subset $\Gamma_{0}$ while $u_{1}(z)$ is the harmonic measure of
$\Gamma_{m}.$ We remark that
\begin{equation}
0<u_{k}(z_{0})<1;\ k=0,1 \label{ae7}%
\end{equation}
with strict inequalities, because $z_{0}$ is an interior point. This property
is a consequence of the interpretation of $u_{k}(z_{0})$ as hitting
probability of planar Brownian motion \cite{Oksendal}. These inequalities has
also obvious physical meaning in the case when $u_{k}(z)$ are stationary
temperature fields. Denote
\begin{equation}
U_{\varepsilon}(z):=u_{0}(z)\log\varepsilon+u_{1}(z)\log m \label{ae8}%
\end{equation}
Also denote by $V_{\varepsilon}(z)$ the real harmonic conjugate of
$U_{\varepsilon}(z)$, such that the function%
\begin{equation}
K_{\varepsilon}(z):=U_{\varepsilon}(z)+iV_{\varepsilon}(z) \label{ae9}%
\end{equation}
is an analytic function in $D$. Denote
\begin{equation}
C_{\varepsilon}(z):=\exp K_{\varepsilon}(z) \label{ae10}%
\end{equation}
From Eqs.(\ref{ae6.1}-\ref{ae10}) we have that $C_{\varepsilon}(z)$ is
analytic, without zeroes in $D$ and have the following properties
\begin{align}
|C_{\varepsilon}(z)|  &  =\varepsilon;~z\in\Gamma_{0}\label{ae11}\\
|C_{\varepsilon}(z)|  &  =m;~z\in\Gamma_{m}\label{ae12}\\
\underset{^{e\rightarrow0}}{\lim}C_{\varepsilon}(z_{0})  &  =0;~~z_{0}\in D
\label{ae14}%
\end{align}
The function
\begin{equation}
G(z):=f(z)/C_{\varepsilon}(z) \label{ae14.1}%
\end{equation}
is analytic in $D$ and from Eqs.(\ref{ae2}, \ref{ae3}, \ref{ae11}, \ref{ae12})
we have that $|G(z)|\leq1$ for all $z\in\partial D$. From the maximum modulus
principle \cite{Rudin} we get
\begin{equation}
|G(z)|\leq1;~\forall z\in\partial D\cup D \label{ae15}%
\end{equation}
By denoting $\delta(\varepsilon):=C_{\varepsilon}(z_{0})$, from
Eqs.(\ref{ae14.1}, \ref{ae15}) we have that $|f(z_{0}|\leq$ $\delta
(\varepsilon)$, which according to Eq.(\ref{ae14}) completes the proof.
\end{proof}

\bigskip

\subsection{Some convergence results \label{markerSubsectCalculSumeIntegrale}}

\subsubsection{Discrete case, counterexample 1, subsection
\ref{markerSubsectCounterxDiscrete} \label{markerAppendCounterexDiscrete}}

\paragraph{ Proof of the inequalities (\ref{LL8.8}, \ref{LL8.11})}

It is sufficiently to prove Eq.(\ref{LL8.11}), the convergence in
Eq.(\ref{LL8.8}) results by comparison: $K_{n}<K$. Because in Eq.(\ref{LL8.11}%
) the sequence of $p_{k}$ is monotone decreasing, we use the integral
criteria: we have to prove that
\begin{equation}
K<\underset{N\rightarrow\infty}{\lim}\int\limits_{0}^{N}dk\frac{1}%
{(k+4)\left[  \log\left(  k+4\right)  \right]  ^{2}}<M \label{acex1}%
\end{equation}
for some $M>0$. By change of variable
\begin{equation}
k=\exp(x)-4 \label{chvar}%
\end{equation}
the Eq.(\ref{acex1}) is reduced to the obvious bound for $N$ large
\[
\int\limits_{\log(4)}^{\log(N+4)}\frac{dx}{x^{2}}<M
\]
In conclusion $M=1/\log(4)$ and
\begin{equation}
K_{n}<K<M \label{acex1.1}%
\end{equation}
By considering only the first term in the infinite sum, we also obtain from
Eq.(\ref{LL8.8}),
\begin{equation}
\frac{1}{5[\log(5)]^{2+1/n}}<K_{n} \label{acex1.2}%
\end{equation}

\paragraph{Proof of Eq.(\ref{LL8.12})}

We use Eqs.(\ref{LL8.4}, \ref{LL8.7}, \ref{LL8.8})
\begin{align}
S_{cl}[\mathbf{p}^{(n)}]  &  =A^{(n)}+B^{(n)}+C^{(n)}\label{cex2}\\
A^{(n)}  &  :=\sum\limits_{k=1}^{\infty}p_{k}^{(n)}\log K_{n}\label{cex3}\\
B^{(n)}  &  :=(2+\frac{1}{n})\sum\limits_{k=1}^{\infty}p_{k}^{(n)}\log
\log(k+4)\label{cex4}\\
C^{(n)}  &  :=\sum\limits_{k=1}^{\infty}p_{k}^{(n)}\log(k+4) \label{cex5}%
\end{align}
From Eq.(\ref{LL8.5}) we have $A^{(n)}=\log K_{n}$ and from Eq.(\ref{acex1.1},
\ref{acex1.2}) we get
\begin{equation}
-\log[5(\log(5))^{2+1/n}]<A^{(n)}\leq\log M \label{cex6}%
\end{equation}
We will denote by $C_{1,2,..}$ some fixed constants, that do not depend on
$n$. From Eqs.(\ref{acex1.2}, \ref{cex4}) we have
\begin{equation}
B^{(n)}<C_{1}\sum\limits_{k=1}^{\infty}\frac{1}{(k+4)\left[  \log\left(
k+4\right)  \right]  ^{2}}\log\log(k+4) \label{cex6.1}%
\end{equation}
Let $N$ such that for all $k\geq N-1$ the summand in Eq.(\ref{cex6.1}) is
monotone decreasing. We obtain
\begin{equation}
B^{(n)}<C_{2}+C_{1}\sum\limits_{k=N}^{\infty}\frac{1}{(k+4)\left[  \log\left(
k+4\right)  \right]  ^{2}}\log\log(k+4) \label{cx6.2}%
\end{equation}
In order to prove that the infinite sum in Eq.(\ref{cx6.2}) is convergent we
use the integral criteria and the change of variable Eq.(\ref{chvar})
\begin{align}
B^{(n)}  &  <C_{2}+C_{1}\int\limits_{N-1}^{\infty}dk\frac{1}{(k+4)\left[
\log\left(  k+4\right)  \right]  ^{2}}\log\log(k+4)<\label{cex7}\\
&  =C_{2}+C_{1}\int\limits_{\log(4)}^{\infty}dx\frac{\log(x)}{x^{2}}
<+\infty\label{cex8}%
\end{align}
In conclusion we find that $B^{(n)}$ is uniformly bounded. The summand in
Eq.(\ref{cex5}) is monotone, so by the integral criteria we find
\begin{equation}
\int\limits_{0}^{\infty}dk\frac{1}{(k+4)\left[  \log\left(  k+4\right)
\right]  ^{1+1/n}}<C^{(n)}<\int\limits_{1}^{\infty}dk\frac{1}{(k+4)\left[
\log\left(  k+4\right)  \right]  ^{1+1/n}} \label{cex9}%
\end{equation}
We use again Eq.(\ref{chvar}) and we get
\[
n\log(4)<C^{(n)}<n\log(5)
\]
that together to Eqs.(\ref{cex2}, \ref{cex6}, \ref{cex8}) prove
Eq.(\ref{LL8.12})

\subsubsection{Proof of the results from Counter Example 2, subsection
\ref{markerSubsectCounterxContinous}%
\ \ \label{markerAppendixCounterexContinous}}

In the following we shall estimate, or explicitly compute, the integrals by
using the substitution
\begin{equation}
x=\exp(-t) \label{cex10}%
\end{equation}
The normalization constants $M_{n}$, $M$ from Eqs(\ref{cex2.1}-\ref{cex2.4})
can be computed exactly by using Eq.(\ref{cex10})
\begin{align}
M_{n}^{-1}  &  =\frac{1}{\alpha-1}\left[  \left(  \log~2\right)  ^{1-\alpha
}-\left(  \log~n\right)  ^{1-\alpha}\right]  ;~n>2\label{cex11}\\
M^{-1}  &  =\frac{1}{\alpha-1}\left(  \log~2\right)  ^{1-\alpha} \label{cex12}%
\end{align}
It follows that $M_{n}\rightarrow M$, \ $\rho_{n},\rho\in L^{1}$, $\left\Vert
\rho_{n}\right\Vert _{L^{1}}=\left\Vert \rho\right\Vert _{L^{1}}=1$ \ and by
simple calculations and Eq.(\ref{cex10}) results
\[
\left\Vert \rho_{n}-\rho\right\Vert _{L^{1}}=\int\limits_{0}^{1/n}\frac
{Mdx}{x\left(  \log\frac{1}{x}\right)  ^{\alpha}}+\int\limits_{1/n}^{1/2}%
\frac{|M_{n}-M|dx}{x\left(  \log\frac{1}{x}\right)  ^{\alpha}}\rightarrow0
\]
In order to prove Eq.(\ref{cex2.5}) it is sufficient to prove that for
$0<p<1$
\[
\int\limits_{0}^{1/2}\left[  \frac{1}{x\left(  \log\frac{1}{x}\right)
^{\alpha}}\right]  ^{p}dx<\infty
\]
resulting from the inequality $\log~1/x~>\log\ 1/2$, for $0<x<1/2$, or by
direct calculation by using Eq.(\ref{cex10}).

In order to prove Eq.(\ref{cex2.8}) we denote%
\begin{equation}
f(x):=\frac{1}{x\left(  \log\frac{1}{x}\right)  ^{\alpha}} \label{cex13}%
\end{equation}
and from Eqs.(\ref{LL2},~\ref{cex2.1}, \ref{cex2.2}, \ref{cex13}) we obtain
\begin{align}
S_{cl}(\rho_{n})  &  =-\log M_{n}-M_{n}I_{n}\label{cex14}\\
I_{n}  &  =\int\limits_{1/n}^{1/2}f(x)\log f(x)~dx
\end{align}
We use Eqs.(\ref{cex10}, \ref{cex13}) so the last integral is rewritten as
follows
\begin{equation}
I_{n}=\int\limits_{\log2}^{\log n}\frac{dt}{t^{\alpha}}\left(  t-\alpha\log
t\right)  \label{cex16}%
\end{equation}
In the range $1<\alpha<2$ the integral \ $\int\limits_{\log2}^{\infty}%
\frac{dt}{t^{\alpha}}\log t$ is convergent, so the leading term for
$n\rightarrow\infty$ is given by
\[
I_{n}\asymp\frac{1}{2-\alpha}\left[  \log n\right]  ^{2-\alpha}%
\]
that proves Eq.(\ref{cex2.8})

\subsection{ Class of metric vector spaces and the functional $D_{\mathbf{p}
,m}[f]$, associated to GRE \label{markerSubsectionPropertiesOfFunctionalDpmf}}

We expose here a formalism in order to have a clear control on the
pseudo-norms in the case of arbitrary number of variables. The notation from
Eq.(\ref{LL4.1}) will be used in continuation. In analogy to the compact
notation for the pseudo-norm defined in the case of single variables
Eq.(\ref{LL4.2}), we present here another, equivalent, definition with
Eqs.(\ref{33.1}-\ref{33.5}) of the functional $D_{\mathbf{p},m}[f]$. We have
in mind the measure space $(\Omega,\mathcal{A},m)$ and a function
$f(x_{1},x_{2},...,x_{N})$ similar to PDF $\rho(x_{1},x_{2},...,x_{N})$ with
the structure specified in Eqs~(\ref{LL9}-\ref{LL12}). We use the following

\begin{definition}
\label{markerDefDistance}Let $E$ a Metric Vector Space (MVS) with a
real-valued metric (distance to origin) $\delta:E\rightarrow\mathbb{R}_{+}$
such that for $x,y\in E$ the distance is $d(x,y):=\delta(x-y)$\thinspace. The
function $E\backepsilon v\rightarrow\delta(v)\in\mathbb{R}_{+}$ is called
pseudo-norm, if it satisfy the triangle inequality and it is homogenous with
degree $s$
\begin{align}
\delta(u+v)  &  \leq\delta(u)+\delta(v)\label{d00}\\
\delta(\alpha v)  &  =\left\vert \alpha\right\vert ^{s}\delta(v);\ \label{d02}%
\\
0  &  <s\leq1\\
\delta(v)  &  =0\Rightarrow v=0
\end{align}
This metric vector space $E$ with pseudo-norm $\delta$ will be denoted
$(E,\delta)$.
\end{definition}

Recall that in the case of norms in general we have $s=1$ and in the case of
pseudo norms $N\,_{\mathbf{p,m}}(\rho)$ from Eq.(\ref{LL25}) $s=\prod
\limits_{k=1}^{N}q_{k}<1$. In the case of distance defined previously in
Eqs.(\ref{LL4.2}, \ref{33.5}) $s=\prod\limits_{k=1}^{N}ki(p_{k})<1$ (see
below). Note that $s\leq1$ results from Eqs.(\ref{d00}, \ref{d02}). If $s<0$,
then \thinspace$\delta(\frac{1}{n}v)=n^{-s}\,\delta(v)\rightarrow\infty$ for
$n\rightarrow\infty$, and if $s=0$ then $\delta(\frac{1}{n}v)=\ \,\delta(v)$.
So in both cases the distance is discontinuous near $v=0$.

\begin{remark}
\label{markerRemarkTriangleDiffIneqTopology} From Eq.(\ref{d00}) follows the
useful inequality
\begin{equation}
\left\vert \delta(u)-\delta(v)\right\vert \leq\delta(u-v) \label{d03}%
\end{equation}
We can define the convergence of a sequence of vectors $u_{n}$ to $u$ in $E$
by the distance: $\delta(u_{n}-u)\rightarrow0$. Then Eq.(\ref{d03}) means that
the function $u\rightarrow\delta(u)$ is continuous: $u_{n}\rightarrow
u\Rightarrow\delta(u_{n})\rightarrow d(u)$.
\end{remark}

\subsubsection{ The\ space $L^{p}(\Omega,m,E,d)$ and the\ functional
$D_{p,m,E}(f)$.}

In order to obtain an equivalent definition of $D_{\mathbf{p},\mathbf{m}
}\left[  f\right]  $, from Eq.(\ref{33.5}) we extend the definition of
Eq.(\ref{LL4.2}) in such a manner that the properties Eq.~(\ref{LL4.1}%
-\ref{LL4.5}) of the pseudo-norm are preserved. We have the following

\begin{definition}
\label{markerNotationDistanceLp} Denote \ by $L^{p}(\Omega,m,F,\delta)$ the
vector space of all functions defined on the measure space $(\Omega
,\mathcal{A},m)$ with values in the MVS $(F,\delta)$
\begin{equation}
f:\Omega\rightarrow F \label{d3}%
\end{equation}
such that
\begin{equation}
D_{p,m,F}(f):=\left[  {\int\limits_{\Omega}}\left[  \delta\left[  f\left(
x\right)  \right]  \right]  ^{p}dm(x)\right]  ^{i(p)}<\infty\label{d3.1}%
\end{equation}
where $\delta$ is a pseudo-norm with homogeneity degree $s$.
\end{definition}

We observe that in Eq.(\ref{d3.1}) can be obtained from Eq.(\ref{LL4.2}) by
performing the change
\[
|f|\rightarrow\delta(f)
\]
\ 

\ \ \ \ From Eqs.(\ref{d00}, \ref{d02}\ \ref{d3.1}) we have the following

\begin{proposition}
\label{markerRemarkDistFunctionTriangleHomogeneity} The functional
$D_{p,m,F}(f)$\ from Eq.(\ref{d3.1}) \ is a pseudo-norm (see definition
\ref{markerDefDistance}) and it is homogenous with degree $\sigma=spi(p)$
where $s$ is the homogeneity degree of the pseudo-norm $\delta(.)$.
\end{proposition}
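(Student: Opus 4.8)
The plan is to reduce the vector-valued functional $D_{p,m,F}$ to the scalar functional $D_{p,m}$ of Eq.(\ref{LL4.2}) by composing with the pseudo-norm $\delta$, and then transport the properties already recorded in Remark~\ref{MarkerRemarkN_p_Definition}. Concretely, for $f:\Omega\to F$ I set $\phi_f(x):=\delta[f(x)]$, a nonnegative real-valued function on $\Omega$; by the very definition Eq.(\ref{d3.1}) one has $D_{p,m,F}(f)=D_{p,m}[\phi_f]$, where $D_{p,m}$ is the condensed scalar functional from Eq.(\ref{LL4.2}). This identity is the whole point: it lets the multivariable statement inherit the single-variable metric properties.

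First I would verify homogeneity. Using the homogeneity Eq.(\ref{d02}) of $\delta$, $\phi_{\alpha f}(x)=\delta[\alpha f(x)]=|\alpha|^{s}\delta[f(x)]=|\alpha|^{s}\phi_f(x)$. Substituting into Eq.(\ref{d3.1}), pulling the factor $|\alpha|^{sp}$ out of the integral, and raising to the power $i(p)$ gives $D_{p,m,F}(\alpha f)=|\alpha|^{sp\,i(p)}D_{p,m,F}(f)$, which identifies the homogeneity degree $\sigma=sp\,i(p)$, as claimed (consistent with $s(p)=p\,i(p)$ from Eq.(\ref{LL4.6})).

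The main step is the triangle inequality, which I would obtain in three moves. By the pointwise triangle inequality Eq.(\ref{d00}) for $\delta$, $\phi_{f+g}(x)=\delta[f(x)+g(x)]\leq\delta[f(x)]+\delta[g(x)]=\phi_f(x)+\phi_g(x)$. Since $t\mapsto t^{p}$ is increasing on $[0,\infty)$ and the outer exponent $i(p)$ is strictly positive, the scalar functional $D_{p,m}$ is monotone on nonnegative functions, so $D_{p,m}[\phi_{f+g}]\leq D_{p,m}[\phi_f+\phi_g]$. Finally I would apply the scalar triangle inequality Eq.(\ref{LL4.4}) to the nonnegative functions $\phi_f,\phi_g$, giving $D_{p,m}[\phi_f+\phi_g]\leq D_{p,m}[\phi_f]+D_{p,m}[\phi_g]$. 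Chaining these three yields $D_{p,m,F}(f+g)\leq D_{p,m,F}(f)+D_{p,m,F}(g)$. A virtue of routing everything through $D_{p,m}$ is that this single argument covers both regimes $p\geq1$ and $0<p\leq1$ at once, because Eq.(\ref{LL4.4}) is already established for the condensed functional in both cases; definiteness is immediate, since $D_{p,m,F}(f)=0$ forces $\delta[f(x)]=0$ almost everywhere, whence $f(x)=0$ a.e.\ by the definiteness clause of $\delta$.

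The only delicate point — the likely obstacle — is the monotonicity move: one must be sure that passing from the pointwise bound $\phi_{f+g}\leq\phi_f+\phi_g$ to the corresponding inequality between the functionals does not reverse in the exotic pseudo-norm regime $0<p<1$, where $D_{p,m}$ fails to be a genuine norm. This is safe precisely because the inner map $t\mapsto t^{p}$ is monotone increasing for every $p>0$ and the outer exponent $i(p)$ is strictly positive, so no reversal occurs; all that is genuinely used from the sub-unit range is that Eq.(\ref{LL4.4}) itself has already been proved there.
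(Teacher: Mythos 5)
Your proof is correct and takes essentially the same route as the paper: both arguments rest on the pointwise triangle inequality for $\delta$, monotonicity of the integral functional, and $L^{p}$ subadditivity, with the homogeneity degree $\sigma=sp\,i(p)$ read off directly from Eq.(\ref{d02}) and Eq.(\ref{d3.1}). The only difference is presentational: you delegate the final step to the already-recorded scalar property Eq.(\ref{LL4.4}), whereas the paper redoes that case split inline (subadditivity of $t\mapsto t^{p}$ for $0<p<1$, Minkowski for $p\geq1$); your explicit check of definiteness is a small addition the paper omits.
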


\begin{proof}
The homogeneity results directly from Eqs.(\ref{d02}, \ref{d3.1}). From
Eqs.(\ref{d00}, \ref{d3.1}) we get
\begin{equation}
D_{p,m,F}(f+g)\leq\left[  {\int\limits_{\Omega}}\left[  \delta\left[
f\right]  +\delta\left[  g\right]  \right]  ^{p}dm(x)\right]  ^{i(p)}
\label{dd3.2}%
\end{equation}
In the case $0<p<1$, we have $i(p)=1$ and by using the inequality
$\ (|a|+|b|)^{p}\leq|a|^{p}+|b|^{p}$ we obtain
\begin{equation}
D_{p,m,F}(f+g)\leq D_{p,m,F}(f)+D_{p,m,F}(g) \label{d3.3}%
\end{equation}
In the case $p\geq1$ Eq.(\ref{d3.3}) results from Minkowski inequality for
$L^{p}$ space norms
\end{proof}

Note that according to Remark \ref{markerRemarkTriangleDiffIneqTopology}, we
have
\begin{equation}
\left\vert D_{p,m,F}(f)-D_{p,m,F}(g)\right\vert \leq D_{p,m,F}(f-g)
\label{d3.4}%
\end{equation}

\subsubsection{Proof of the Proposition \ref{markerPropositionD[f]Pseudonorm}}

The proof is by backward induction, in close analogy to the recurrent
definition Eq.(\ref{33.1}-\ref{33.5}). At each step we define, according to
Definition \ref{markerNotationDistanceLp}, a MVS $(E_{k},\Delta_{k})$. We
shall denote by $\sigma_{k}$ the homogeneity degree of the pseudo-norm
$\Delta_{k}$. In the first step we set in Definition
\ref{markerNotationDistanceLp} \ $\Omega=\Omega_{N}$, $m=m_{N}$, $p=p_{N}$,
\ $F=\mathbb{R}$, \ $\delta(.)=|.|$. Define the corresponding MVS
$(E_{N},\Delta_{N})$ as follows
\begin{align}
E_{N}  &  =L^{p_{N}}(\Omega_{N},m_{N},\mathbb{R},|\ \ |)=L^{p_{N}}(\Omega
_{N},m_{N})\label{d3.5}\\
\Delta_{N}(f_{N})  &  =\left[  \int\limits_{\Omega_{N}}\left\vert
f_{N}\right\vert ^{p_{N}}dm_{N}\right]  ^{i(p_{N})};~f_{N}\in E_{N}
\label{d3.6}%
\end{align}
Note that, at this stage, $(E_{N},\Delta_{N})$ is the standard $L^{p}$ MVS. We
get the same conclusion by the Proposition
\ref{markerRemarkDistFunctionTriangleHomogeneity}: $\Delta_{N}$ is a
pseudo-norm (for $p_{N}\geq1$ it is the standard $L^{p}$ norm), homogenous
with degree $\sigma_{N}=p_{N}i(p_{N})$.

\noindent In the second step, for the sake of clarity, we repeat the previous
construction: we set in Definition \ref{markerNotationDistanceLp}
\ $\Omega=\Omega_{N-1}$, $m=m_{N-1}$, $p=p_{N-1}$, but we select, $F=E_{N}$,
\ $\delta(.)=\Delta_{N}(.)$ from Eqs.(\ref{d3.5}, \ref{d3.6}). Define the
corresponding MVS $(E_{N-1},\Delta_{N-1})$ as follows.
\begin{align}
E_{N-1}  &  =L^{p_{N-1}}(\Omega_{N-1},m_{N-1},E_{N},\Delta_{N})\label{d3.7}\\
\Delta_{N-1}(f_{N-1})  &  =\left[  \int\limits_{\Omega_{N-1}}\left[
\Delta_{N}(f_{N-1})\right]  ^{p_{N-1}}dm_{N-1}\right]  ^{i(p_{N-1})}%
;~f_{N-1}\in E_{N-1} \label{d3.8}%
\end{align}
By Proposition \ref{markerRemarkDistFunctionTriangleHomogeneity},
$\Delta_{N-1}$ is a pseudo-norm with $\sigma_{N-1}=p_{N-1}i(p_{N-1}%
)\ \sigma_{N}$, so $E_{N-1}$ is a MVS.

Now we proceed to the induction step. Consider that it is proven that
$\Delta_{k}$ is a pseudo-norm with homogeneity degree $\sigma_{k}$ and
$\ (E_{k},\Delta_{k})\ \ $\ is a MVS, with the structure:
\begin{align}
E_{k}  &  =L^{p_{k}}(\Omega_{k},m_{k},E_{k+1},\Delta_{k+1})\label{d3.9}\\
\Delta_{k}(f_{k})  &  =\left[  \int\limits_{\Omega_{k}}\left[  \Delta
_{k+1}(f_{k})\right]  ^{p_{k}}dm_{k}\right]  ^{i(p_{k})};~f_{k}\in E_{k}
\label{d3.10}%
\end{align}
according to the induction hypothesis. We construct again the MVS
$(E_{k-1},\Delta_{k-1})$ : we set in Definition \ref{markerNotationDistanceLp}
\ $\Omega=\Omega_{k-1}$, $m=m_{k-1}$, $p=p_{k-1}$, and we select, $F=E_{k}$,
\ $\delta(.)=\Delta_{k}(.)$ from Eqs.(\ref{d3.9}, \ref{d3.10}). We obtain
\begin{align}
E_{k-1}  &  =L^{p_{k-1}}(\Omega_{k-1},m_{k-1},E_{k},\Delta_{k})\label{d.11}\\
\Delta_{k-1}(f_{k-1})  &  =\left[  \int\limits_{\Omega_{k-1}}\left[
\Delta_{k}(f_{k-1})\right]  ^{p_{k-1}}dm_{k-1}\right]  ^{i(p_{k-1})}%
;~f_{k-1}\in E_{k-1} \label{d3.12}%
\end{align}
By Proposition \ref{markerRemarkDistFunctionTriangleHomogeneity} $\Delta
_{k-1}$ is a pseudo-norm with $\sigma_{k-1}=p_{k-1}i(p_{k-1})\ \sigma_{k}$, so
$E_{k-1}$ is a MVS, that completes the induction step. The final MVS is
$E_{1}=L^{p_{1}}(\Omega_{1},m_{1},E_{2},\Delta_{2})$

By continuing this procedure down to $k=1$ \ we obtain the pseudo-norm
$\Delta_{1}(f_{1})\equiv D_{\mathbf{p},m}\left[  f_{N}\right]  $ with
$\sigma_{1}=\prod\limits_{k=1}^{N}p_{k}i(p_{k})$ that completes the proof.

\bigskip

\subsection{ Logarithmic convexity related to the R\'{e}nyi and generalized
R\'{e}nyi entropies \label{markerAppndxLogConvexity}}

Typical examples of log-convex functions are the integrals that define the
$L^{p}$ norms as functions of the exponent $p$. Consequently, many of the
properties of the Tsallis and R\'{e}nyi, as well as the generalized R\'{e}nyi
entropies can be derived from the properties of log-convex functions. We
mention that in the case of single variable all of the results can be reduced
to the well known facts \cite{EStein}. Nevertheless we give here an
self-contained treatment including also the case of single variable, because
only our approach can be extended to the case of many variables. We have the
following general

\begin{definition}
A non-negative real valued function $g(\mathbf{w})=g(w_{1},...,w_{n})$,
defined in the vector space $\mathbb{R}^{n}$ is log-convex if all of the one
variable functions $w_{k}\rightarrow\log g(a_{1},..a_{k-1},w_{k}%
,a_{k+1},...,a_{n})$ are convex functions. Equivalently, every log-convex
function can be represented as $\exp k(w_{1},...,w_{n})$ where $k(w_{1}%
,...,w_{n})$ is a convex function \textbf{separately} in each of the
variables, not necessary in the ensemble of all of the variables.
\end{definition}

We emphasize that the definition for many variables is adapted to our problem
and it might differ from definitions in the standard textbooks.

\subsubsection{Single variable}

This part is used for the proof of results related to R\'{e}nyi and Tsallis
entropies, and it is the starting point for higher dimensional generalizations.

For a function of a single variable $g(w)$ that has continuos second
derivative a necessary and sufficient condition for log-convexity is
\begin{equation}
\frac{d^{2}}{dw^{2}}\log g(w)=\frac{g(w)g^{\prime\prime}(w)-g^{\prime}(w)^{2}%
}{g^{2}(w)}\geq0 \label{logconv1}%
\end{equation}
The most important fact is the following well known interpolation property
\cite{EStein}. For easy reference, we give an elementary treatment

\begin{proposition}
\label{markPropLogConv1varBound} Let $w\rightarrow g(w)$ be a non-negative
function of real variable $w$, defined at least on the interval $[a,b]$. If
$g(w)$ is log-convex and
\begin{equation}
g(a)\leq A;~g(b)\leq B~ \label{logconv2}%
\end{equation}
then for all $w\in\lbrack a,b]$ we have the bound
\begin{equation}
g(w)\leq A^{\frac{b-w}{b-a}}B^{\frac{w-a}{b-a}} \label{logconv3}%
\end{equation}

\end{proposition}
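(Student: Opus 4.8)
The plan is to reduce the claim directly to the convexity of $\log g$, which is precisely what log-convexity means under the definition adopted just above. First I would parametrize an arbitrary point $w\in[a,b]$ as a convex combination of the two endpoints: setting $t:=\frac{w-a}{b-a}\in[0,1]$, we have $w=(1-t)a+tb$ and, symmetrically, $1-t=\frac{b-w}{b-a}$. This is exactly the substitution that converts the exponents appearing in Eq.(\ref{logconv3}) into the coefficients of a convex combination, so no guessing is involved once the parametrization is fixed.

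The key step is then to apply the definition of convexity to the function $\log g$ at this point,
\[
\log g(w)=\log g\bigl((1-t)a+tb\bigr)\leq(1-t)\log g(a)+t\log g(b).
\]
When $g$ happens to be $C^{2}$ one could instead invoke the second-derivative criterion Eq.(\ref{logconv1}), but the bare convexity of $\log g$ suffices and requires no smoothness assumption, which is important since the Proposition is stated without any regularity hypothesis. Next I would insert the endpoint bounds Eq.(\ref{logconv2}): since $\log$ is increasing, $g(a)\leq A$ and $g(b)\leq B$ yield $\log g(a)\leq\log A$ and $\log g(b)\leq\log B$, whence
\[
\log g(w)\leq\frac{b-w}{b-a}\log A+\frac{w-a}{b-a}\log B.
\]
Exponentiating, again by monotonicity of $\exp$, reproduces exactly the right-hand side of Eq.(\ref{logconv3}), which closes the argument.

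There is no genuine obstacle here; the only point deserving a word of care is the degenerate case in which $g$ vanishes at an interior point, so that $\log g=-\infty$ and the convexity inequality must be interpreted in the extended reals. This case is harmless for the stated bound: the hypotheses force $A,B\geq 0$ (since $A\geq g(a)\geq 0$ and $B\geq g(b)\geq 0$), so the right-hand side of Eq.(\ref{logconv3}) is non-negative while $g(w)=0$, and the inequality holds trivially. Thus the proof is essentially a one-line consequence of the definition, which is exactly why this Proposition can serve as the elementary building block underlying the interpolation Lemma \ref{markerInterpolationLemma} and its multivariate generalizations used throughout the paper.
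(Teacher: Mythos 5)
Your proof is correct and follows essentially the same route as the paper's: both write $w$ as a convex combination of $a$ and $b$ (your $t$ is the paper's $1-\alpha$), apply the convexity of $\log g$, insert the endpoint bounds, and exponentiate. Your extra remark on the degenerate case $g(w)=0$ is a harmless refinement the paper omits, not a different argument.
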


\begin{proof}
The function $k(w)=\log f(w)$ is convex, so
\begin{equation}
k(\alpha a+(1-\alpha)b)\leq\alpha k(a)+(1-\alpha)k(b) \label{logconv4}%
\end{equation}
$\ \ $ where $0\leq\alpha\leq1$. Take $\alpha=(b-w)/(b-a)$ and from
Eq.(\ref{logconv4}) results
\[
\log g(w)=k(w)\leq\frac{b-w}{b-a}k(a)+\frac{w-a}{b-a}k(b)
\]
that with Eq.(\ref{logconv2}) results to be
\[
\log g(w)=k(w)\leq\frac{b-w}{b-a}\log A+\frac{w-a}{b-a}\log B
\]
that completes the proof.
\end{proof}

We have the following

\begin{proposition}
\label{markPropApendBasiclog_conv} Consider in the measure space
$(\Omega,\mathcal{A},m)$ the family of functions $f(\mathbf{x},w)$, such that
for every fixed value of the variable\textbf{ }$\mathbf{x}=\mathbf{x}_{0}%
\in\Omega$ the one variable function $w\rightarrow f(\mathbf{x}_{0},w)$ is
log-convex (consequently $f(\mathbf{x}_{0},w)>0$), for each fixed value of the
variable $w=w_{0}$, the function $\mathbf{x}\rightarrow f(\mathbf{x},w_{0})$
is integrable with respect to measure $dm(\mathbf{x})$ and the function
$g(w)$
\begin{equation}
g(w):=\int_{\Omega}dm(\mathbf{x})f(\mathbf{x},w) \label{logconv20}%
\end{equation}
has second derivative. Then $g(w)$ is log-convex. Equivalently, linear
combination\emph{ with non negative coefficients} of log-convex functions is
log-convex too
\end{proposition}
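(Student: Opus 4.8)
The plan is to verify the second-derivative criterion Eq.(\ref{logconv1}) for $g$, namely $g(w)g''(w)-g'(w)^2\ge 0$, by exploiting the pointwise log-convexity of the integrand together with the Cauchy--Schwarz inequality. First I would differentiate twice under the integral sign in Eq.(\ref{logconv20}) to obtain
\[
g'(w)=\int_{\Omega}\partial_w f(\mathbf{x},w)\,dm(\mathbf{x}),\qquad
g''(w)=\int_{\Omega}\partial_w^2 f(\mathbf{x},w)\,dm(\mathbf{x}).
\]
This interchange is legitimate once $\partial_w f$ and $\partial_w^2 f$ are locally dominated by $dm$-integrable functions; the hypothesis that $g$ possesses a second derivative is precisely what renders these expressions meaningful.

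Next, for each fixed $\mathbf{x}$ the one-variable function $w\mapsto f(\mathbf{x},w)$ is log-convex and strictly positive, so Eq.(\ref{logconv1}) applied pointwise gives
\[
f(\mathbf{x},w)\,\partial_w^2 f(\mathbf{x},w)\ge\bigl[\partial_w f(\mathbf{x},w)\bigr]^2,
\]
equivalently $\partial_w^2 f\ge(\partial_w f)^2/f$ since $f>0$. Multiplying the two integral formulas and inserting this bound yields
\[
g(w)g''(w)=\Bigl(\int_\Omega f\,dm\Bigr)\Bigl(\int_\Omega \partial_w^2 f\,dm\Bigr)\ge\Bigl(\int_\Omega f\,dm\Bigr)\Bigl(\int_\Omega \frac{(\partial_w f)^2}{f}\,dm\Bigr).
\]
Applying the Cauchy--Schwarz inequality to the factorization $\partial_w f=\sqrt{f}\cdot(\partial_w f/\sqrt{f})$ gives
\[
g'(w)^2\le\Bigl(\int_\Omega |\partial_w f|\,dm\Bigr)^2\le\Bigl(\int_\Omega f\,dm\Bigr)\Bigl(\int_\Omega \frac{(\partial_w f)^2}{f}\,dm\Bigr)\le g(w)g''(w),
\]
so $g(w)g''(w)-g'(w)^2\ge 0$ and $g$ is log-convex by Eq.(\ref{logconv1}).

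The step I expect to be the main obstacle is the justification of differentiation under the integral sign, which requires a local integrable domination of $\partial_w f$ and $\partial_w^2 f$ rather than merely the bare existence of $g''$. To sidestep this entirely I would present the following smoothness-free alternative as the primary argument: log-convexity is equivalent to the multiplicative interpolation inequality, and for fixed $w_1,w_2$ and $0\le\theta\le1$ the pointwise log-convexity gives $f(\mathbf{x},\theta w_1+(1-\theta)w_2)\le f(\mathbf{x},w_1)^{\theta}f(\mathbf{x},w_2)^{1-\theta}$; integrating over $\Omega$ and applying H\"older's inequality with conjugate exponents $1/\theta$ and $1/(1-\theta)$ produces $g(\theta w_1+(1-\theta)w_2)\le g(w_1)^{\theta}g(w_2)^{1-\theta}$ directly, with no differentiability assumed.

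Finally, the \emph{equivalently} clause on non-negative linear combinations is the discrete special case of this statement: taking $\Omega=\{1,2\}$ with $m$ assigning the prescribed non-negative masses $c_1,c_2$ to the two points and setting $f(i,\cdot)$ equal to the two given log-convex functions, the integral $g(w)=c_1 f(1,w)+c_2 f(2,w)$ is then log-convex by the result just proved, which is the desired conclusion.
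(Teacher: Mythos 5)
Your proposal is correct, and it actually contains two arguments of different character. Your first (second-derivative) argument is, underneath the notation, the same as the paper's: the paper writes $f=\exp k$ with $k(\mathbf{x},\cdot)$ convex, expands $g g''-(g')^{2}$ as in Eq.(\ref{logconv50}), discards the manifestly non-negative term containing $\partial^{2}k/\partial w^{2}$, and reduces the rest to the variance inequality Eq.(\ref{logconv70}) for the random variable $h=\partial k/\partial w$ under the tilted probability measure Eq.(\ref{logconv60}); your chain
\begin{equation*}
g'(w)^{2}\leq\Bigl(\int_{\Omega}\sqrt{f}\,\frac{|\partial_{w}f|}{\sqrt{f}}\,dm\Bigr)^{2}\leq\Bigl(\int_{\Omega}f\,dm\Bigr)\Bigl(\int_{\Omega}\frac{(\partial_{w}f)^{2}}{f}\,dm\Bigr)\leq g(w)\,g''(w)
\end{equation*}
is precisely that variance inequality phrased as Cauchy--Schwarz, with the pointwise bound $\partial_{w}^{2}f\geq(\partial_{w}f)^{2}/f$ playing the role of discarding the $k''$ term. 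Both versions share the same weak point, which you correctly identify: differentiation under the integral sign and the pointwise twice-differentiability implicit in Eq.(\ref{logconv1}) are not justified by the bare hypothesis that $g''$ exists (the paper silently assumes both). Your designated primary argument -- integrate the pointwise interpolation inequality $f(\mathbf{x},\theta w_{1}+(1-\theta)w_{2})\leq f(\mathbf{x},w_{1})^{\theta}f(\mathbf{x},w_{2})^{1-\theta}$ and apply H\"older's inequality with exponents $1/\theta$ and $1/(1-\theta)$ -- is a genuinely different route that the paper does not take, and it is the stronger one: it needs no smoothness of $f$ or $g$, no interchange of derivative and integral, and it delivers convexity of $\log g$ directly in the sense of the paper's own definition. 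What the paper's computation buys in exchange is only coherence with the differential criterion Eq.(\ref{logconv1}) used elsewhere in the appendix. Your reduction of the ``equivalently'' clause to a two-point measure space is also a clean way to obtain the statement about non-negative linear combinations, which the paper asserts without proof.
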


\begin{proof}
From the log-convexity of $f(x_{0},w)$ results that for some function
$k(x,w)$, which is convex in the variable $w$, we have
\begin{equation}
f(\mathbf{x},w)=\exp k(\mathbf{x},w) \label{logconv21}%
\end{equation}
According to Eq.(\ref{logconv1}) we have to prove
\begin{equation}
g(w)g^{\prime\prime}(w)-g^{\prime2}(w)\geq0 \label{logconv40}%
\end{equation}
From Eqs.(\ref{logconv20}, \ref{logconv21}, \ref{logconv40}) results
\begin{equation}
g(w)\int_{\Omega}dm(\mathbf{x})\left\{  \frac{\partial^{2}k}{\partial w^{2}%
}+\left[  \frac{\partial k}{\partial w}\right]  ^{2}\right\}  \exp
k(\mathbf{x},w)-\left[  \int_{\Omega}dm(\mathbf{x})\frac{\partial k}{\partial
w}\exp k(\mathbf{x},w)\right]  ^{2}\geq0 \label{logconv50}%
\end{equation}
The first term in Eq.(\ref{logconv50}) is always positive, since
$k(\mathbf{x},w)$ is convex in the variable $w$. For fixed $w$ we define a new
probability measure ($\int_{\Omega}dP(\mathbf{x})=1$) and introduce a new
notation $h(\mathbf{x})$
\begin{align}
dP(\mathbf{x})  &  :=dm(\mathbf{x})\frac{\exp k(\mathbf{x},w)}{g(w)}%
\label{logconv60}\\
h(\mathbf{x})  &  :=\frac{\partial k(\mathbf{x},w)}{\partial w}
\label{logconv61}%
\end{align}
By using the notations Eqs.(\ref{logconv60}, \ref{logconv61}) and
$\frac{\partial^{2}k}{\partial w^{2}}\geq0$, the proof is reduced to the well
known inequality between mean value and mean square value of the random
variable $h(\mathbf{x})$
\begin{equation}
\int_{\Omega}dP(\mathbf{x})\left[  h(\mathbf{x})\right]  ^{2}-\left[
\int_{\Omega}dP(\mathbf{x})h(\mathbf{x})\right]  ^{2}\geq0 \label{logconv70}%
\end{equation}
~that that proves Eq.(\ref{logconv50}).
\end{proof}

\begin{corollary}
\label{markCorllaryApendLogConvNorm} Consider a measure space $(\Omega
,\mathcal{A},m)$ and the real valued function $\phi(y)$ defined on $\Omega$,
such that the function
\begin{equation}
g(w):=\int_{\Omega}\left\vert \phi(x)\right\vert ^{w}dm(x) \label{logconv80}%
\end{equation}
is finite on some interval $a\leq w\leq b$. Then $g(w)$ is log-convex.
\end{corollary}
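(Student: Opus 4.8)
The plan is to obtain this directly from Proposition~\ref{markPropApendBasiclog_conv}, applied to the parameter family $f(\mathbf{x},w):=|\phi(x)|^{w}$, for which $g(w)=\int_{\Omega}|\phi(x)|^{w}dm(x)$ coincides with the integral in Eq.(\ref{logconv20}). First I would check the pointwise log-convexity hypothesis: for every fixed $x$ with $\phi(x)\neq0$ one has $|\phi(x)|^{w}=\exp\bigl(w\log|\phi(x)|\bigr)$, whose logarithm $w\log|\phi(x)|$ is \emph{linear}, hence convex, in $w$; thus $w\mapsto|\phi(x)|^{w}$ is log-convex and strictly positive, exactly as the Proposition demands. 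The set $\{\phi=0\}$ contributes nothing to the integral, so I would simply restrict the integration to $\{\phi\neq0\}$ without altering $g$.

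The hard part will be verifying the remaining hypothesis of Proposition~\ref{markPropApendBasiclog_conv}, namely that $g$ possesses a (continuous) second derivative, which requires justifying differentiation under the integral sign. Formally
\begin{equation}
g''(w)=\int_{\Omega}|\phi(x)|^{w}\bigl(\log|\phi(x)|\bigr)^{2}dm(x),
\end{equation}
the delicate point being the unbounded factor $(\log|\phi|)^{2}$. I would fix an interior point $w\in(a,b)$ and a closed subinterval $[w-\delta,w+\delta]\subset(a,b)$, and use the elementary estimate that for each $\delta>0$ there is a constant $C_{\delta}$ with $|\log t|^{2}\leq C_{\delta}\bigl(t^{-\delta}+t^{\delta}\bigr)$ for all $t>0$. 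This yields the domination
\begin{equation}
|\phi(x)|^{w}\bigl(\log|\phi(x)|\bigr)^{2}\leq C_{\delta}\Bigl(|\phi(x)|^{w-\delta}+|\phi(x)|^{w+\delta}\Bigr)\leq C_{\delta}\Bigl(2|\phi(x)|^{a}+2|\phi(x)|^{b}\Bigr),
\end{equation}
where the last step uses $t^{u}\leq t^{a}+t^{b}$ for $t\geq0$ and $a\leq u\leq b$. Since $g(a)$ and $g(b)$ are finite by hypothesis, the right-hand side is $m$-integrable, so dominated convergence legitimizes both successive differentiations and produces a continuous $g''$; the Proposition then applies and gives log-convexity of $g$ on $(a,b)$.

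Finally I would extend log-convexity from the open interval to the closed $[a,b]$ by continuity, the midpoint inequality $g\bigl(\tfrac{w_{1}+w_{2}}{2}\bigr)^{2}\leq g(w_{1})g(w_{2})$ being stable under passage to the limit. As a remark, a self-contained route that avoids the smoothness requirement entirely is to apply H\"{o}lder's inequality with conjugate exponents $1/\theta$ and $1/(1-\theta)$, which gives
\begin{equation}
g\bigl(\theta w_{1}+(1-\theta)w_{2}\bigr)=\int_{\Omega}|\phi|^{\theta w_{1}}|\phi|^{(1-\theta)w_{2}}dm\leq g(w_{1})^{\theta}\,g(w_{2})^{1-\theta},\qquad 0\leq\theta\leq1,
\end{equation}
which is precisely log-convexity on $[a,b]$. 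I would retain the derivative-based argument as the primary one solely to stay within the framework of Proposition~\ref{markPropApendBasiclog_conv}, citing the H\"{o}lder estimate as the fallback that needs no differentiability.
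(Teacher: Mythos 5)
Your primary argument coincides with the paper's: its entire proof is the single sentence that the Corollary ``follows from Proposition \ref{markPropApendBasiclog_conv} with $f(x,w)=|\phi(x)|^{w}$, which is clearly log-convex.'' What you add is the verification of the hypotheses that this one-liner silently assumes. Proposition \ref{markPropApendBasiclog_conv} requires strict positivity of $w\mapsto f(\mathbf{x},w)$ for each fixed $\mathbf{x}$ (your restriction to $\{\phi\neq0\}$ disposes of this), and, more seriously, that $g$ possess a second derivative; the paper never justifies differentiation under the integral sign, whereas your domination $|\log t|^{2}\leq C_{\delta}\left(t^{-\delta}+t^{\delta}\right)$ combined with $t^{u}\leq t^{a}+t^{b}$ produces the integrable majorant $2C_{\delta}\left(|\phi|^{a}+|\phi|^{b}\right)$ and makes that step rigorous; the extension from the open interval to $[a,b]$ (using that $g$ is itself continuous up to the endpoints, again by dominated convergence) is likewise needed and correct. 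Your closing remark is in fact a genuinely different and more economical route: H\"{o}lder's inequality with conjugate exponents $1/\theta$ and $1/(1-\theta)$ gives $g\left(\theta w_{1}+(1-\theta)w_{2}\right)\leq g(w_{1})^{\theta}g(w_{2})^{1-\theta}$ directly on all of $[a,b]$, with no smoothness or positivity discussion whatsoever, and it is the standard textbook proof. The differential route buys consistency with the machinery of Proposition \ref{markPropApendBasiclog_conv}, which the paper reuses for the many-variable case in Theorem \ref{markerTheoremAppLogConvNvar}; the H\"{o}lder route buys brevity and dispenses with all regularity hypotheses, so keeping it as more than a fallback would actually strengthen the argument.
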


\begin{proof}
It follows from the previous Proposition \ref{markPropApendBasiclog_conv} with
$f(x,w)=\left\vert \phi(x)\right\vert ^{w}$ that is clearly log-convex.
\end{proof}

From Corollary \ref{markCorllaryApendLogConvNorm} and Proposition
\ref{markPropLogConv1varBound} it follows the following result. It can be
deduced from known properties of the $L^{p}$ norm in the case $1\leq p<q$, or
from the Hadamard three line theorem \cite{EStein}.

\begin{theorem}
\label{markerTheoremRieszThorinGen}Suppose that $\phi(x)\in L^{p}
(\Omega,m)\cap L^{q}(\Omega,m)$ with $0<p<q$ and
\begin{align}
\int_{\Omega}\left\vert \phi(x)\right\vert ^{p}dm(x)  &  \leq A_{p}%
\label{logconv80.1}\\
\int_{\Omega}\left\vert \phi(x)\right\vert ^{q}dm(x)  &  \leq A_{q}
\label{logconv80.2}%
\end{align}
Then $\phi(x)\in L^{r}(\Omega,m)$ with $0<p\leq r\leq q$ and
\begin{equation}
\int_{\Omega}\left\vert \phi(x)\right\vert ^{r}dm(x)\leq A_{q}^{\frac
{r-p}{q-p}}A_{p}^{\frac{q-r}{q-p}} \label{logconv80.3}%
\end{equation}

\end{theorem}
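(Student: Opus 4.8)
The plan is to recognize this theorem as the integral (continuous-exponent) form of the two preceding results: Corollary~\ref{markCorllaryApendLogConvNorm} supplies log-convexity of the exponent-dependent integral, and Proposition~\ref{markPropLogConv1varBound} converts log-convexity plus endpoint bounds into the interpolated estimate. Accordingly I would set
\[
g(w):=\int_{\Omega}\left\vert \phi(x)\right\vert ^{w}dm(x),
\]
so that the hypotheses read $g(p)\leq A_{p}$ and $g(q)\leq A_{q}$, and the desired conclusion is exactly the pointwise bound on $g(r)$ for $p\leq r\leq q$.

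Before invoking Corollary~\ref{markCorllaryApendLogConvNorm}, however, I must first secure its standing hypothesis, namely that $g(w)$ is \emph{finite} on the whole closed interval $[p,q]$; the theorem only assumes finiteness at the two endpoints. This is the one genuine step and the main (if mild) obstacle. I would dispose of it with the elementary pointwise inequality
\[
\left\vert \phi(x)\right\vert ^{r}\leq\left\vert \phi(x)\right\vert ^{p}+\left\vert \phi(x)\right\vert ^{q},\qquad p\leq r\leq q,
\]
which holds because on $\{\left\vert \phi\right\vert \leq1\}$ one has $\left\vert \phi\right\vert ^{r}\leq\left\vert \phi\right\vert ^{p}$ and on $\{\left\vert \phi\right\vert >1\}$ one has $\left\vert \phi\right\vert ^{r}\leq\left\vert \phi\right\vert ^{q}$. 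Integrating gives $g(r)\leq A_{p}+A_{q}<\infty$, so $\phi\in L^{r}(\Omega,m)$ throughout $[p,q]$ and $g$ is finite there.

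With finiteness in hand, Corollary~\ref{markCorllaryApendLogConvNorm} applies directly (its proof rests on Proposition~\ref{markPropApendBasiclog_conv}, the only place where smoothness of $g$ in $w$ is used; that differentiability under the integral sign is the implicit technical point, and it is subsumed by the cited results). We conclude that $w\mapsto g(w)$ is log-convex on $[p,q]$. Finally I would apply Proposition~\ref{markPropLogConv1varBound} with $a=p$, $b=q$, $A=A_{p}$, $B=A_{q}$ and the endpoint bounds $g(p)\leq A_{p}$, $g(q)\leq A_{q}$; evaluating at $w=r$ yields
\[
g(r)\leq A_{p}^{\frac{q-r}{q-p}}A_{q}^{\frac{r-p}{q-p}},
\]
which, after reordering the two factors, is precisely Eq.~(\ref{logconv80.3}). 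The only care needed is to match the interpolation exponents $\tfrac{b-w}{b-a}$ and $\tfrac{w-a}{b-a}$ of the proposition to $\tfrac{q-r}{q-p}$ and $\tfrac{r-p}{q-p}$; everything else is a direct substitution. I would emphasize that this route never invokes Minkowski's inequality, relying solely on the log-convexity of $g$, so it applies verbatim to the exotic range $0<p<1$ as well as to $p\geq1$, which is exactly the generality the paper requires for the Tsallis and generalized R\'{e}nyi entropies.
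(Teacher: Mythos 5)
Your proof is correct and takes essentially the same route as the paper, which obtains the theorem directly from Corollary~\ref{markCorllaryApendLogConvNorm} (log-convexity of $g(w)=\int_{\Omega}\left\vert \phi(x)\right\vert ^{w}dm(x)$) combined with Proposition~\ref{markPropLogConv1varBound}. Your extra step verifying finiteness of $g$ on all of $[p,q]$ via the pointwise bound $\left\vert \phi\right\vert ^{r}\leq\left\vert \phi\right\vert ^{p}+\left\vert \phi\right\vert ^{q}$ is a hypothesis of the Corollary that the paper leaves implicit, so including it is a genuine (if minor) improvement rather than a deviation.
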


\begin{corollary}
\label{markeerCorolaryApendAnalitictyinStrip} Under the condition of the
previous theorem \ref{markerTheoremRieszThorinGen} there exists an unique
analytic function $F(z)=\int_{\Omega}\left\vert \phi(x)\right\vert ^{z}dm(x)$
defined in the strip \thinspace$p<\operatorname{Re}(z)<q$ such that
\begin{equation}
\left\vert F(x+iy)\right\vert \leq A_{q}^{\frac{x-p}{q-p}}A_{p}^{\frac
{q-x}{q-p}};~p\leq x\leq q \label{logconv80.4}%
\end{equation}

\end{corollary}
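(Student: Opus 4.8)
The plan is to establish, in turn, finiteness of the defining integral on the strip, its holomorphy, and its uniqueness; the claimed bound then comes out almost for free. First I would note that for every $z=x+iy$ one has the pointwise identity $\bigl||\phi|^{z}\bigr|=|\phi|^{\operatorname{Re}(z)}$, since $|\phi|^{iy}=\exp(iy\log|\phi|)$ has modulus one wherever $\phi\neq0$, while where $\phi=0$ the integrand vanishes because $\operatorname{Re}(z)>p>0$. Consequently
\[
|F(z)|=\left|\int_{\Omega}|\phi|^{z}\,dm\right|\le\int_{\Omega}|\phi|^{\operatorname{Re}(z)}\,dm ,
\]
and for $z=x+iy$ with $p\le x\le q$ the bound Eq.(\ref{logconv80.3}) of Theorem \ref{markerTheoremRieszThorinGen}, taken with $r=x$, yields immediately $|F(x+iy)|\le A_{q}^{(x-p)/(q-p)}A_{p}^{(q-x)/(q-p)}$, which is Eq.(\ref{logconv80.4}). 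In particular $F(z)$ is finite at every point of the closed strip $p\le\operatorname{Re}(z)\le q$.

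To prove that $F$ is holomorphic in the open strip I would invoke Morera's theorem. With the point of $\Omega$ held fixed, the map $z\mapsto|\phi|^{z}=\exp(z\log|\phi|)$ is entire where $\phi\neq0$ and identically zero where $\phi=0$, so in either case it integrates to zero around any closed contour. Fix a compact substrip $\{\alpha\le\operatorname{Re}(z)\le\beta\}$ with $p<\alpha<\beta<q$ and an arbitrary closed triangle $\gamma$ contained in it. On $\gamma$ the first paragraph gives $\int_{\Omega}|\phi|^{\operatorname{Re}(z)}\,dm\le C$, where $C:=\max_{\alpha\le t\le\beta}A_{q}^{(t-p)/(q-p)}A_{p}^{(q-t)/(q-p)}<\infty$, so the integral of $\bigl||\phi|^{z}\bigr|$ over $\gamma\times\Omega$ is at most $C\,\mathrm{length}(\gamma)<\infty$. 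Fubini's theorem then permits the interchange
\[
\oint_{\gamma}F(z)\,dz=\int_{\Omega}\Bigl(\oint_{\gamma}|\phi|^{z}\,dz\Bigr)dm=0 .
\]
As $\gamma$ ranges over all triangles in all such substrips, and $F$ is continuous there (again by the uniform majorant $C$ together with dominated convergence), Morera's theorem shows $F$ is analytic on $p<\operatorname{Re}(z)<q$.

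Finally I would obtain uniqueness from the identity theorem: the restriction of $F$ to the real interval $(p,q)$ equals the real function $g(t)=\int_{\Omega}|\phi|^{t}\,dm$ of Corollary \ref{markCorllaryApendLogConvNorm}, and the strip is connected, so any two functions holomorphic there and agreeing with $g$ on $(p,q)$ must coincide. The step I expect to be the real obstacle is the justification of the interchange of the order of integration in the Morera argument: it requires a genuinely $z$-independent integrable majorant on each compact substrip, and this is supplied precisely by the log-convexity estimate Eq.(\ref{logconv80.3}); the only additional care needed is the harmless handling of the null locus $\{\phi=0\}$, where the integrand is zero thanks to $\operatorname{Re}(z)>p>0$.
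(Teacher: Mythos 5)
Your proof is correct, and it is worth noting that the paper itself states this corollary \emph{without} proof: it is simply invoked later (in Proposition \ref{markerPropositionAnaliticityBound} and in the proof of Theorem \ref{markerTheoremShannonStability}) as a known fact, the analyticity of $z\mapsto\int_{\Omega}|\phi|^{z}dm$ being treated as standard. Your first paragraph recovers the bound in exactly the way the paper's own logic dictates: the pointwise identity $\bigl||\phi|^{z}\bigr|=|\phi|^{\operatorname{Re}(z)}$ gives $|F(x+iy)|\leq\int_{\Omega}|\phi|^{x}dm$, and Eq.(\ref{logconv80.3}) of Theorem \ref{markerTheoremRieszThorinGen} with $r=x$ yields Eq.(\ref{logconv80.4}). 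What you add beyond the paper is the actual substance of the corollary: the Morera--Fubini argument for holomorphy in the open strip and the identity-theorem argument for uniqueness. Both are sound, and they are precisely the pieces the paper leaves implicit.

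Two points of bookkeeping should be tightened, though neither is a genuine gap. First, your constant $C$ bounds the \emph{integral} $\int_{\Omega}|\phi|^{\operatorname{Re}(z)}dm$, not the integrand, so it cannot serve as the dominating function in the dominated convergence argument for continuity of $F$; the correct $z$-independent majorant on the substrip $\alpha\leq\operatorname{Re}(z)\leq\beta$ is $|\phi|^{\alpha}+|\phi|^{\beta}$, which dominates $|\phi|^{t}$ for every $\alpha\leq t\leq\beta$ (split according to $|\phi|\leq 1$ or $|\phi|>1$) and is integrable by Theorem \ref{markerTheoremRieszThorinGen}. The same function is also the cleanest way to verify the hypothesis of Fubini, rather than the integrated bound $C\,\mathrm{length}(\gamma)$, although your estimate is adequate there. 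Second, Fubini requires joint measurability of $(z,x)\mapsto|\phi(x)|^{z}$ on $\gamma\times\Omega$; this holds because the map is continuous in $z$ for each fixed $x$ and measurable in $x$ for each fixed $z$ (a Carath\'{e}odory function), but it deserves a sentence, since it is the only place where the product structure of the double integral is actually used.
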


The previous results are sufficient to have a simple proof of the
interpolation Lemma\ref{markerInterpolationLemma}

\begin{proof}
\label{markerProofInterpLemma} In the previous Theorem
\ref{markerTheoremRieszThorinGen} we put $\phi(x)=f(x)$, $p=\min(1,s)$ and
$q=\max(1,s)$
\end{proof}

\subsubsection{Proof of the stability of the BSE (Theorem
\ref{markerTheoremShannonStability})
\label{markerSubsectAppendShannonStabilityProof}}

In the first step of the proof, we shall obtain convergence bounds on the
function $\int_{\Omega}\left\vert \rho_{n}(x)\right\vert ^{r}dm(x)-\int
_{\Omega}\left\vert \rho(x)\right\vert ^{r}dm(x)$ on a subset of the interval
$[p,q]$. From the bounds Eqs.(\ref{sh1}-\ref{sh2.1}) and triangle inequality
Eq.(\ref{LL4.4}), by simple algebra, we obtain the bounds
\begin{align}
\int\limits_{\Omega}\left\vert \rho_{n}(x)-\rho(x)\right\vert ^{p}dm(x)  &
\leq B_{p}:=2A;~p<1\label{proofS1}\\
\int\limits_{\Omega}\left\vert \rho_{n}(x)-\rho(x)\right\vert ^{q}dm(x)  &
\leq B_{q}:=2^{q}A;~q>1 \label{proofS2}%
\end{align}
where $B_{p},B_{q}$ are constants. By using Eqs.~(\ref{sh0}, \ref{proofS1})
and Theorem \ref{markerTheoremRieszThorinGen}, with $\Phi(x)=\rho_{n}%
(x)-\rho(x)$ we obtain, for all $r$ with $p\leq r\leq1$
\begin{equation}
\int_{\Omega}\left\vert \rho_{n}(x)-\rho(x)\right\vert ^{r}dm(x)\leq
(\varepsilon_{n})^{\frac{r-p}{1-p}}B_{p}^{\frac{1-r}{1-p}} \label{proofS3}%
\end{equation}
Similarly, from Eqs.(\ref{sh0}, \ref{proofS2}) and Theorem
\ref{markerTheoremRieszThorinGen}, for all $r$ in the domain $1\leq r\leq q$
the following inequality is obtained
\begin{equation}
\int_{\Omega}\left\vert \rho_{n}(x)-\rho(x)\right\vert ^{r}dm(x)\leq
B_{q}^{\frac{r-1}{q-1}}(\varepsilon_{n})^{\frac{q-r}{q-1}} \label{proofS4}%
\end{equation}
In the following we will restrict ourselves to a smaller domain: let
$p_{1}=(1+p)/2$ and $\ q_{1}=(1+q)/2$. Denote
\begin{align}
d_{1,n}  &  :=\underset{r\in\lbrack p_{1},1]}{\max}(\varepsilon_{n}%
)^{\frac{r-p}{1-p}}B_{p}^{\frac{1-r}{1-p}}\label{proofS4.1}\\
d_{2,n}  &  :=\underset{r\in\lbrack1,q_{1}]}{\max}B_{q}^{\frac{r-1}{q-1}%
}(\varepsilon_{n})^{\frac{q-r}{q-1}} \label{proofS4.2}%
\end{align}
and remark that from Eqs.(\ref{proofS4}, \ref{sh0}), for all $r$, in the
domain $p_{1}\leq r\leq1$, the following uniform convergence is found
($p_{1}<1<q_{1}$)
\begin{equation}
\int_{\Omega}\left\vert \rho_{n}(x)-\rho(x)\right\vert ^{r}dm(x)\leq
d_{1,n}\underset{n\rightarrow\infty}{\rightarrow}0 \label{proofS6}%
\end{equation}
Denote
\begin{equation}
G_{n}(r):=\int_{\Omega}\left\vert \rho_{n}(x)\right\vert ^{r}dm(x)-\int
_{\Omega}\left\vert \rho(x)\right\vert ^{r}dm(x) \label{Gn}%
\end{equation}
that, according to the Corollary \ref{markeerCorolaryApendAnalitictyinStrip},
can be analytically continued to the strip $\{z|p\leq\operatorname{Re}z\leq
q\}$. In the domain $p_{1}\leq r\leq1$, from the inequality (\ref{LL4.41}) and
Eq.(\ref{proofS6}) results for all $p_{1}\leq r\leq1$
\begin{equation}
\left\vert G_{n}(r)\right\vert \leq\int_{\Omega}\left\vert \rho_{n}%
(x)-\rho(x)\right\vert ^{r}dm(x)\leq d_{1,n}\underset{n\rightarrow\infty
}{\rightarrow}0 \label{proofS7}%
\end{equation}
In a similar manner, in the domain $1\leq r\leq q_{1}$, from the
inequality(\ref{LL4.41}) and Eq.(\ref{proofS6}) we have $\left\Vert \rho
_{n}-\rho\right\Vert _{r}\leq\left[  d_{2},_{n}\right]  ^{1/r}$ and from
Eq.(\ref{LL4.41}) we get
\begin{align}
\left\vert \left\Vert \rho_{n}\ \right\Vert _{r}-\left\Vert \rho\right\Vert
_{r}\right\vert  &  \leq\left[  d_{2,n}\right]  ^{1/r}\underset{n\rightarrow
\infty}{\rightarrow}0\label{proofS8}\\
\left\Vert \rho_{n}\ \right\Vert _{r}  &  =\left[  \int_{\Omega}\left\vert
\rho_{n}(x)\right\vert ^{r}dm(x)\right]  ^{1/r};~r\geq1 \label{proofS9}%
\end{align}
We have to obtain bounds on $G_{n}(r)$ on the interval $1\leq r\leq q_{1}$.
From Eq.(\ref{proofS8}) results that there exists $N$ sufficiently large such
that for all $n>N$ we have
\begin{equation}
A_{r}\leq\left\Vert \rho_{n}\ \right\Vert _{r}\leq B_{r} \label{proofS21}%
\end{equation}
where we denoted
\begin{align}
A_{r}  &  =\left\Vert \rho\right\Vert _{r}-d_{2,N}^{1/r}>0\label{proofS22}\\
B_{r}  &  =\left\Vert \rho\right\Vert _{r}+d_{2,N}^{1/r} \label{proofS223}%
\end{align}
Note that if $r\geq1$, $A_{r}\leq x\leq B_{r}$,~ $A_{r}\leq y\leq B_{r}$ \ we
have the following algebraic inequality
\begin{equation}
\left\vert x^{r}-y^{r}\right\vert \leq rB_{r}^{r-1}\left\vert x-y\right\vert
\label{proofS24}%
\end{equation}
By setting $x=\left\Vert \rho\right\Vert _{r}$ and $y=\left\Vert \rho
_{n}\right\Vert _{r}$ and by using Eqs.(\ref{proofS8}, \ref{proofS24}) and the
notation Eq.(\ref{Gn}), we obtain, for all $1\leq r\leq q_{1}$
\begin{equation}
\left\vert G_{n}(r)\right\vert \leq rB_{r}^{r-1}\left[  d_{2,n}\right]  ^{1/r}
\label{proofS5}%
\end{equation}
(Hint: use the mean value theorem for the function $x\rightarrow x^{r}$).
Combined with Eq.(\ref{proofS7}), we obtain the following bound in the domain
$p_{1}\leq r\leq q_{1}$
\begin{align}
\left\vert G_{n}(r)\right\vert  &  \leq\delta_{n}\label{proofS25}\\
\delta_{n}  &  =\max(d_{3,n},d_{1,n})\label{proofS26}\\
d_{3,n}  &  =\underset{1\leq r\leq q_{1}}{\max}rB_{r}^{r-1}\left[
d_{2,n}\right]  ^{1/r} \label{profS27}%
\end{align}
From Eq.(\ref{proofS7}, \ref{proofS8}, \ref{proofS5}-\ref{profS27}) the
following uniform convergence bound results
\begin{align}
\left\vert G_{n}(r)\right\vert  &  \leq\delta_{n}\underset{n\rightarrow\infty
}{\rightarrow}0\label{proofS29}\\
r  &  \in\Gamma_{0}:=[p_{1},q_{1}] \label{proofS30}%
\end{align}
Let $b:=q-p>0$ and denote
\begin{align*}
\Gamma_{0}  &  :=\{z|p_{1}\leq\operatorname{Re}z\leq q_{1},\operatorname{Im}%
z=0\}\\
\Gamma_{1}  &  :=\{z|z=p_{1}+it;\ 0\leq t\leq b\}\\
\Gamma_{2}  &  :=\{z|z=q_{1}+it;\ 0\leq t\leq b\}\\
\Gamma_{3}  &  :=\{z|p_{1}\leq\operatorname{Re}z\leq q_{1},\operatorname{Im}%
z=b\}\\
\Gamma_{m}  &  :=\Gamma_{1}\cup\Gamma_{2}\cup\Gamma_{3}%
\end{align*}
and define the domain $D:=\{z|p_{1}<\operatorname{Re}z<q_{1}%
,~\ 0<\operatorname{Im}z<b\}$ that is contained in the domain of holomorphy of
$G_{n}$. Now we prepare to use the previous extrapolation Theorem
\ref{markerTheoremExtrapol}. According to the Corollary
\ref{markeerCorolaryApendAnalitictyinStrip} $G_{n}(r)$ can be analitically
continued to the strip $\{z|p\leq\operatorname{Re}z\leq q\}$. From
Eqs.(\ref{sh2.2}, \ref{sh2.3}, \ref{LL4.3}) we get
\begin{equation}
\left\vert G_{n}(z)\right\vert \leq m=2A;\ z\in D\cup\Gamma_{0}\cup\Gamma_{m}
\label{proofS311}%
\end{equation}
Similarly to the proof of the Proposition
\ref{markerPropositionAnaliticityBound}, consider now a circle $C~$with the
center at $z=1$ and having radius $R=\min((1-p_{1})/2,~(q_{1}-1)/2)$. From
Eqs.(\ref{Gn}, \ref{sh4}) we obtain
\begin{align*}
S_{cl}[\rho]  &  =-\left[  \frac{d}{dz}F(z)\right]  _{z=1}=-\frac{1}{2\pi
i}\oint\limits_{C}\frac{F(w)dw}{(w-1)^{2}}\\
F(z)  &  :=\int_{\Omega}\left\vert \rho(x)\right\vert ^{z}dm(x)
\end{align*}
and finally results
\begin{equation}
S_{cl}[\rho_{n}]-S_{cl}[\rho]=-\frac{1}{2\pi i}\oint\limits_{C}\frac
{G_{n}(w)dw}{(w-1)^{2}} \label{proofS32}%
\end{equation}
It is clear that the convergence of the classical entropy is controlled by
\begin{equation}
\left\vert S_{cl}[\rho_{n}]-S_{cl}[\rho]\right\vert \leq\frac{1}{R}%
\underset{w\in C}{\max}\left\vert G_{n}(w)\right\vert \label{proofS33}%
\end{equation}
Now we use the Theorem \ref{markerTheoremExtrapol}, with the input data given
in Eqs.(\ref{proofS29}, \ref{proofS30}, \ref{proofS311}). Then for all $w$ in
the half circle in the upper complex half plane we have
\begin{equation}
\left\vert G_{n}(w)\right\vert \leq K\exp[\log(\delta_{n})~u_{0}%
(w)];\ |w-1|=R;~\operatorname{Im}(w)\geq0 \label{proofS34}%
\end{equation}
where the harmonic function $u_{0}(w)$ was defined by Eqs.(\ref{ae6.1},
\ref{ae6.2}), and $K$ is a constant. Since $G_{n}(w)$ is real in the real
axis, the previous Eq.(\ref{proofS34}) extends also in the lower semicircle.
Since in the interior of the domain $D$ the harmonic measure $u_{0}(w)$ is
strictly positive, we have $G_{n}(w)\rightarrow0$. From compactness of the
circle, the uniform convergence $\underset{w\in C}{\max}\left\vert
G_{n}(w)\right\vert \rightarrow0$ is ensured, and finally we obtain that
$\left\vert S_{cl}[\rho_{n}]-S_{cl}[\rho]\right\vert \underset{n\rightarrow
\infty}{\rightarrow}0$, which completes the proof.

\subsubsection{Logarithmic convexity properties and bounds in the case of many
variables.\label{markerSubsectLogCcnvManyVriables}}

Our definition of the log-convexity, in the case of many variables, differs
from the usual definition: In the generalization to many variables we shall
restrict our study to the case when a function of $N$ variables is log-convex
in one of variables while the remaining $N-1$ are fixed. Typical example of
interest is the nested integral that appear in the definition of the GRE. We
will study the function of many variables $\mathbf{w}=(w_{1,} w_{2}%
,...,w_{N})$, that is defined recurrently in Eqs.(\ref{LL20}-\ref{LL25}). We
set
\begin{equation}
g(w_{1,}w_{2},...,w_{N}):=N_{\mathbf{w},m}[\rho] \label{logconv81}%
\end{equation}
By using the previous Corollary \ref{markCorllaryApendLogConvNorm} it is clear
that for fixed $w_{2}^{(0)},...,w_{N}^{(0)}$, the function $w\rightarrow
g(w,w_{2}^{(0)},...,w_{N}^{(0)})$\ is log-convex. We have the following more
general result

\begin{theorem}
\label{markerTheoremAppLogConvNvar}For all $1\leq k\leq N$, for fixed
$w_{1}^{(0)}=q_{1},...,w_{k-1}^{(0)}=q_{k-1},w_{k+1}^{(0)}=q_{k+1}%
,...,w_{N}^{(0)}\ =q_{N}$, the function $w\rightarrow g(w_{1}^{(0)}%
,...,w_{k-1}^{(0)},w,w_{k+1}^{(0)},...,w_{N}^{(0)}):=h(w)$ is log-convex in
the variable $w$.
\end{theorem}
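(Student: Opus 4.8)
The plan is to exploit the nested structure of $g$ defined recursively in Eqs.(\ref{LL20}--\ref{LL25}), together with two elementary preservation properties of log-convexity, and to argue by a backward induction that peels off the integrations one layer at a time. The key observation is where the free variable $w$ enters when it occupies the $k$-th slot: the innermost integrations over $\Omega_{k+1},\dots,\Omega_N$ are carried out with the \emph{fixed} exponents $q_{k+1},\dots,q_N$, so the intermediate function $\rho_k'(x_1,\dots,x_k)$ produced after those integrations does not depend on $w$ at all. The variable $w$ appears for the first time in the integration over $\Omega_k$,
\[
\rho_{k-1}'(x_1,\dots,x_{k-1})=\int_{\Omega_k}\bigl[\rho_k'(x_1,\dots,x_k)\bigr]^{\,w}\,dm_k(x_k).
\]
For each fixed $(x_1,\dots,x_{k-1})$ this is exactly a functional of the form treated in Corollary \ref{markCorllaryApendLogConvNorm} (with $\phi=\rho_k'(x_1,\dots,x_{k-1},\cdot)$ regarded as a non-negative function on $\Omega_k$), hence $w\mapsto\rho_{k-1}'(x_1,\dots,x_{k-1})$ is log-convex. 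This is the base of the induction.

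Next I would run a backward induction on the layer index $j$, from $j=k-1$ down to $j=1$, using the recursion
\[
\rho_{j-1}'(x_1,\dots,x_{j-1})=\int_{\Omega_j}\bigl[\rho_j'(x_1,\dots,x_j)\bigr]^{\,q_j}\,dm_j(x_j),
\]
in which the exponent $q_j=w_j^{(0)}$ is fixed. The induction rests on two facts. First, raising to a positive power preserves log-convexity: if $w\mapsto\rho_j'$ is log-convex then $\log\bigl([\rho_j']^{q_j}\bigr)=q_j\log\rho_j'$ is convex because $q_j>0$, so $w\mapsto[\rho_j']^{q_j}$ is again log-convex. Second, integration preserves log-convexity, which is precisely the content of Proposition \ref{markPropApendBasiclog_conv}: since for each fixed $x_j$ the integrand $w\mapsto[\rho_j'(x_1,\dots,x_j)]^{q_j}$ is log-convex, so is $w\mapsto\rho_{j-1}'(x_1,\dots,x_{j-1})$. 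Iterating down to the outermost integral over $\Omega_1$ with exponent $q_1$ produces $g=N_{\mathbf{w},m}[\rho]$ and shows that $h(w)$ is log-convex, which is the claim. The boundary cases fit the same scheme: for $k=N$ the base step uses $\rho_N'=\rho$ directly, while for $k=1$ the function $\rho_1'$ is already $w$-independent and $h(w)=\int_{\Omega_1}[\rho_1']^{w}\,dm_1$ is log-convex immediately by Corollary \ref{markCorllaryApendLogConvNorm}, with no induction required.

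The step I expect to require the most care is the verification that Proposition \ref{markPropApendBasiclog_conv} may legitimately be invoked at each level, since that proposition presupposes integrability of the integrand and enough smoothness for the second-derivative criterion Eq.(\ref{logconv1}) to apply. Concretely, at every layer one must check that $\rho_j'$ is measurable and finite, that the power $[\rho_j']^{q_j}$ is $m_j$-integrable, and that the resulting dependence on $w$ is twice differentiable; alternatively one should argue directly from convexity of $\log\rho_{j-1}'$ in $w$ so as to bypass the differentiability hypothesis altogether. These finiteness and regularity conditions hold on the domain where the nested functional $N_{\mathbf{w},m}[\rho]$ is finite, e.g. on the interior of the hyper rectangle $D_N$ of subsection \ref{markerSubsectLogCcnvManyVriables}; restricting to that domain makes every application of Proposition \ref{markPropApendBasiclog_conv} rigorous and completes the argument.
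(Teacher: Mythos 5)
Your proof is correct and follows essentially the same route as the paper's: you isolate the layer where $w$ first enters, invoke Corollary \ref{markCorllaryApendLogConvNorm} there as the base case, and then peel off the outer integrations by backward induction using Proposition \ref{markPropApendBasiclog_conv} together with the fact that positive powers preserve log-convexity. Your added attention to the integrability and regularity hypotheses needed to apply Proposition \ref{markPropApendBasiclog_conv} at each layer is a point the paper passes over silently, but it does not change the argument.
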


\begin{proof}
The proof is by induction. We rewrite the part of interest from the reccurence
relations Eqs.(\ref{LL20}-\ref{LL25}) as follows
\begin{align}
\rho_{k}^{\prime}(x_{1},x_{2},...,x_{k})  &  :=\int\limits_{\Omega_{k}}\left[
\rho_{k+1\text{ }}^{\prime}(x_{1},...,x_{k+1})\right]  ^{q_{k+1}}%
dm_{k+1}(x_{k+1})\\
\rho_{k-1}^{\prime}(w,x_{1},x_{2},...,x_{k-1})  &  :=\int\limits_{\Omega_{k}%
}\left[  \rho_{k\text{ }}^{\prime}(x_{1},...,x_{k})\right]  ^{w}dm_{k}%
(x_{k})\label{logconv90}\\
\rho_{k-2}^{\prime}(w,x_{1},x_{2},...,x_{k-2})  &  :=\int\limits_{\Omega_{k}%
}\left[  \rho_{k\text{ }}^{\prime}(x_{1},...,x_{k})\right]  ^{q_{k-1}}%
dm_{k}(x_{k})\label{logconv100}\\
&  ...\\
\rho_{1\text{ }}^{\prime}(w,\mathbf{x}_{1})  &  :=\int\limits_{\Omega_{2}%
}\left[  \rho_{2\text{ }}^{\prime}(w,x_{1},x_{2})\right]  ^{q_{2}}dm_{2}%
(x_{2})\label{logconv110}\\
h(w)  &  :=\int\limits_{\Omega_{1}}\left[  \rho_{1\text{ }}^{\prime}%
(w,x_{1})\right]  ^{q_{1}}dm_{1}(x_{1}) \label{logconv120}%
\end{align}
From Corollary \ref{markCorllaryApendLogConvNorm} results that the function
$\rho_{k-1}^{\prime}(w,x_{1},x_{2},...,x_{k-1})$, given by Eq.(\ref{logconv90}%
), is log-convex in the variable $w$. By successive application of the
Proposition \ref{markPropApendBasiclog_conv} to Eqs.(\ref{logconv100}%
-\ref{logconv120}) and by observing that any positive power of a log convex
function is log-convex, we conclude by induction, successively, that the
functions $\rho_{k-2}^{\prime}(w,x_{1},x_{2},...,x_{k-2})\,$,...,
$\rho_{1\text{ }}^{\prime}(w,\mathbf{x}_{1})$, $h(w)$ are all log-convex in
the variable $w$.
\end{proof}

For the sake of clarity we consider first the case $N=2$. Suppose now that
$g(w_{1},w_{2})$ is log-convex in the variables $w_{1}$, $w_{2}$, in the
rectangular domain $D_{2}$
\begin{equation}
a_{1}^{(1)}\leq w_{1}\leq a_{1}^{(2)};~~a_{2}^{(1)}\leq w_{2}\leq a_{2}^{(2)}
\label{logconv130}%
\end{equation}
and we have the following bounds in the corner points of $D_{2}$
\begin{align}
g\left(  a_{1}^{(1)},a_{2}^{(1)}\right)   &  \leq A_{1,1}\label{logconv131}\\
g\left(  a_{1}^{(2)},a_{2}^{(1)}\right)   &  \leq A_{2,1}\label{logconv132}\\
g\left(  a_{1}^{(1)},a_{2}^{(2)}\right)   &  \leq A_{1,2}\label{logconv133}\\
g\left(  a_{1}^{(2)},a_{2}^{(2)}\right)   &  \leq A_{2,2} \label{logconv134}%
\end{align}
We denote
\begin{align}
\rho_{1}^{(1)}(w_{1})  &  =\frac{a_{1}^{(2)}-w_{1}}{a_{1}^{(2)}-a_{1}^{(1)}
}\label{logconv135}\\
\rho_{1}^{(2)}(w_{1})  &  =\frac{w_{1}-a_{1}^{(1)}}{a_{1}^{(2)}-a_{1}^{(1)}
}\label{logconv136}\\
\rho_{2}^{(1)}(w_{2})  &  =\frac{a_{2}^{(2)}-w_{2}}{a_{2}^{(2)}-a_{2}^{(1)}
}\label{logconv137}\\
\rho_{2}^{(2)}(w_{2})  &  =\frac{w_{2}-a_{2}^{(1)}}{a_{2}^{(2)}-a_{2}^{(1)}}
\label{logconv138}%
\end{align}

\begin{remark}
\label{markerRemarkRhoPozitivity}If $\mathbf{w}=\{w_{1},w_{2}\}$ is an
interior point of the rectangle $D_{2}$ then $0<\rho^{(i)}(w_{k})<1$, for all
$1\leq i\leq2$ and $1\leq k\leq2$
\end{remark}

By using successively the Proposition \ref{markPropLogConv1varBound}, we
define the following bound in the interior point with coordinates
$(w_{1},w_{2})$, in terms of values$_{{}}$ on the corner points of the
rectangle Eq.(\ref{logconv130}):
\begin{align}
\log~g(w_{1},w_{2})  &  \leq b_{2}(w_{1},w_{2}):=b_{2}(\mathbf{w}
)\label{logconv140}\\
b_{2}(\mathbf{w})  &  =P_{1,1}(\mathbf{w})\log A_{1,1}+\ P_{1,2}
(\mathbf{w})\log A_{1,2}\label{logconv141}\\
&  +P_{2,1}(\mathbf{w})\log A_{2,1}+P_{2,1}(\mathbf{w})\log A_{2,2}\\
P_{m_{1},m_{2}}(\mathbf{w})  &  =P_{m_{1},m_{2}}(w_{1},w_{2})\mathbf{=}
\rho_{1}^{(m_{1})}(w_{1})\rho_{2}^{(m_{2})}(w_{2});\ \ m_{1},m_{2}
=\overline{1,2} \label{logconv142}%
\end{align}
From the previous remark results $0<P_{m_{1},m_{2}}(\mathbf{w})<1$ if
$\mathbf{w}$ is an interior point of the rectangle $D_{2}$

From Eq.(\ref{logconv140}) we obtain the following generalization of
Proposition \ref{markPropLogConv1varBound}

\begin{proposition}
\label{markPropApendLogConvBoundedConvergence} Suppose that we have the
sequence of n functions $g_{n}(w_{1},w_{2})$ that are log convex (hence non
negative) in the domain Eq.(\ref{logconv130}) and on the $4$ corner points we
have the uniform bound
\begin{equation}
g_{n}(a_{1}^{(i)},a_{2}^{(j)})\leq A_{i,j};~i,j=\overline{1,2}
\label{logconv150}%
\end{equation}
Then in the all interior points of the rectangle Eq.(\ref{logconv130}) we have
the uniform bound
\[
g_{n}(w_{1},w_{2})\leq\exp b(w_{1},w_{2})
\]
with $b(w_{1},w_{2})$ given by Eq.(\ref{logconv141}). If in at least in one of
the corner points $(a_{1}^{(i_{0})},a_{2}^{(j_{0})})$ we have
\begin{equation}
g_{n}(a_{1}^{(i_{0})},a_{2}^{(j_{0})})\rightarrow0 \label{logconv160}%
\end{equation}
and in the rest of the corner points we have the uniform bound
Eq.(\ref{logconv150})\ then for all interior points $(w_{1},w_{2})$ we have
also
\begin{equation}
g_{n}(w_{1},w_{2})\rightarrow0 \label{logconv170}%
\end{equation}
for all \emph{interior} points in the rectangle defined by
Eq.(\ref{logconv130}).
\end{proposition}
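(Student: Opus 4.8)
The plan is to iterate the one-dimensional estimate of Proposition \ref{markPropLogConv1varBound}, exploiting the separate log-convexity in each variable (guaranteed in our application by Theorem \ref{markerTheoremAppLogConvNvar}). First I would freeze $w_1$ at each of its two extreme values and apply the single-variable bound in the $w_2$ direction. Since $w_2 \mapsto g_n(a_1^{(i)}, w_2)$ is log-convex on $[a_2^{(1)}, a_2^{(2)}]$ with endpoint bounds $A_{i,1}$ and $A_{i,2}$, Proposition \ref{markPropLogConv1varBound} gives
\[
g_n(a_1^{(i)}, w_2) \leq A_{i,1}^{\rho_2^{(1)}(w_2)} A_{i,2}^{\rho_2^{(2)}(w_2)}, \qquad i = 1,2,
\]
with the barycentric weights $\rho_2^{(j)}(w_2)$ of Eqs.(\ref{logconv137}, \ref{logconv138}).

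Next I would apply the same bound in the $w_1$ direction. For fixed $w_2$ the map $w_1 \mapsto g_n(w_1, w_2)$ is log-convex on $[a_1^{(1)}, a_1^{(2)}]$, and at the two endpoints we already possess the upper bounds just derived. As Proposition \ref{markPropLogConv1varBound} requires only upper bounds at the endpoints rather than exact values, it applies verbatim and yields
\[
g_n(w_1, w_2) \leq \left[g_n(a_1^{(1)}, w_2)\right]^{\rho_1^{(1)}(w_1)} \left[g_n(a_1^{(2)}, w_2)\right]^{\rho_1^{(2)}(w_1)}.
\]
Substituting the edge bounds and collecting exponents reproduces the product $\prod_{m_1,m_2} A_{m_1,m_2}^{P_{m_1,m_2}(\mathbf{w})}$ with $P_{m_1,m_2}(\mathbf{w}) = \rho_1^{(m_1)}(w_1)\,\rho_2^{(m_2)}(w_2)$ as in Eq.(\ref{logconv142}); taking logarithms delivers the claimed bound $\log g_n(w_1,w_2) \leq b_2(w_1,w_2)$ of Eq.(\ref{logconv141}).

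For the convergence statement I would retain the corner values exactly rather than replacing them by fixed bounds. Writing $c_{m_1,m_2}^{(n)} := g_n(a_1^{(m_1)}, a_2^{(m_2)})$, the identical double application produces the sharp product estimate
\[
g_n(w_1, w_2) \leq \prod_{m_1,m_2} \left[c_{m_1,m_2}^{(n)}\right]^{P_{m_1,m_2}(\mathbf{w})}.
\]
By Remark \ref{markerRemarkRhoPozitivity} every exponent satisfies $0 < P_{m_1,m_2}(\mathbf{w}) < 1$ at an interior point; in particular the factor carrying the vanishing corner $(i_0,j_0)$ has a strictly positive exponent, so $[c_{i_0,j_0}^{(n)}]^{P_{i_0,j_0}(\mathbf{w})} \to 0$, while the remaining three factors stay bounded by the fixed powers $A_{m_1,m_2}^{P_{m_1,m_2}(\mathbf{w})}$. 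Hence the product tends to $0$, which is Eq.(\ref{logconv170}).

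The bulk of the work is the routine algebra of composing the two exponent families into the single weight $P_{m_1,m_2}$. The one point demanding care — and the only genuine subtlety I anticipate — is that the second (in $w_1$) application of Proposition \ref{markPropLogConv1varBound} feeds it endpoint inputs that are themselves upper bounds produced by the first application; this is legitimate precisely because that proposition is phrased with the inequalities $g(a)\leq A$, $g(b)\leq B$ at the endpoints. The strict positivity of the weights, supplied by Remark \ref{markerRemarkRhoPozitivity}, is exactly what promotes the uniform bound into genuine convergence in the second part.
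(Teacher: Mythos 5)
Your proof is correct and takes essentially the same route as the paper, which likewise obtains the interior bound by successive application of the one-variable Proposition \ref{markPropLogConv1varBound} (first in $w_2$ along the edges, then in $w_1$), arriving at the weighted product bound of Eqs.(\ref{logconv140}--\ref{logconv142}), and deduces convergence from the strict positivity of the weights $P_{m_1,m_2}(\mathbf{w})$ at interior points (Remark \ref{markerRemarkRhoPozitivity}). Your explicit remark that the second one-variable application legitimately accepts endpoint \emph{upper bounds} rather than exact values is a detail the paper leaves implicit, but the argument is the same.
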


Now we extend the previous result for arbitrary number of variables. The
relations that follows in the particular case $N=2$ reduces to
Eqs.(\ref{logconv130}-\ref{logconv170}). Suppose that $g(w_{1},w_{2}%
,...,w_{N})$ is log-convex in the variables $w_{1},w_{2},...,w_{N}$, in the
$N$ dimensional hyper rectangle domain $D_{N}\subset\mathbb{R}^{N}$
\begin{equation}
a_{k}^{(1)}\leq w_{k}\leq a_{k}^{(2)}~~~;~k=\overline{1,N} \label{logconv180}%
\end{equation}
Suppose that in the $2^{N}$ corner points of $D_{N}$ (the set of vertices
$V_{N}$ of the hyper-rectangle ) we have the bounds
\begin{equation}
g(a_{1}^{(v_{1})},a_{2}^{(v_{2})},...,a_{N}^{(v_{N})})\leq A_{v_{1}%
,v_{2},...,v_{N}};~~v_{j}=\overline{1,2};~~j=\overline{1,N} \label{logconv190}%
\end{equation}
or denoting $v:=(v_{1},...,v_{N})\in\{1,2\}^{N}$, the set of vertices $V_{N}$
are represented as follows:\ $(a_{1}^{(v_{1})},a_{2}^{(v_{2})},...,a_{N}%
^{(v_{N})}):=\mathbf{a}^{(\mathbf{v})}\in V_{N}$. In analogy with
Eqs.(\ref{logconv135}-\ref{logconv138}) we introduce the following notations
\begin{align}
\rho_{k}^{(1)}(w_{k})  &  =\frac{a_{k}^{(2)}-w_{k}}{a_{k}^{(2)}-a_{k}^{(1)}%
}~~;~k=\overline{1,N}\label{logconv200}\\
\rho_{k}^{(2)}(w_{k})  &  =\frac{w_{k}-a_{k}^{(1)}}{a_{k}^{(2)}-a_{k}^{(1)}%
}~~;~k=\overline{1,N} \label{logconv210}%
\end{align}
To obtain a more compact notation for the generalization of
Eqs.(\ref{logconv140}-\ref{logconv142}), we use the following new notations
for the set of constants $A_{v_{1},v_{2},...,v_{N}}$
\begin{equation}
A_{\mathbf{v}}^{\prime}:=A_{v_{1},v_{2},...,v_{N}}~~v_{j}=\overline
{1,2};~~j=\overline{1,N} \label{1000}%
\end{equation}
\ \ and define the function $P:V_{N}\times D_{N}\rightarrow\mathbb{R}$ as
follows \
\begin{align}
P(\mathbf{v},\mathbf{w})  &  :=\prod\limits_{k=1}^{N}\rho_{k}^{(v_{k})}%
(w_{k});~~v_{j}=\overline{1,2};~~j=\overline{1,N}\label{1100}\\
v  &  :=(v_{1},...,v_{N})\in\{1,2\}^{N}\label{1200}\\
\mathbf{w}  &  \mathbf{=}(w_{1},...,w_{N}\mathbf{)} \label{1300}%
\end{align}
By Remark \ref{markerRemarkRhoPozitivity} we have
\begin{equation}
0<P(\mathbf{v},\mathbf{w})<1;~\mathbf{w}\in Int(D_{N});~\mathbf{v\in V}_{N}
\label{1350}%
\end{equation}
where the set of interior points $Int(D_{N})\subset D_{N}$ is defined by
$a_{k}^{(1)}<w_{k}<a_{k}^{(2)}~~~;~k=\overline{1,N}$. By using successively
the Proposition \ref{markPropLogConv1varBound}, with the notations
Eqs.(\ref{1000}-\ref{1300}) \ the following bounds results

\begin{corollary}
\label{markerCorolaryLogconvBoundNvar}If the function $\mathbf{w}\rightarrow
g(\mathbf{w})$ is log-convex, then\ under previous conditions
Eqs.(\ref{logconv190}-\ref{logconv210}) we have the bound in the
hyper-rectangle $\ D_{N}$
\begin{equation}
\log~g(w_{1},w_{2},...,w_{N})\leq b_{N}(\mathbf{w}) \label{logconv220.1}%
\end{equation}
where we denote
\begin{equation}
b_{N}(\mathbf{w}):=\sum\limits_{\mathbf{v}\in V_{N}}P(\mathbf{v}%
,\mathbf{w})\log A_{\mathbf{v}}^{\prime}%
\end{equation}

\end{corollary}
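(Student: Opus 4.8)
The plan is to prove the bound by peeling off one coordinate at a time and invoking the single–variable interpolation estimate of Proposition~\ref{markPropLogConv1varBound}, organizing the reduction as an induction on the number of variables $N$. The base case $N=1$ is exactly Proposition~\ref{markPropLogConv1varBound} rewritten in logarithmic form: with $[a,b]=[a_1^{(1)},a_1^{(2)}]$ one observes that the weights $\rho_1^{(1)}(w_1)$ and $\rho_1^{(2)}(w_1)$ defined in Eqs.(\ref{logconv200},\ref{logconv210}) are precisely the barycentric coefficients $\tfrac{b-w}{b-a}$ and $\tfrac{w-a}{b-a}$ appearing there, so that $\log g(w_1)\le \rho_1^{(1)}(w_1)\log A'_{1}+\rho_1^{(2)}(w_1)\log A'_{2}=b_1(\mathbf{w})$.

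For the inductive step I would first freeze $w_2,\dots,w_N$ and apply the one-variable bound in $w_1$. Since $g$ is log-convex separately in each variable, the restriction $w_1\mapsto g(w_1,w_2,\dots,w_N)$ is log-convex on $[a_1^{(1)},a_1^{(2)}]$, and Proposition~\ref{markPropLogConv1varBound} yields
\[
\log g(w_1,w_2,\dots,w_N)\le \rho_1^{(1)}(w_1)\,\log g(a_1^{(1)},w_2,\dots,w_N)+\rho_1^{(2)}(w_1)\,\log g(a_1^{(2)},w_2,\dots,w_N).
\]
Each of the two functions $g(a_1^{(i)},\cdot,\dots,\cdot)$ is a log-convex function of the remaining $N-1$ variables on the face of $D_N$ where $w_1=a_1^{(i)}$, and the $2^{N-1}$ vertices of that face are vertices of $D_N$. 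By the inductive hypothesis each is therefore bounded by the corresponding product-weighted sum over those vertices. The crucial point is that $\rho_1^{(1)}(w_1),\rho_1^{(2)}(w_1)\ge 0$ for $w_1\in[a_1^{(1)},a_1^{(2)}]$ (immediate from Eqs.(\ref{logconv200},\ref{logconv210}); strictly positive in the interior by Eq.(\ref{1350})), so these nonnegative factors may be carried through the inductive bounds without reversing any inequality. Collecting terms, the factor $\rho_1^{(v_1)}(w_1)$ multiplies each inherited product, producing exactly the full product weights $P(\mathbf{v},\mathbf{w})=\prod_{k=1}^N\rho_k^{(v_k)}(w_k)$ of Eq.(\ref{1100}), and hence $\log g(\mathbf{w})\le \sum_{\mathbf{v}\in V_N}P(\mathbf{v},\mathbf{w})\log A'_{\mathbf{v}}=b_N(\mathbf{w})$, which is the claim.

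The main subtlety — and the reason separate log-convexity, rather than joint convexity, is the correct hypothesis — is verifying that at every stage of the reduction the function of the single free variable is still log-convex: after fixing $w_1$ at a vertex value, the map $w_2\mapsto g(a_1^{(i)},w_2,w_3,\dots,w_N)$ must itself be log-convex in $w_2$, and so on down the chain. This holds term by term by the definition of (separate) log-convexity used above, and for the specific functional $g(\mathbf{w})=N_{\mathbf{w},m}[\rho]$ it is exactly the content of Theorem~\ref{markerTheoremAppLogConvNvar}. Beyond this bookkeeping I expect no analytic obstacle: the argument is a purely combinatorial iteration of one convex-interpolation inequality, with the nonnegativity of the barycentric weights being the only inequality-preserving property that must be tracked throughout.
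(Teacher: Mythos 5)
Your proof is correct and follows essentially the same route as the paper: the paper's own argument (both for $N=2$ in Eqs.(\ref{logconv140}--\ref{logconv142}) and for general $N$) is precisely the "successive" application of the one-variable interpolation bound of Proposition~\ref{markPropLogConv1varBound}, peeling off one coordinate at a time, which you have merely organized explicitly as an induction on $N$. Your attention to the nonnegativity of the barycentric weights and to the fact that freezing a variable at a vertex value preserves separate log-convexity fills in exactly the bookkeeping the paper leaves implicit.
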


\bigskip

\begin{remark}
It can be proven that $\sum\limits_{\mathbf{v}\in V_{N}}P(\mathbf{v}
,\mathbf{w})=1$
\end{remark}

\end{document}